\title{A Spatial-Sign based Direct Approach for High Dimensional Sparse Quadratic Discriminant Analysis}
\shorttitle{SSQDA}
\author{Anqing Shen and Long Feng}
\affil{School of Statistics and Data Science, KLMDASR, LEBPS, and LPMC, Nankai University}
\date{Date: \today}
\begin{document}

\maketitle
\thispagestyle{firststyle}

\begin{abstract}
In this paper, we study the problem of high-dimensional sparse quadratic discriminant analysis (QDA). We propose a novel classification method, termed SSQDA, which is constructed via constrained convex optimization based on the sample spatial median and spatial sign covariance matrix under the assumption of an elliptically symmetric distribution. The proposed classifier is shown to achieve the optimal convergence rate over a broad class of parameter spaces, up to a logarithmic factor. Extensive simulation studies and real data applications demonstrate that SSQDA is both robust and efficient, particularly in the presence of heavy-tailed distributions, highlighting its practical advantages in high-dimensional classification tasks.

\keywords{High dimensional data, Quadratic discriminant analysis, spatial-sign}
\end{abstract}


\section{Introduction}

Discriminant analysis plays an important role in real-world applications, such as face recognition \citep{Ju2019Probabilistic}, business forecasting \cite{Inam2018Forecasting} and gene expression analysis \citep{Jombart2010Discriminant,Kochan2019qtQDA}. The quadratic discriminant classification (QDA) rule (given by (\ref{eq:QDA rule})) was first proposed as the Bayesian rule for two multivariate normal distributions ($N_p({\boldsymbol{\mu}_1}, {\mathbf{\Sigma}_1}), N_p({\boldsymbol{\mu}_2}, {\mathbf{\Sigma}_2})$) with different covariance:
\begin{align}\label{eq:QDA rule}
Q(\boldsymbol{z}) = &(\boldsymbol{z}-{\boldsymbol{\mu}_1})^T({\mathbf{\Sigma}_2}^{-1}-{\mathbf{\Sigma}_1}^{-1})(\boldsymbol{z}-{\boldsymbol{\mu}_1})-2({\boldsymbol{\mu}_2}-{\boldsymbol{\mu}_1})^T{\mathbf{\Sigma}_2}^{-1}(\boldsymbol{z}-\frac{\boldsymbol{\mu}_1+{\boldsymbol{\mu}_2}}{2})-\log\left(\frac{|{\mathbf{\Sigma}_1}|}{|{\mathbf{\Sigma}_2}|}\right).
\end{align}
Owing to its flexibility and ease of use, QDA criterion has been extensively adopted across diverse domains for classification tasks. 

The emergence of big data has broght high-dimensional data to the forefront, playing an increasingly vital role in fields such as genomics and economics. In high dimensional setting, where the dimension $p$ can be much larger than the sample size $n$ ,the conventional QDA which plug in the sample mean and sample covariance as the estimator may not be feasible. Therefore, numberous studies have been looking into high-dimensional QDA. \cite{cai2021convex} demonstrates that without imposing any structural assumptions on the parameters, consistent classification is unattainable when $p\neq o(n)$ .  \cite{li2015sparse} propose SQDA by imposing sparse assumption on ${\boldsymbol{\mu}_2}-{\boldsymbol{\mu}_1}, {\mathbf{\Sigma}_1}$ and ${\mathbf{\Sigma}_2}$ and establish corresponding asymptotic optimality. \cite{jiang2018direct} observe that ${\mathbf{D}} ={{\mathbf{\Omega}_2}}-{{\mathbf{\Omega}_1}}$  and $\boldsymbol{\delta} = ({{\mathbf{\Omega}_1}}+{{\mathbf{\Omega}_2}})({\boldsymbol{\mu}_1}-{\boldsymbol{\mu}_2})$ can be respectively interpreted as quadratic and linear terms measuring the difference between two classes, playing a more critical role in the QDA criterion. Consequently, they impose sparsity assumptions on the quantities, and directly estimate these two terms by solving an optimization problem with $\ell_1$ penalty. A similar approach is reflected in  \cite{cai2021convex}, where sparsity assumptions are imposed on ${\mathbf{D}}={{\mathbf{\Omega}_2}}-{{\mathbf{\Omega}_1}}$ and ${\boldsymbol{\beta}}={{\mathbf{\Omega}_2}}({\boldsymbol{\mu}_1}-{\boldsymbol{\mu}_2})$, which are estimated through $\ell_1$-norm minimization with an $\ell_\infty$ constrain. The optimazation problem builds upon sample mean and sample covariance. This spirit was first introduced in \cite{cai2011constrained} for precision matrix estimation and has been proven to achieve faster convergence rate when the population distribution has polynomial tails, compared to the $\ell_1$-MLE approach.

However, all the discussions above are based on normal distributions, with relatively limited research on quadratic discriminant analysis for heavy-tailed distributions like multivariate t-distribution. In low-dimensional settings, \cite{bose2015generalized} extended the QDA criterion to elliptically symmetric distributions by introducing an adjustment coefficient to $\log\frac{|{\mathbf{\Sigma}_1}|}{|{\mathbf{\Sigma}_2}|}$. Building upon \cite{bose2015generalized}, \cite{ghosh2021robust} further improved the estimators to enhance the QDA criterion's robustness against outliers in the sample data. 

In the context of elliptical distributions, spatial-sign-based methods have demonstrated notable robustness and efficiency, particularly in high-dimensional settings. These procedures have been successfully applied to a variety of statistical problems. For hypothesis testing, spatial-sign-based approaches have been used in high-dimensional sphericity testing \citep{zou2014multivariate}, location parameter testing \citep{wang2015high, feng2016, feng2016multivariate}, and white noise testing \citep{Paindaveine2016On, zhao2023spatial}. In financial applications, \cite{liu2023high} proposed a high-dimensional alpha test for linear factor pricing models. Recently, \cite{feng2024spatial} developed a spatial-sign-based method for high-dimensional principal component analysis. For covariance matrix estimation under elliptical distributions, \cite{raninen2021linear}, \cite{raninen2021bias}, and \cite{ollila2022regularized} proposed a series of linear shrinkage estimators based on spatial-sign covariance matrices. In addition, \cite{lu2025robust} introduced a spatial-sign-based approach for estimating the inverse of the shape matrix, with applications to elliptical graphical models and sparse linear discriminant analysis. Collectively, these works underscore the spatial-sign method’s robustness and efficiency in dealing with heavy-tailed distributions across a broad spectrum of high-dimensional statistical challenges.

In this paper, we present SSQDA (Spatial-Sign based Sparse Quadratic Discriminant Analysis), an innovative approach to solve quadratic classification problems involving high-dimensional data from elliptically symmetric populations. Follow the spirits of \cite{cai2021convex}, we first estimate ${\mathbf{D}}$ and ${\boldsymbol{\beta}}$ directly using $\ell_1$-norm minimization with an $\ell_\infty$ constrain based on sample spatial median and sample spatial covariance. The estimation of  log-determinant of the covariance matrices term in (\ref{eq:QDA rule}) also follows the same procedure of \cite{cai2021convex}. It is worth noting that while sample spatial covariance proves to be a robust estimator of the shape matrix, the differing scales of ${\mathbf{\Sigma}_1}$ and ${\mathbf{\Sigma}_2}$ necessitate additional estimation of the covariance matrix's trace in SSQDA. Subsequently, we evaluate the impacts of the spatial-sign covariance, spatial median as well as the trace-estimator and establish the convergence rate of the misclassification error for SSQDA under elliptically symmetric distributions. To thoroughly evaluate the performance of SSQDA, we conduct comprehensive simulation studies and real-data analyses. The results show that SSQDA outperforms other competitors, especially in high-dimensional and elliptical settings. We also extend the methodology of SSQDA from two groups classification to multi-group setting. Our research provides, to the best of our knowledge, the first systematic extension of QDA to high-dimensional elliptical distributions accompanied by explicit convergence rate.

The rest of the paper is organized as follows. In Section \ref{sec:classification}, we presents the detail of SSQDA. Section \ref{sec:main theorems} gives the convergence rate of misclassification error of SSQDA under certain conditions. Simulation studies and real data studies are carried out in Section \ref{sec:simulation} and Section \ref{sec:real data}. The proof of the some lemmas and main theorems are provided in Section \ref{sec:proof}.

\subsection{Notations}
We begin with basic notations and definitions. To begin with, $\mathbbm{1}\{A\}$ denote the indicator function for an event $A$ . Let $\boldsymbol X$ be any random vector or random variable, ${\boldsymbol{X}_i}$ be the corresponding i.i.d. copy of $\boldsymbol X$. For a vector $\boldsymbol{u}$, $\|\boldsymbol{u}\|_1, \|\boldsymbol{u}\|_2, \|\boldsymbol{u}\|_\infty$ denotes the $\ell_1, \ell_2, \ell_\infty$ norm respectively. For a matrix $\mathbf{\mathbf{M}}=(m_{ij})_{p\times q}$ , the entry-wise maximum norm is defined by $\|\mathbf{M}\|_{\text{max}}=\text{max}_{1\leq i\leq p, 1\leq j\leq q}m_{ij}$ . And $\|\mathbf{M}\|_F, \|\mathbf{M}\|_2, \|\mathbf{M}\|_1$ denote the Frobenius norm, spectral norm and matrix $l_1$ norm. The number of non-zero entries is denoted by $\|\boldsymbol{u}\|_0, \|\mathbf{M}\|_0$. The restricted spectral norm is defined by $\|\mathbf{M}\|_{2, s}=\sup_{\|\boldsymbol{u}\|_2=1, \|\boldsymbol{u}\|_0\leq s}\|\mathbf{M}\boldsymbol{u}\|_2$ . We define the trace of $\mathbf{M}$ by $\text{tr}(\mathbf{M})=\sum_{i=1}^p m_{ii}$ and $|\mathbf{M}|$ is the determinant of $\mathbf{M}$. When $\mathbf{M}$ is a symmetric matrix with dimensions $p\times p$, let $\lambda_i(\mathbf{M})$ be the $i$ th eigenvalue of $\mathbf{M}$ with $\lambda_p(\mathbf{M})\leq\cdots \leq \lambda_1(\mathbf{M})$ .We denote $a\land b := \min\{a,b\}$ and $a\lor b := \max\{a,b\}$. For two sequences with positive entries $\{a_n\}, \{b_n\}$ , we have $a_n \asymp b_n$ if there exists positive constants $c_1, c_2$ , so that $c_1a_n \leq b_n c_2a_n$. We write $a_n \lesssim b_n$ if there exists constant $c$ , so that $a_n\leq c b_n$ for all $n$. The spatial sign function is defined as $U(\boldsymbol{X})=\frac{\boldsymbol{X}}{\|\boldsymbol{X}\|_2}\cdot\mathbbm{1}\{\boldsymbol{X}\neq 0\}$. Lastly, $\Gamma_1(s;k) =\{\boldsymbol{u} \in \mathbb{R}^p : \|\boldsymbol{u}\|_2 = 1, \|\boldsymbol{u}_{S^c}\|_1 \leq \|\boldsymbol{u}_S\|_1, \text{ for some } S \subset \{1, \cdots k\} \text{ with } |\mathbf{S}_1| = s\}$.

\section{Spatial sign based quadratic discriminant analysis in sparse setting}
\label{sec:classification}
Assuming two $p$ dimensional normal distributions $N_p({\boldsymbol{\mu}_1}, {\mathbf{\Sigma}_1})$ (noted by class 1) and $N({\boldsymbol{\mu}_2}, {\mathbf{\Sigma}_2})$ (noted by class 2), with different covariance matrices, Quadratic Discriminant Analysis (QDA) rule is widely used to classify a new sample into one of these populations. Given equal priority probabilities, the QDA rule given by:
\begin{align}
\label{eq:oracle QDA}
    Q(\boldsymbol{z})=(\boldsymbol{z} - {\boldsymbol{\mu}_1})^T\mathbf{D}(\boldsymbol{z} - {\boldsymbol{\mu}_1}) - 2 \boldsymbol{\delta}^T {{\mathbf{\Omega}_2}}(\boldsymbol{z} - \bar{\boldsymbol{\mu}}) - \log\left(\frac{|{\mathbf{\Sigma}_1}|}{|{\mathbf{\Sigma}_2}|}\right),
\end{align}
where ${\mathbf{D}}={\mathbf{\Sigma}_2}^{-1}-{\mathbf{\Sigma}_1}^{-1}:= {{\mathbf{\Omega}_2}}- {{\mathbf{\Omega}_1}}, \boldsymbol{\delta}={\boldsymbol{\mu}_2}-{\boldsymbol{\mu}_1}, \bar{\boldsymbol{\mu}}=\frac{{\boldsymbol{\mu}_1}+{\boldsymbol{\mu}_2}}{2}$. 

As the Bayesian discriminant method for normal distribution, QDA achieves the lowest misclassification probability when all the parameters ${\boldsymbol{\mu}_1}, {\boldsymbol{\mu}_2}, {\mathbf{\Sigma}_1}, {\mathbf{\Sigma}_2}, \pi_1, \pi_2$ are known. However, in most cases all the above parameters are unknown. Instead, two sets of training samples, ${\boldsymbol{X}_1},\cdots ,\boldsymbol{X}_{n_1}\overset{i.i.d.}{\sim}N_p({\boldsymbol{\mu}_1}, {\mathbf{\Sigma}_1}), \,{{\boldsymbol{Y}_1}}, {\boldsymbol{Y}_2}, \cdots \boldsymbol{Y}_{n_2}\overset{i.i.d.}{\sim}N_p({\boldsymbol{\mu}_2}, {\mathbf{\Sigma}_2})$ are given. A common approach is to estimate mean and covariance matrix by sample mean $\hat{\boldsymbol{\mu}}_1 = \frac{1}{n_1}\sum_{i=1}^{n_1}{\boldsymbol{X}_i}$, $\hat{\boldsymbol{\mu}}_2 = \frac{1}{n_2}\sum_{i=1}^{n_2}{\boldsymbol{Y}_i}$ and sample covariance matrix ${\hat{\mathbf{\Sigma}}_1}=\frac{1}{n_1-1}\sum_{i=1}^{n_1}\left({\boldsymbol{X}_i}-\hat{\boldsymbol{\mu}}_1\right)\left({\boldsymbol{X}_i}-\hat{\boldsymbol{\mu}}_1\right)^T$, ${\hat{\mathbf{\Sigma}}_2}=\frac{1}{n_2-1}\sum_{i=1}^{n_2}\left({\boldsymbol{Y}_i}-\hat{\boldsymbol{\mu}}_2\right)\left({\boldsymbol{Y}_i}-\hat{\boldsymbol{\mu}}_2\right)^T$, respectively, and plug the estimators into the QDA rules (\ref{eq:oracle QDA}). 

For high dimensional data, where the dimension $p$ is larger than the sample sizes $n_1, n_2$ , ${\hat{\mathbf{\Sigma}}_1}$ and $\hat{\mathbf{\Sigma}}_2$ are not invertible, which renders the direct plug-in method infeasible. Therefore, numerous quadratic discriminant analysis methods designed for high-dimensional data have been proposed. The method(SDAR) proposed in \cite{cai2021convex} estimates ${\mathbf{D}}$ and ${\boldsymbol{\beta}} =  {{\mathbf{\Omega}_2}}\boldsymbol{\delta}$ in (\ref{eq:oracle QDA}) directly by solving the constrained $\ell_1$ minimization problem below: 
\begin{align}
\label{eq:SDARD}
    \hat{{\mathbf{D}}}& = \arg \min_{{\mathbf{D}} \in \mathbb{R}^{p \times p}} \left\{ \|\text{Vec}({\mathbf{D}})\|_1 : \left\| \frac{1}{2} {\hat{\mathbf{\Sigma}}_1} {\mathbf{D}} \hat{\mathbf{\Sigma}}_2 + \frac{1}{2} \hat{\mathbf{\Sigma}}_2 {\mathbf{D}} {\hat{\mathbf{\Sigma}}_1} - {\hat{\mathbf{\Sigma}}_1} + \hat{\mathbf{\Sigma}}_2 \right\|_{max} \leq \lambda_{1,n} \right\},\\
\label{eq:SDARbeta}
    \hat{{\boldsymbol{\beta}}}&= \arg \min_{{\boldsymbol{\beta}} \in \mathbb{R}^p} \left\{ \|{\boldsymbol{\beta}}\|_1 : \left\| \hat{\mathbf{\Sigma}}_2 {\boldsymbol{\beta}} - \hat{\boldsymbol{\mu}}_2 + \hat{\boldsymbol{\mu}}_1 \right\|_\infty \leq \lambda_{2,n} \right\}.
\end{align}

While SDAR method achieves a significant reduction in classification error rates in high-dimensional normal settings compare to conventional QDA, it performs poorly for heavy-tailed distributions like the multivariate t-distribution. In this paper, we focus on the setting that the two classes $\boldsymbol X$ and $Y$ are generated from the elliptical distribution, that is $\boldsymbol X\sim EC_p({\boldsymbol{\mu}_1}, {\mathbf{\Sigma}_1}, r),\, Y \sim EC_P({\boldsymbol{\mu}_2}, {\mathbf{\Sigma}_2}, r)$, i.e.
\begin{align*}
    \boldsymbol X={\boldsymbol{\mu}_1}+r\mathbf{\Gamma}_1 \boldsymbol{u_1}, \,Y={\boldsymbol{\mu}_2}+r{\mathbf{\Gamma}_2}\boldsymbol{u_2},
\end{align*}
where ${\boldsymbol{u}}_i, \, i \in\{1, 2\}$ is uniformly distributed on the sphere $\mathbb{S}^{p-1}$ and $r$ is a scalar random variable with $E(r^2)=p$ and is independent with $\boldsymbol{u}$. If ${\mathbf{\Sigma}_1}=\text{Cov}(\boldsymbol X), {\mathbf{\Sigma}_2}=\text{Cov}(\boldsymbol Y)$ exist, we have ${\mathbf{\Sigma}_1} = {\mathbf{\Gamma}_1}{\mathbf{\Gamma}_1}^T, {\mathbf{\Sigma}_2}={\mathbf{\Gamma}_2}{\mathbf{\Gamma}_2}^T$ . We denote the precision matrix by $ {\mathbf{\Omega}_i}={\mathbf{\Sigma}}_i^{-1}$. In this case, we use the sample spatial median and spatial-sign covariance matrix to estimate mean and covariance matrix to improve robustness in in heavy-tailed setting.

The sample spatial median is defined as $$
\tilde{\boldsymbol{\mu}}_1=\arg\min_{\boldsymbol{u} \in \mathbb{R}^p} \sum_{i=1}^{n_1}||{{\boldsymbol{X}_i}}-\boldsymbol{\mu}||_2,\,\tilde{\boldsymbol{\mu}}_2=\arg\min_{\boldsymbol{u} \in \mathbb{R}^p} \sum_{i=1}^{n_2}||{{\boldsymbol{Y}_i}}-\boldsymbol{\mu}||_2.
$$
The sample spatial sign covariance matrix is defined as
$$
\tilde{\mathbf{S}}_{1} = \frac{1}{n_{1}} \sum_{i=1}^{n_{1}} U(X_{i} - \tilde{\boldsymbol{\mu}}_{1}) U(X_{i} - \tilde{\boldsymbol{\mu}}_{1})^{T},\,\tilde{\mathbf{S}}_{2} = \frac{1}{n_{2}} \sum_{i=1}^{n_{2}} U(Y_{i} - \tilde{\boldsymbol{\mu}}_{2}) U(Y_{i} - \tilde{\boldsymbol{\mu}}_{2})^{T}.
$$
\cite{lu2025robust} shows that $p\tilde{\mathbf{S}}_i$ is a reliable estimator the shape matrix ${\mathbf{\Lambda}_i} := \frac{p}{\text{tr}({\mathbf{\Sigma}_i})}{\mathbf{\Sigma}_i}$ . Therefore, we estimate ${\mathbf{\Sigma}_i}$ by $\tilde{\mathbf{\Sigma}}_i=\widetilde{\text{tr}({\mathbf{\Sigma}_i})}\tilde{\mathbf{S}}_i$.  Similar to \cite{10.1214/09-AOS716}, $\widetilde{\text{tr}({\mathbf{\Sigma}_i})}$ is defined as
\begin{align}
    \label{eq:trace estimator}
    \widetilde{\text{tr}({\mathbf{\Sigma}_1})}=\frac{\sum_{i\neq j\neq k}({{\boldsymbol{X}_i}}-{{\boldsymbol{X}_j}})^T({{\boldsymbol{X}_k}}-{{\boldsymbol{X}_j}})}{n_1(n_1-1)(n_1-2)},\,\widetilde{\text{tr}({\mathbf{\Sigma}_2})}=\frac{\sum_{i\neq j\neq k}({{\boldsymbol{Y}_i}}-{{\boldsymbol{Y}_j}})^T({{\boldsymbol{Y}_k}}-{{\boldsymbol{Y}_j}})}{n_2(n_2-1)(n_2-2)}.
\end{align}
The consistency of the estimators will be discussed later. We replace the sample mean and sample covariance matrix in (\ref{eq:SDARD}) , (\ref{eq:SDARbeta}) and estimate ${\mathbf{D}}= {{\mathbf{\Omega}_2}}- {{\mathbf{\Omega}_1}}$ directly by solving the optimization problem:
\begin{equation}
\label{eq:SSQDAD}
    \tilde{\mathbf{{D}}} \in \arg\min_{{\mathbf{D}} \in \mathbb{R}^{p \times p}} \left\{ \|\text{Vec}({\mathbf{D}})\|_1 : \left\| \frac{1}{2} {\tilde{\mathbf{\Sigma}}}_1 {\mathbf{D}} {\tilde{\mathbf{\Sigma}}_2} + \frac{1}{2} {\tilde{\mathbf{\Sigma}}_2} {\mathbf{D}} {\tilde{\mathbf{\Sigma}}}_1 -{\tilde{\mathbf{\Sigma}}}_1 + {\tilde{\mathbf{\Sigma}}_2} \right\|_{max} \leq \lambda_{1,n} \right\},
\end{equation}
where $\lambda_{1,n} = c_1\sqrt{s_1}\left(\sqrt{\frac{1}{p}}+\sqrt{\frac{\log{p}}{n}}\right)$ with some positive constant $c_1$. Similarly, ${\boldsymbol{\beta}}= {{\mathbf{\Omega}_2}}\boldsymbol{\delta}$ is estimated by solving the optimization problem below:
\begin{equation}
\label{eq:SSQDAbeta}
    \tilde{{\boldsymbol{\beta}}} \in \arg\min_{{\boldsymbol{\beta}} \in \mathbb{R}^p} \left\{ \|{\boldsymbol{\beta}}\|_1 : \|{\tilde{\mathbf{\Sigma}}_2} {\boldsymbol{\beta}} - \tilde{\boldsymbol{\mu}}_2 + \tilde{\boldsymbol{\mu}}_1\|_{\infty} \leq \lambda_{2,n} \right\},
\end{equation}
where $\lambda_{2,n} = c_2\sqrt{s_2}\left(\sqrt{\frac{1}{p}}+\sqrt{\frac{\log{p}}{n}}\right)$ with $c_2 > 0$. Given the estimators above, we propose the quadratic classification rule(SSQDA)
\begin{align}
\label{eq:tilde Q}
  \widetilde{Q}(\boldsymbol{z})=(\boldsymbol{z} - \tilde{\boldsymbol{\mu}}_1)^T\tilde {\mathbf{D}}(\boldsymbol{z} - \tilde{\boldsymbol{\mu}}_1) - 2 \tilde{{\boldsymbol{\beta}}}^T(\boldsymbol{z} - \boldsymbol{\tilde{\bar{\mu}}}) - \log(|\tilde{\mathbf{{D}}}{\tilde{\mathbf{\Sigma}}}_1 + \mathbf{I}_p|) , 
\end{align}
where $\tilde{\bar{\boldsymbol{\mu}}}=\frac{\tilde{\boldsymbol{\mu}}_1 +\tilde{\boldsymbol{\mu}}_2}{2}$ . The corresponding classification rules for a new sample $\boldsymbol{z}$ are as follows
$$
G_{\widetilde{Q}}=
\begin{cases}
1:\,\widetilde{Q}(\boldsymbol{z}) > 0, \\
2:\,\widetilde{Q}(\boldsymbol{z}) \leq 0.
\end{cases}
$$

Next sections will illustrate the excellent properties of SSQDA both theoretically and numerically.

\section{Theoretical results}
\label{sec:main theorems}

In this section, we first establish the convergence rate of the estimators $\tilde{\mathbf{{D}}}, \tilde{{\boldsymbol{\beta}}}$ proposed in (\ref{eq:SSQDAD}) , (\ref{eq:SSQDAbeta}) and subsequently demonstrate that the  classification error $R(G_{\tilde Q})$ converges to $R(G_Q)$ at a specific rate. We consider the following assumptions.

\begin{assumption}
\label{ass: sparsity}
(The assumption of sparsity) $\exists s_1, s_2 \geq 0, s.t.$
$\|\text{Vec}({\mathbf{D}})\|_0\leq s_1,\|{\boldsymbol{\beta}}\|_0\leq s_2$.
\end{assumption}

\begin{assumption}
\label{ass:boundforD and beta}
(The bound of differential graph ${\mathbf{D}}$ and discriminating direction ${\boldsymbol{\beta}}$) $\exists M_0 >0, s.t.$ $\|{\mathbf{D}}\|_F, \|{\boldsymbol{\beta}}\|_2\leq M_0$.
\end{assumption}

\begin{assumption}
\label{ass:V0}
Let $\mathbf{V}_0={\mathbf{\Lambda}_i}^{-1}, i=1~\text{or} ~2$. $\exists T > 0, \ 0 \leq q < 1, \ s_0(p) > 0, \text{s.t.}$
\begin{enumerate}
    \item $\|\mathbf{V}_0\|_{L_1} \leq T$.
    \item $\max_{1 \leq i \leq p} \sum_{j=1}^{p} |v_{ij}|^q \leq s_0(p)$.
\end{enumerate}
\end{assumption}

\begin{assumption}
\label{ass:bound of covariance}
(The bound of covariance matrix)$\exists M_1, M_2 > 0, s.t.$
    \begin{enumerate}
        \item $M_1^{-1}\leq \lambda_{p}\left({\mathbf{\Sigma}_i}\right)\leq\lambda_{1}\left({\mathbf{\Sigma}_i}\right)\leq M_1.$
        \item $\|{\mathbf{\Sigma}_i}\|_{\text{max}}\leq M_2.$
    \end{enumerate}
\end{assumption}

\begin{assumption}
\label{ass: trace}
(The order of the trace of covariance matrix)
$\text{tr}\left({\mathbf{\Sigma}_i}\right)\asymp p$
The assumption can be derived from \ref{ass:bound of covariance} directly.
\end{assumption}

\begin{assumption}
\label{ass: XYnorm}
Let $\zeta_k = \mathbb{E}(\xi_i^{-k}), \quad \xi_i = \|{{\boldsymbol{X}_i}} - \boldsymbol{\mu}\|_2, \quad \nu_i = \zeta_1^{-1} \xi_i^{-1}.$
\begin{enumerate}
    \item $\zeta_k \zeta_1^{-k} < \zeta \in (0, \infty)$ for $k = 1, 2, 3, 4 \cdots p$.
    \item $\limsup_p \|\mathbf{S}\|_{2} < 1 - \psi < 1$ for some $\psi >0$.
    \item $\nu_i$ is sub-gaussian distributed, i.e. $\|\nu_i\|_{\psi_2} \leq K_\nu < \infty$.
\end{enumerate}
The same assumption also applies to the random vector $Y$.
\end{assumption}

\begin{assumption}
\label{ass: scalar random variable}
(Assumptions on scalar random variable)
    \begin{enumerate}
        \item $\text{Var}(r^2)\lesssim p\sqrt{p}.$
        \item $\text{Var}(r)\lesssim \sqrt{p}.$
    \end{enumerate}
\end{assumption}

\begin{assumption}
\label{ass: oracle QDA}
(Assumptions on the density function of  the oracle QDA rule)
    $\sup_{|x| < \boldsymbol{\delta}} f_{Q,\theta}(x) < M_{2}$ Where $f_{Q,\theta}$ be the density function of $Q(\boldsymbol{z})$ when the parameter takes the value $\theta$.
\end{assumption}

Assumption \ref{ass: sparsity} assume sparsity on the differential graph ${\mathbf{D}}$ and discriminant ${\boldsymbol{\beta}}$ which is commonly adopted in the study of high-dimensional data. As is shown in Theorem 2.1 and 2.2 in \cite{cai2021convex}, without sparsity assumption, no data-driven method is able to mimic oracle QDA in high-dimensional setting. Assumption \ref{ass:V0} to \ref{ass: XYnorm} are general in high-dimensional spatial-sign based studies, such as \cite{feng2024spatial} and \cite{lu2025robust}. The assumptions guarantee the consistency of the spatial sign median and sample spatial sign covariance. Assumption \ref{ass: scalar random variable} imposes restrictions on the tail probabilities of $\boldsymbol X$ and $\boldsymbol Y$, ensuring that the tail probabilities of $\boldsymbol{z}^TD\boldsymbol{z}+{\boldsymbol{\beta}}^T\boldsymbol{z}$ in $Q(\boldsymbol{z})$ do not deviate significantly from those of a normal distribution. The last assumption bounded the density function of $Q(\boldsymbol{z})$ , and is consistent with the parameter space in \cite{cai2021convex} .

Based on the assumptions, we are able to establish the convergence rate of the estimators $\tilde{\mathbf{{D}}}, \tilde{{\boldsymbol{\beta}}}$ to the real parameter ${\mathbf{D}}$ and ${\boldsymbol{\beta}}$. This theorem lays a foundation to the consistency of classification error.

\begin{theorem}
\label{thm:estimators}Consider assumption \ref{ass: sparsity}, \ref{ass:boundforD and beta}, \ref{ass:bound of covariance}, \ref{ass: trace}, and \ref{ass: XYnorm}, and assume that $n_1\asymp n_2, n=\min\{n_1, n_2\}, s_1+s_2\lesssim \frac{1}{ K_{n,p}}$ , where $K_{n,p}=\left(\sqrt{\frac{1}{p}}+\sqrt{\frac{\log{p}}{n}}\right)$. Let $\lambda_{1,n}=c_1\sqrt{s_1}\left(\sqrt{\frac{1}{p}}+\sqrt{\frac{\log{p}}{n}}\right),\,\lambda_{2,n}=c_2\sqrt{s_2}\left(\sqrt{\frac{1}{p}}+\sqrt{\frac{\log{p}}{n}}\right)$ where $c_1, c_2$ being large enough constant. Then with probability over $1-O\left(\frac{1}{\log p}\right)$,
\begin{align}
\label{eq:rateD}
    \|{\mathbf{D}}-\tilde{\mathbf{D}}\|_F\lesssim s_1\left(\sqrt{\frac{1}{p}}+\sqrt{\frac{\log{p}}{n}}\right),\\
\label{eq:ratebeta}
    \|{\boldsymbol{\beta}}-\tilde{{\boldsymbol{\beta}}}\|_2\lesssim s_2\left(\sqrt{\frac{1}{p}}+\sqrt{\frac{\log{p}}{n}}\right).
\end{align}
\end{theorem}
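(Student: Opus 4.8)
The plan is to treat (\ref{eq:SSQDAD}) and (\ref{eq:SSQDAbeta}) as Dantzig-type constrained $\ell_1$ minimizers and run a feasibility-plus-restricted-curvature argument, with spatial-sign concentration supplying the stochastic input. First I would record the population identities that make the true parameters exactly feasible at the population level. Since $\mathbf{D}=\mathbf{\Sigma}_2^{-1}-\mathbf{\Sigma}_1^{-1}$ one computes $\mathbf{\Sigma}_1\mathbf{D}\mathbf{\Sigma}_2=\mathbf{\Sigma}_2\mathbf{D}\mathbf{\Sigma}_1=\mathbf{\Sigma}_1-\mathbf{\Sigma}_2$, so that $\frac{1}{2}\mathbf{\Sigma}_1\mathbf{D}\mathbf{\Sigma}_2+\frac{1}{2}\mathbf{\Sigma}_2\mathbf{D}\mathbf{\Sigma}_1-\mathbf{\Sigma}_1+\mathbf{\Sigma}_2=\mathbf{0}$, and likewise $\mathbf{\Sigma}_2\boldsymbol{\beta}-\boldsymbol{\mu}_2+\boldsymbol{\mu}_1=\boldsymbol{\delta}-\boldsymbol{\delta}=\mathbf{0}$ because $\boldsymbol{\beta}=\mathbf{\Omega}_2\boldsymbol{\delta}$. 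To move to the sample constraint I would replace each $\mathbf{\Sigma}_i$ by $\tilde{\mathbf{\Sigma}}_i$ and each $\boldsymbol{\mu}_i$ by $\tilde{\boldsymbol{\mu}}_i$ and expand the residual as a telescoping sum of single-factor replacements, e.g. $\tilde{\mathbf{\Sigma}}_1\mathbf{D}\tilde{\mathbf{\Sigma}}_2-\mathbf{\Sigma}_1\mathbf{D}\mathbf{\Sigma}_2=(\tilde{\mathbf{\Sigma}}_1-\mathbf{\Sigma}_1)\mathbf{D}\tilde{\mathbf{\Sigma}}_2+\mathbf{\Sigma}_1\mathbf{D}(\tilde{\mathbf{\Sigma}}_2-\mathbf{\Sigma}_2)$. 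Each such term is bounded in $\|\cdot\|_{\max}$ by a Cauchy--Schwarz argument over the $\le s_1$ nonzero entries of $\mathbf{D}$: using $\|\mathbf{D}\|_F\le M_0$ (Assumption \ref{ass:boundforD and beta}), $\|\mathbf{\Sigma}_i\|_{\max}\le M_2$ (Assumption \ref{ass:bound of covariance}), and $\|\tilde{\mathbf{\Sigma}}_i-\mathbf{\Sigma}_i\|_{\max}\lesssim K_{n,p}$, one obtains a bound of order $\sqrt{s_1}K_{n,p}$, which is exactly $\lambda_{1,n}$ up to the constant $c_1$; the same device with $\|\boldsymbol{\beta}\|_2\le M_0$ gives feasibility of $\boldsymbol{\beta}$ at level $\sqrt{s_2}K_{n,p}\asymp\lambda_{2,n}$. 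This is what forces the $\sqrt{s}$ scaling in the thresholds.

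The technical heart, which I expect to be the principal obstacle, is establishing the concentration bounds that feed the above, namely that with probability at least $1-O(1/\log p)$ one has $\|\tilde{\mathbf{\Sigma}}_i-\mathbf{\Sigma}_i\|_{\max}\lesssim K_{n,p}$ and $\|\tilde{\boldsymbol{\mu}}_i-\boldsymbol{\mu}_i\|_\infty\lesssim\sqrt{\log p/n}$. Since $\tilde{\mathbf{\Sigma}}_i=\widetilde{\mathrm{tr}(\mathbf{\Sigma}_i)}\,\tilde{\mathbf{S}}_i$ while $\mathbf{\Sigma}_i=p^{-1}\mathrm{tr}(\mathbf{\Sigma}_i)\mathbf{\Lambda}_i$, the error splits into the bias of $p\tilde{\mathbf{S}}_i$ as an estimator of the shape matrix $\mathbf{\Lambda}_i$, the sampling fluctuation of $\tilde{\mathbf{S}}_i$, and the relative error of the trace estimator (\ref{eq:trace estimator}). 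The $\sqrt{1/p}$ contribution to $K_{n,p}$ is intrinsic and comes from the first and third pieces (the shape-matrix and trace approximations), whereas the $\sqrt{\log p/n}$ term is the usual entrywise sampling error, controlled after a union bound over the $p^2$ entries by the sub-Gaussian and moment conditions of Assumption \ref{ass: XYnorm} together with the eigenvalue bounds of Assumption \ref{ass:bound of covariance}. Concretely, I would import the spatial-median and spatial-sign-covariance concentration results of \cite{feng2024spatial} and \cite{lu2025robust} and supplement them with a $U$-statistic analysis of (\ref{eq:trace estimator}); the $1/\log p$ failure probability tracks the weakest of these tail bounds.

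Granting feasibility, the remaining part is deterministic convex geometry. On the feasibility event, minimality in (\ref{eq:SSQDAD}) gives $\|\mathrm{Vec}(\tilde{\mathbf{D}})\|_1\le\|\mathrm{Vec}(\mathbf{D})\|_1$; writing $\boldsymbol{\Delta}=\tilde{\mathbf{D}}-\mathbf{D}$ and $S$ for the support of $\mathbf{D}$ ($|S|\le s_1$), the usual splitting yields the cone condition $\|\boldsymbol{\Delta}_{S^c}\|_1\le\|\boldsymbol{\Delta}_S\|_1$ and hence $\|\boldsymbol{\Delta}\|_1\le 2\|\boldsymbol{\Delta}_S\|_1\le 2\sqrt{s_1}\|\boldsymbol{\Delta}\|_F$. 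Because both $\tilde{\mathbf{D}}$ and $\mathbf{D}$ satisfy the $\|\cdot\|_{\max}\le\lambda_{1,n}$ constraint, the triangle inequality gives $\|\mathcal{L}(\boldsymbol{\Delta})\|_{\max}\le 2\lambda_{1,n}$ for the residual operator $\mathcal{L}(\mathbf{M})=\frac{1}{2}\tilde{\mathbf{\Sigma}}_1\mathbf{M}\tilde{\mathbf{\Sigma}}_2+\frac{1}{2}\tilde{\mathbf{\Sigma}}_2\mathbf{M}\tilde{\mathbf{\Sigma}}_1$.

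The last and most delicate deterministic step is a restricted-curvature bound $\langle\boldsymbol{\Delta},\mathcal{L}(\boldsymbol{\Delta})\rangle\ge c_0\|\boldsymbol{\Delta}\|_F^2$ valid for $\boldsymbol{\Delta}$ in this cone, which is needed precisely because the rank-deficiency of $\tilde{\mathbf{\Sigma}}_i$ when $p>n$ destroys the unrestricted minimum eigenvalue. Vectorizing, $\langle\boldsymbol{\Delta},\mathcal{L}(\boldsymbol{\Delta})\rangle=\langle\mathrm{Vec}(\boldsymbol{\Delta}),\mathbf{A}\,\mathrm{Vec}(\boldsymbol{\Delta})\rangle$ with $\mathbf{A}=\frac{1}{2}(\tilde{\mathbf{\Sigma}}_2\otimes\tilde{\mathbf{\Sigma}}_1+\tilde{\mathbf{\Sigma}}_1\otimes\tilde{\mathbf{\Sigma}}_2)$; splitting $\mathbf{A}=\mathbf{A}_0+(\mathbf{A}-\mathbf{A}_0)$ into its population analogue plus a perturbation, the population part obeys $\langle\mathrm{Vec}(\boldsymbol{\Delta}),\mathbf{A}_0\mathrm{Vec}(\boldsymbol{\Delta})\rangle\ge\lambda_{\min}(\mathbf{A}_0)\|\boldsymbol{\Delta}\|_F^2\ge M_1^{-2}\|\boldsymbol{\Delta}\|_F^2$ by Assumption \ref{ass:bound of covariance}, while the perturbation is controlled over the cone by $|\langle\mathrm{Vec}(\boldsymbol{\Delta}),(\mathbf{A}-\mathbf{A}_0)\mathrm{Vec}(\boldsymbol{\Delta})\rangle|\lesssim\|\boldsymbol{\Delta}\|_1^2\max_i\|\tilde{\mathbf{\Sigma}}_i-\mathbf{\Sigma}_i\|_{\max}\lesssim s_1K_{n,p}\|\boldsymbol{\Delta}\|_F^2$, which the sparsity condition $s_1+s_2\lesssim K_{n,p}^{-1}$ (with small enough implicit constant) makes at most $\frac{1}{2}M_1^{-2}\|\boldsymbol{\Delta}\|_F^2$, leaving $c_0=\frac{1}{2}M_1^{-2}$. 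Combining, $c_0\|\boldsymbol{\Delta}\|_F^2\le\langle\boldsymbol{\Delta},\mathcal{L}(\boldsymbol{\Delta})\rangle\le\|\boldsymbol{\Delta}\|_1\|\mathcal{L}(\boldsymbol{\Delta})\|_{\max}\le 4\sqrt{s_1}\,\lambda_{1,n}\|\boldsymbol{\Delta}\|_F$, so $\|\boldsymbol{\Delta}\|_F\lesssim\sqrt{s_1}\lambda_{1,n}\asymp s_1K_{n,p}$, which is (\ref{eq:rateD}). The bound (\ref{eq:ratebeta}) follows the identical template with the single matrix $\tilde{\mathbf{\Sigma}}_2$ in place of the Kronecker sum: the cone condition on $\mathbf{v}=\tilde{\boldsymbol{\beta}}-\boldsymbol{\beta}$, the constraint $\|\tilde{\mathbf{\Sigma}}_2\mathbf{v}\|_\infty\le 2\lambda_{2,n}$, and $\langle\mathbf{v},\tilde{\mathbf{\Sigma}}_2\mathbf{v}\rangle\ge c_0'\|\mathbf{v}\|_2^2$ on the cone together yield $\|\mathbf{v}\|_2\lesssim\sqrt{s_2}\lambda_{2,n}\asymp s_2K_{n,p}$.
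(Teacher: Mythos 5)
Your proposal is correct and follows essentially the same route as the paper's proof: the population identities $\frac{1}{2}\mathbf{\Sigma}_1\mathbf{D}\mathbf{\Sigma}_2+\frac{1}{2}\mathbf{\Sigma}_2\mathbf{D}\mathbf{\Sigma}_1=\mathbf{\Sigma}_1-\mathbf{\Sigma}_2$ and $\mathbf{\Sigma}_2\boldsymbol{\beta}=\boldsymbol{\delta}$ giving feasibility of the truth at level $\sqrt{s}K_{n,p}$ (the paper's Lemma \ref{lem: Dfeasible}, stated via $\mathbf{V}\mathrm{Vec}(\mathbf{D})=\boldsymbol{v_\Sigma}$), the cone condition and $\|\boldsymbol{\Delta}\|_1\le 2\sqrt{s_1}\|\boldsymbol{\Delta}\|_F$ (Lemma \ref{lem:sdar8.6} and (\ref{eq:norm D})), the same stochastic inputs (Lemmas \ref{lem:diffS}--\ref{lem:diff pS Sigma}, i.e.\ the spatial-sign results of \cite{lu2025robust} plus the $U$-statistic/Chebyshev analysis of the trace estimator), and the same absorption of an $s_1K_{n,p}\|\boldsymbol{\Delta}\|_F^2$ term via $s_1+s_2\lesssim K_{n,p}^{-1}$. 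The only difference is bookkeeping in the last step—you anchor the quadratic form at the sample Gram matrix (restricted curvature by perturbation, plus double feasibility giving $\|\mathcal{L}(\boldsymbol{\Delta})\|_{\max}\le 2\lambda_{1,n}$), whereas the paper anchors at the population matrix $\mathbf{V}$ and pushes $\tilde{\mathbf{V}}-\mathbf{V}$ into a three-term H\"older decomposition—an equivalent reorganization yielding the same rates.
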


The theorem is proved in Section \ref{subsec: proof of 1}. The convergence rate in (\ref{eq:rateD}) and (\ref{eq:ratebeta}) mainly come from two parts, the estimation error $\|p\tilde{\mathbf{S}}_i - {\mathbf{\Lambda}_i}\|_\infty$ and $|\widetilde{\text{tr}({\mathbf{\Sigma}_i})}-\text{tr}({\mathbf{\Sigma}_i})|$. \cite{lu2025robust} proved the estimation error of the shape matrix, and the conclusion is mentioned in Lemma \ref{lem:diffS} , Lemma \ref{lem:diffSandmu}. The estimation error of trace is proved by Lemma \ref{lem:consistency of trace}. Based on theorem \ref{thm:estimators}, we turn to the consistency of misclassification rate which is defined by
$$
R(G) = \mathbb{E}\left[\mathbbm{1}\left\{G(\boldsymbol{z})\neq L(\boldsymbol{z})\right\}\right],
$$
where $G(\cdot):\mathbb{R}^p\rightarrow \{1,2\}$ be some classification rule, and $L(\boldsymbol{z})\in \{1,2\}$ be the actual label of the sample $\boldsymbol{z}$. Let $R(G_Q), R(G_{\tilde Q})$ denote the classification error of the oracle QDA in (\ref{eq:oracle QDA}) with known parameters and SSQDA, respectively.

\begin{theorem}
\label{thm:classification error}
Under all the assumptions as Theorem \ref{thm:estimators}, and assume that $n_1\asymp n_2, n=\min\{n_1, n_2\}, s_1+s_2 \lesssim \frac{1}{\log n K_{n,p}}$ , where $K_{n,p}=\left(\sqrt{\frac{1}{p}}+\sqrt{\frac{\log{p}}{n}}\right)$,  we have $$
\begin{aligned}
    \mathbb{E}\left[R(G_{\widetilde{Q}})-R(G_Q)\right]\lesssim \frac{1}{\log p}+(s_1+s_2)\log n\left(\sqrt{\frac{1}{p}}+\sqrt{\frac{\log{p}}{n}}\right).
\end{aligned}
$$
\end{theorem}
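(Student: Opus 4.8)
The plan is to bound the excess risk by an anti-concentration probability for the oracle score $Q$ and then control the perturbation $\Delta(\boldsymbol{z}) := \widetilde{Q}(\boldsymbol{z}) - Q(\boldsymbol{z})$. Throughout I condition on the training sample, so that $\tilde{\mathbf{D}}, \tilde{\boldsymbol{\beta}}, \tilde{\boldsymbol{\mu}}_i, \tilde{\mathbf{\Sigma}}_i$ are fixed and $\Delta$ is a function of the fresh observation $\boldsymbol{z}$ alone; the outer expectation over the training data is taken at the very end. Writing $R(G_{\widetilde{Q}})-R(G_Q)$ with equal priors and recalling $G_Q(\boldsymbol{z})=1 \Leftrightarrow Q(\boldsymbol{z})>0$, each of the two error terms differs only on the event where the two rules disagree, i.e.\ where $Q$ and $\widetilde{Q}=Q+\Delta$ have opposite signs. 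Since $\{Q>0\ge Q+\Delta\}\subset\{0<Q\le|\Delta|\}$ and symmetrically for the other sign, I obtain
\[
R(G_{\widetilde{Q}})-R(G_Q)\ \lesssim\ \mathbb{P}_{\boldsymbol{z}}\big(|Q(\boldsymbol{z})|\le |\Delta(\boldsymbol{z})|\big),
\]
so the whole problem reduces to showing that $Q$ rarely falls within distance $|\Delta|$ of its decision boundary.

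Next I would expand $\Delta$ into its quadratic, linear and log-determinant pieces. Using $\boldsymbol{\beta}=\mathbf{\Omega}_2\boldsymbol{\delta}$ and the identity $|\mathbf{\Sigma}_1|/|\mathbf{\Sigma}_2|=|\mathbf{D}\mathbf{\Sigma}_1+\mathbf{I}_p|$, the oracle log-term matches the plug-in term $\log|\tilde{\mathbf{D}}\tilde{\mathbf{\Sigma}}_1+\mathbf{I}_p|$ up to estimation error. The key observation is a cancellation: for $\boldsymbol{z}$ from class $1$ the mean of the quadratic perturbation $(\boldsymbol{z}-\tilde{\boldsymbol{\mu}}_1)^T\tilde{\mathbf{D}}(\boldsymbol{z}-\tilde{\boldsymbol{\mu}}_1)-(\boldsymbol{z}-\boldsymbol{\mu}_1)^T\mathbf{D}(\boldsymbol{z}-\boldsymbol{\mu}_1)$ equals, to leading order, $\mathrm{tr}\big((\tilde{\mathbf{D}}-\mathbf{D})\mathbf{\Sigma}_1\big)$, which is precisely the first-order term in the expansion of the log-determinant difference $\log|\tilde{\mathbf{D}}\tilde{\mathbf{\Sigma}}_1+\mathbf{I}_p|-\log|\mathbf{D}\mathbf{\Sigma}_1+\mathbf{I}_p|$. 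Hence $\mathbb{E}_{\boldsymbol{z}}[\Delta]$ is only of second order in the estimation errors, and the residual fluctuations of $\Delta$ are governed directly by the rates of Theorem \ref{thm:estimators}, namely $\|\tilde{\mathbf{D}}-\mathbf{D}\|_F\lesssim s_1 K_{n,p}$, $\|\tilde{\boldsymbol{\beta}}-\boldsymbol{\beta}\|_2\lesssim s_2 K_{n,p}$, together with the spatial-median and trace errors already controlled in the supporting lemmas.

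To convert the rate on $\Delta$ into the anti-concentration bound I would truncate. For a level $\tau>0$,
\[
\mathbb{P}_{\boldsymbol{z}}\big(|Q|\le|\Delta|\big)\ \le\ \mathbb{P}_{\boldsymbol{z}}\big(|Q|\le\tau\big)+\mathbb{P}_{\boldsymbol{z}}\big(|\Delta|>\tau\big)\ \le\ 2M_2\tau+\mathbb{P}_{\boldsymbol{z}}\big(|\Delta|>\tau\big),
\]
where the first inequality uses the boundedness of the density of $Q$ near the origin (Assumption \ref{ass: oracle QDA}). For the tail term I use the elliptical representation $\boldsymbol{z}-\boldsymbol{\mu}_i=r\mathbf{\Gamma}_i\boldsymbol{u}$: the quadratic form separates into a radial factor $r^2$ and an angular factor $\boldsymbol{u}^T\mathbf{B}\boldsymbol{u}$ that concentrates about $\mathrm{tr}(\mathbf{B})/p$ with deviations of order $\|\mathbf{B}\|_F/p$, while $r^2$ concentrates about $p$ under the variance controls of Assumption \ref{ass: scalar random variable} and the sub-Gaussian control of Assumption \ref{ass: XYnorm}. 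This yields a sub-exponential tail for $\Delta$ about its mean, so that choosing $\tau\asymp (s_1+s_2)K_{n,p}\log n$ renders $\mathbb{P}_{\boldsymbol{z}}(|\Delta|>\tau)$ negligible; the $\log n$ factor in the final rate is exactly the price of this truncation.

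Finally I would remove the conditioning. On the event of Theorem \ref{thm:estimators}, which holds with probability $1-O(1/\log p)$, the above gives $R(G_{\widetilde{Q}})-R(G_Q)\lesssim (s_1+s_2)K_{n,p}\log n$; on its complement I bound the excess risk trivially by $1$, contributing $O(1/\log p)$ after taking expectation over the training data. Adding the two pieces yields the claimed bound, and the strengthened sparsity condition $s_1+s_2\lesssim 1/(\log n\,K_{n,p})$ guarantees that the leading term stays bounded. I expect the main obstacle to be the concentration of the random quadratic form $\Delta(\boldsymbol{z})$ under only the heavy-tailed moment assumptions on $r$: a naive bound $|\mathrm{tr}((\tilde{\mathbf{D}}-\mathbf{D})\mathbf{\Sigma}_1)|\le\|\tilde{\mathbf{D}}-\mathbf{D}\|_F\|\mathbf{\Sigma}_1\|_F$ loses a factor $\sqrt{p}$, so the cancellation against the log-determinant term must be tracked carefully, and the radial/angular tail estimates must be combined without an exponential moment on $r$.
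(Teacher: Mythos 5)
Your proposal follows essentially the same route as the paper's proof: the same reduction of the excess risk to an anti-concentration event $\{0<Q\le |\Delta|\}$, the same key cancellation of $\mathrm{tr}((\tilde{\mathbf{D}}-\mathbf{D})\mathbf{\Sigma}_1)$ between the quadratic term and the first-order expansion of the log-determinant difference, the same radial/angular decomposition $\boldsymbol{z}-\boldsymbol{\mu}_1=r\|\boldsymbol{x}\|^{-1}\mathbf{\Gamma}_1\boldsymbol{x}$ with Bernstein/Chebyshev tails, the same truncation at level $\tau\asymp(s_1+s_2)K_{n,p}\log n$ combined with the density bound of Assumption \ref{ass: oracle QDA}, and the same $O(1/\log p)$ contribution from the complement of the event in Theorem \ref{thm:estimators}. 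The plan is correct and, once the tail estimates are written out in detail, reproduces the paper's argument.
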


The convergence rate of the classification error in Theorem \ref{thm:classification error} comprises two components. The first term primarily stems from trace estimation. According to the results in Lemma \ref{lem:consistency of trace}, the trace estimator exhibits polynomial-type tail concentration. The second term mainly arises from estimation error in the spatial-sign-based process. This term achieves a slower convergence rate than that established in Theorem 4.2 in \cite{cai2011constrained} , principally because elliptical distributions generally possess heavier tails than their Gaussian counterparts. This represents an inherent trade-off when extending QDA rule to elliptical symmetric distributions. The proof is in Section \ref{subsec:proof of 2}. 

Next, we will show that we can also obtain similar convergence rate as \cite{cai2011constrained} for multivariate normal distribution.

\begin{theorem}
    \label{thm:in gausssetting} Suppose $\boldsymbol{X}$ and $\boldsymbol{Y}$ are all generated from multivariate normal distribution and the other assumptions in Theorem \ref{thm:estimators} also hold. Then,
\begin{align}
\label{eq:guass rate}
    \mathbb{E}\left[R(G_{\widetilde{Q}})-R(G_Q)\right]\lesssim& (s_1+s_2)^2\log^2n\left(\sqrt{\frac{\log p}{n}}+\sqrt{\frac{1}{p}}\right)^2.
\end{align}

\end{theorem}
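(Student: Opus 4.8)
The plan is to exploit a feature available only in the Gaussian case: a direct computation shows that $Q(\boldsymbol z)=2\log\!\big(f_1(\boldsymbol z)/f_2(\boldsymbol z)\big)$, so with equal priors the oracle rule $G_Q$ is exactly the Bayes classifier and the posterior is $\eta(\boldsymbol z):=\mathbb{P}(L=1\mid\boldsymbol z)=(1+e^{-Q(\boldsymbol z)/2})^{-1}$. This identity fails for a general elliptical generator --- there the Bayes rule is a nonlinear function of the two Mahalanobis distances rather than the quadratic $Q$ --- which is precisely why Theorem \ref{thm:classification error} can only use the crude disagreement probability and incurs a linear rate. The entire gain in (\ref{eq:guass rate}) comes from replacing that crude bound by the exact excess-risk (margin) identity.

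First I would record, writing $\Delta(\boldsymbol z):=\widetilde Q(\boldsymbol z)-Q(\boldsymbol z)$, the standard identity for the Bayes comparison,
\[ R(G_{\widetilde Q})-R(G_Q)=\mathbb{E}_{\boldsymbol z}\big[\,|1-2\eta(\boldsymbol z)|\,\mathbbm{1}\{\mathrm{sign}\,\widetilde Q(\boldsymbol z)\neq\mathrm{sign}\,Q(\boldsymbol z)\}\,\big], \]
whose nonnegativity hinges on $\mathrm{sign}(1-2\eta)=-\mathrm{sign}\,Q$, valid only because $G_Q$ is Bayes. Since $1-2\eta(\boldsymbol z)=-\tanh\!\big(Q(\boldsymbol z)/4\big)$ we have $|1-2\eta|\le|Q|/4$, and on the disagreement event $|Q(\boldsymbol z)|\le|\Delta(\boldsymbol z)|$; hence
\[ R(G_{\widetilde Q})-R(G_Q)\le\tfrac14\,\mathbb{E}_{\boldsymbol z}\big[\,|\Delta(\boldsymbol z)|\,\mathbbm{1}\{|Q(\boldsymbol z)|\le|\Delta(\boldsymbol z)|\}\,\big]. \]
This is the \emph{double smallness} that produces the square: one factor of the estimation error enters through the integrand $|\Delta|$ and a second through the Lebesgue measure of the band $\{|Q|\le|\Delta|\}$, which the density bound of Assumption \ref{ass: oracle QDA} controls.

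Next, conditioning on the training sample, I would reuse the bound $|\Delta(\boldsymbol z)|\lesssim\epsilon_n:=(s_1+s_2)\log n\,K_{n,p}$ that is already established in the proof of Theorem \ref{thm:classification error} (it combines the rates of Theorem \ref{thm:estimators} for $\widetilde{\mathbf D}-\mathbf D$ and $\widetilde{\boldsymbol\beta}-\boldsymbol\beta$, the spatial-median error, the log-determinant error, and the sub-Gaussian control $\|\boldsymbol z-\boldsymbol\mu_i\|_\infty^2\lesssim\log n$). Feeding $|\Delta|\lesssim\epsilon_n$ into the last display and applying Assumption \ref{ass: oracle QDA} gives $\mathbb{P}_{\boldsymbol z}(|Q|\le\epsilon_n)\lesssim\epsilon_n$, so that $R(G_{\widetilde Q})-R(G_Q)\lesssim\epsilon_n\cdot\epsilon_n=\epsilon_n^2$, which is exactly the right-hand side of (\ref{eq:guass rate}). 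Thus the margin identity turns the linear rate of Theorem \ref{thm:classification error} into its square.

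The main obstacle is removing the additive $1/\log p$ term present in Theorem \ref{thm:classification error}: for the squared bound to stand alone, the failure probability of the training-data good event and the contribution of the Gaussian tails of $\boldsymbol z$ must both be shown to be $o(\epsilon_n^2)$ rather than merely $O(1/\log p)$. That $1/\log p$ originates from the trace estimator (\ref{eq:trace estimator}), whose concentration in the elliptical setting relies, through Assumption \ref{ass: scalar random variable}, only on Chebyshev's inequality and is therefore polynomial. In the Gaussian case $r^2\sim\chi^2_p$, so I would replace that step with an exponential ($\chi^2$/Hanson--Wright) concentration inequality for the $U$-statistic $\widetilde{\mathrm{tr}(\boldsymbol{\Sigma}_i)}$, yielding a failure probability exponentially small in $n$; the same Gaussian concentration governs the tails of $\boldsymbol z$ and of the quadratic forms entering $\Delta$. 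Establishing these sharper, Gaussian-specific concentration bounds and verifying that they dominate the former polynomial bottleneck is the crux of the argument, while the remaining steps are routine adaptations of the proof of Theorem \ref{thm:classification error}.
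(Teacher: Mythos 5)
Your proposal is correct and follows essentially the same route as the paper: the paper likewise exploits that in the Gaussian case the oracle statistic is the exact log-likelihood ratio ($Q_E(\boldsymbol z)=Q(\boldsymbol z)$ in its notation, your $|1-2\eta(\boldsymbol z)|=|\tanh(Q(\boldsymbol z)/4)|\le |Q(\boldsymbol z)|/4$ bound being an equivalent packaging of its factor $1-e^{-Q(\boldsymbol z)}$), so that the excess risk picks up one factor $\lesssim\epsilon_n$ from the integrand and a second from the band probability $P(0<Q(\boldsymbol z)\le \epsilon_n)\lesssim\epsilon_n$ via Assumption \ref{ass: oracle QDA}. Your fix for the $1/\log p$ bottleneck is also exactly the paper's: it proves a Bernstein-type (sub-exponential) concentration bound for the trace estimator under normality, which drives all failure probabilities down to $O(1/p)\lesssim K_{n,p}^2\le\epsilon_n^2$ and lets the squared rate stand alone.
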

For Gaussian distribution, the convergence rate in Theorem \ref{thm:in gausssetting} demonstrates markedly faster convergence than Theorem \ref{thm:classification error}, primarily because trace estimation attains exponential tail concentration under normality. The component $\sqrt{\frac{1}{p}}$ originates from the difference between the spatial sign matrix $p\mathbf{S}$ and the shape matrix. When $p/n\rightarrow c >0$ in high dimensions, (\ref{eq:guass rate}) achieves an $O((s_1+s_2)^2\cdot\log^2n\cdot{\frac{\log p}{n}})$ convergence rate -- nearly matching the optimal rate derived in \cite{cai2011constrained}. The brief proof is in Section \ref{sec:proof of 3}.
\section{Simulation}
\label{sec:simulation}
In this section, we compare the numerical performance of the SSQDA method with other methods under various settings. The competitors include:

\begin{itemize}
    \item SDAR: Sparse discriminant analysis with regularization proposed by \cite{cai2021convex}.
    \item SLDA: Linear discriminant for high-dimensional data classification using the direct estimation of ${\boldsymbol{\beta}}$ in \cite{cai2011direct}.
    \item LDA: The mean is estimated by the joint sample mean, while the covariance is estimated by weighted sample covariance matrix augmented with $\sqrt{\frac{\log p}{n}}\mathbf{I}_p$ , so as to guarantee the invertibility. The estimators are then plugged in the conventional LDA rules.
    \item QDA: The mean is estimated by the sample mean, and the covariance is estimated by sample covariance matrix augmented with $\sqrt{\frac{\log p}{n}}\mathbf{I}_p$ . The estimators are then plugged in the conventional QDA rules.
\end{itemize}

In the simulation studies, the sample size is fixed to $n_1=n_2=200$ and dimension $p$ varies in $(100,200,400)$ . The sparsity levels are set to be $s_1=s_2=10$ , and ${\boldsymbol{\beta}}=(1, \cdots , 1, 0, \cdots , 0)$ where the first $s_2$ entries are one. Given ${\mathbf{\Sigma}_2}$ and ${\boldsymbol{\mu}_1}=(0,\cdots, 0)$ , ${\boldsymbol{\mu}_2}={\mathbf{\Sigma}_2}{\boldsymbol{\beta}}$ . The differential matrix ${\mathbf{D}}$ is a random sparse symmetric matrix, with its non-zero elements generated from a uniform distribution. 

The $p$ dimensional predictors $\boldsymbol{z}$ are generated from the following elliptical distributions:

\begin{itemize}
    \item Multivariate normal distribution: $\boldsymbol{z}\sim N_p(\boldsymbol{\mu}_i, {\mathbf{\Sigma}_i})$.
    \item Multivariate $t_5$ distribution with expectation $\boldsymbol{\mu}_i$ and covariance ${\mathbf{\Sigma}_i}$.
    \item Multivariate mixture normal distribution: $0.2N_p(\boldsymbol{\mu}_i, 9\mathbf{\Sigma}_i)+0.8N_p(\boldsymbol{\mu}_i, \mathbf{\Sigma}_i)$.
\end{itemize}

We use the following three models to generate ${{\mathbf{\Omega}_1}}$ .

Model 1: $AR(1)$ : $( {\mathbf{\Omega}_1})_{ij}=\rho^{|i-j|}$ with $\rho=0.5$ .

Model 2: Banded model: ${{\mathbf{\Omega}_1}} = (\omega_{i,j})$, where $\omega_{i,i} = 2$ for $i = 1, \ldots, p$, $\omega_{i,i+1} = 0.8$ for $i = 1, \ldots, p-1$, $\omega_{i,i+2} = 0.4$ for $i = 1, \ldots, p-2$, $\omega_{i,i+3} = 0.4$ for $i = 1, \ldots, p-3$, $\omega_{i,i+4} = 0.2$ for $i = 1, \ldots, p-4$, $\omega_{i,j} = \omega_{j,i}$ for $i, j = 1, \ldots, p$, and $\omega_{i,j} = 0$ otherwise.

Model 3: Erd˝os–Rényi random graph: ${{\mathbf{\Omega}_1}} = \left(\bar{\mathbf{\Omega}}+\bar{\mathbf{\Omega}}'\right)/2 + \{\max (-\lambda_{\min} (\bar{\mathbf{\Omega}}), 0)\}\mathbf{I}_p$ , where $(\bar{\mathbf{\Omega}})_{ij}=u_{ij}\boldsymbol{\delta}_{ij}$ , $u_{ij}\sim \text{Unif}[0.5, 1] \bigcup [-1, -0.5], \boldsymbol{\delta}_{ij}\sim Ber(1, 0.05)$. The second term ensures positive definiteness.

Each setting is replicated 100 times. The parameter $c_i$ in $\lambda_{i,n}=c_i\sqrt{s_i}\left(\sqrt{\frac{\log{p}}{n}}+\sqrt{\frac{1}{p}}\right)$ are chosen by cross-validation. We employ the following criteria to measure the performance of the classification:
$$
\text{Error Rate} = \frac{\#\{i : \hat{{Y}}_i \neq {{{Y}_i}}\}}{n},
$$
$$
    \text{Specificity} = \frac{\text{TN}}{\text{TN}+\text{FP}}, \quad \text{Sensitivity} = \frac{\text{TP}}{\text{TP}+\text{FN}},
$$
$$
    \text{MCC} = \frac{\text{TP} \times \text{TN}-\text{FP} \times \text{FN}}{\sqrt{(\text{TP}+\text{FP})(\text{TP}+\text{FN})(\text{TN}+\text{FP})(\text{TN}+\text{FN})}},
$$
where TP and TN represent for true positives (Y = 2) and true negatives (Y = 1), respectively, and FP and FN stand for false positives and negatives.

\begin{table}[htbp]
\centering
\renewcommand{\arraystretch}{0.8}
\begin{tabular}{cccccc}
\toprule
\multicolumn{6}{c}{Normal distribution under Model 1}\\
\toprule
p &  &{Error rate} & {Specificity}& Sensitivity & Mcc\\
\midrule
\multirow{5}{*}{100} & SDAR &0.092(0.016) &0.959(0.015) & 0.857(0.028)&0.821(0.031) \\
   & SLDA &0.262(0.034) & 0.739(0.042)&0.738(0.040) &0.478(0.067) \\
   & LDA  &0.266(0.025) &0.689(0.046) &0.780(0.033) &0.472(0.049) \\
   & QDA &0.070(0.012) &\bf 0.964(0.014) &0.896(0.021) &0.862(0.024)\\
   & {SSQDA}&\bf 0.066(0.012) &0.915(0.019) &\bf0.953(0.016) &\bf0.869(0.024)\\
\midrule
\multirow{5}{*}{200} & SDAR &\bf0.226(0.028) &0.761(0.056) &\bf0.787(0.034) &\bf0.549(0.055)   \\
   & SLDA &\bf0.226(0.028) &0.761(0.056) &\bf0.787(0.034) &\bf0.549(0.055)  \\
   & LDA  &0.315(0.026) &0.664(0.039) &0.707(0.038) &0.380(0.053)  \\
   & QDA &0.287(0.023) &0.736(0.034) &0.690(0.041) &0.189(0.043) \\
   & {SSQDA}&0.228(0.031) &\bf0.767(0.042) &0.777(0.045) &0.482(0.092) \\
   \midrule
\multirow{5}{*}{400}   & SDAR &\bf0.280(0.036) &\bf0.719(0.044) &\bf0.721(0.044) &\bf0.441(0.071) \\
   & SLDA&\bf 0.280(0.036) &\bf0.719(0.044) &\bf0.721(0.044) &\bf0.441(0.072)\\
   & LDA  &0.364(0.027) &0.632(0.036) &0.640(0.039) &0.272(0.055) \\
   & QDA& 0.426(0.025) &0.543(0.035) &0.605(0.035) &0.148(0.051)\\
   & {SSQDA} &0.281(0.034) &0.717(0.043) &\bf0.721(0.043) &0.439(0.067)  \\
\bottomrule
\end{tabular}
\caption{Comparison of different methods under normal distribution under Model 1.}
\label{tab:comparisonGaussianM1}
\end{table}

\begin{table}[htbp]
\centering
\renewcommand{\arraystretch}{0.8}
\begin{tabular}{cccccc}
\toprule
\multicolumn{6}{c}{Normal distribution under Model 2}\\
\toprule
p &  &{Error rate} & {Specificity}& Sensitivity & Mcc\\
\midrule
\multirow{5}{*}{100} & SDAR & 0.148(0.019) & 0.721(0.036) & 0.984(0.010) & 0.730(0.034) \\
&SLDA & 0.333(0.031) & 0.656(0.039) & 0.679(0.045) & 0.335(0.062) \\
&LDA & 0.337(0.025) & 0.618(0.043) & 0.708(0.041) & 0.328(0.051) \\
&QDA & \bf0.097(0.016) &\bf 0.886(0.024) & 0.920(0.023) & \bf0.807(0.033) \\
&SSQDA & 0.147(0.019) & 0.721(0.036) &\bf 0.985(0.009) & 0.732(0.032) \\
\midrule
\multirow{5}{*}{200} & SDAR & \bf0.314(0.035) & \bf0.676(0.053) &\bf 0.695(0.047) & \bf0.372(0.069) \\
&SLDA & \bf0.314(0.035) & \bf0.676(0.053) &\bf 0.695(0.047) & \bf0.372(0.069) \\
&LDA & 0.367(0.027) & 0.618(0.044) & 0.648(0.047) & 0.268(0.054) \\
&QDA & 0.328(0.024) & 0.656(0.034) & 0.687(0.042) & 0.344(0.048) \\
&SSQDA & 0.316(0.033) & 0.675(0.053) & 0.694(0.046) & 0.370(0.065) \\
   \midrule
\multirow{5}{*}{400} & SDAR&\bf 0.379(0.039) &\bf 0.622(0.054) & \bf0.620(0.047) &\bf 0.243(0.079) \\
&SLDA &\bf 0.379(0.039) &\bf 0.622(0.054) &\bf 0.620(0.047) &\bf 0.243(0.079) \\
&LDA & 0.417(0.027) & 0.582(0.044) & 0.584(0.042) & 0.166(0.054) \\
&QDA & 0.431(0.025) & 0.536(0.037) & 0.603(0.036) & 0.139(0.050) \\
&SSQDA & 0.383(0.038) & 0.615(0.051) & 0.618(0.047) & 0.234(0.077) \\
\bottomrule
\end{tabular}
\caption{Comparison of different methods under normal distribution under Model 2.}
\label{tab:comparisonGaussianM2}
\end{table}

\begin{table}[htbp]
\centering
\renewcommand{\arraystretch}{0.8}
\begin{tabular}{cccccc}
\toprule
\multicolumn{6}{c}{Normal distribution under Model 3}\\
\toprule
p &  &{Error rate} & {Specificity}& Sensitivity & Mcc\\
\midrule
\multirow{5}{*}{100} & SDAR &\bf0.084(0.043) &\bf0.835(0.087) &\bf0.997(0.004) &\bf0.845(0.071)   \\
   & SLDA &0.202(0.051) &0.722(0.131) &0.874(0.048) &0.610(0.085)  \\
   & LDA  &0.202(0.028) &0.753(0.056) &0.843(0.033) &0.600(0.054)  \\
   & QDA &0.037(0.009) &0.957(0.015) &0.968(0.013) &0.925(0.019) \\
   & {SSQDA}&0.086(0.044) &0.831(0.089) &\bf0.997(0.004) &0.842(0.073) \\
\midrule
\multirow{5}{*}{200} & SDAR& 0.225(0.035) & 0.752(0.073) & \bf0.797(0.042) & 0.553(0.066)   \\
   & SLDA &0.225(0.035) & 0.752(0.073) &\bf 0.797(0.043) & 0.553(0.066)  \\
   & LDA  & 0.275(0.028) & 0.695(0.047) & 0.755(0.040) & 0.452(0.056)  \\
   & QDA & \bf0.208(0.020) &\bf 0.827(0.026) & 0.757(0.032) & \bf0.587(0.039) \\
   & {SSQDA}& 0.228(0.040) & 0.749(0.079) & 0.795(0.043) & 0.547(0.076) \\
   \midrule
\multirow{5}{*}{400}   & SDAR &\bf0.235(0.024) &\bf0.755(0.037) &\bf0.775(0.032) &\bf0.531(0.049) \\
   & SLDA&\bf 0.235(0.024)
 &\bf0.755(0.037) &\bf0.775(0.032) &\bf0.531(0.049)\\
   & LDA  &0.289(0.022) &0.701(0.036) &0.722(0.035) &0.423(0.044) \\
   & QDA &0.335(0.024) &0.692(0.035) &0.639(0.037)&0.332(0.048) \\
   & {SSQDA} &0.236(0.026) & 0.752(0.039) &0.777(0.033) &0.529(0.052)  \\
\bottomrule
\end{tabular}
\caption{Comparison of different methods under normal distribution under Model 3.}
\label{tab:comparisonGaussianM3}
\end{table}

\begin{table}[htbp]
\centering
\renewcommand{\arraystretch}{0.8}
\begin{tabular}{cccccc}
\toprule
\multicolumn{6}{c}{$t_5$ distribution under Model 1}\\
\toprule
p &  &{Error rate} & {Specificity}& Sensitivity & Mcc\\
\midrule
\multirow{5}{*}{100} & SDAR &0.165(0.021) &0.832(0.031)
 & 0.838(0.031)&0.671(0.043) \\
   & SLDA &0.165(0.021) & 0.832(0.031)&0.671(0.043) &0.524(0.069) \\
   & LDA  &0.210(0.022) &0.782(0.034)&0.798(0.033)&0.581(0.045) \\
   & QDA &0.349(0.022) &0.655(0.050) &0.647(0.050)&0.303(0.043)\\
   & {SSQDA}&{\bf 0.160(0.019)} & {\bf 0.838(0.028)} &{\bf 0.843(0.030)}&{\bf 0.681(0.037)}\\
\midrule
\multirow{5}{*}{200} & SDAR &0.217(0.043) &0.780(0.047) &0.786(0.052) & 0.567(0.086)   \\
   & SLDA & 0.215(0.036) &0.782(0.043) & 0.789(0.046) &0.571(0.072)  \\
   & LDA  &0.276(0.026) &0.699(0.036) &0.748(0.042)
 &0.449(0.052)  \\
   & QDA &0.419(0.026) &0.589(0.048) &0.573(0.049) &0.162(0.052) \\
   & {SSQDA}&\bf0.184(0.039) &\bf0.816(0.075) &\bf0.817(0.043) &\bf0.634(0.074) \\
   \midrule
\multirow{5}{*}{400}   & SDAR &0.291(0.079) &0.720(0.077) &0.697(0.091) &0.418(0.158) \\
   & SLDA&0.262(0.033) &0.744(0.044) &0.733(0.043) &0.478(0.065) \\
   & LDA  &0.327(0.027) &0.622(0.045) &0.724(0.036) &0.348(0.053) \\
   & QDA& 0.446(0.024) &0.571(0.036) &0.537(0.039)&0.107(0.048) \\
   & {SSQDA} &\bf0.212(0.032) &\bf0.791(0.040) &\bf0.785(0.042) &\bf 0.577(0.063) \\
\bottomrule
\end{tabular}
\caption{Comparison of different methods under $t_5$ distribution under Model 1.}
\label{tab:comparisont5M1}
\end{table}

\begin{table}[htbp]
\centering
\renewcommand{\arraystretch}{0.8}
\begin{tabular}{cccccc}
\toprule
\multicolumn{6}{c}{$t_5$ distribution under Model 2}\\
\toprule
p &  &{Error rate} & {Specificity}& Sensitivity & Mcc\\
\midrule
\multirow{5}{*}{100} & SDAR & 0.375(0.041) & 0.567(0.108) & 0.683(0.072) & 0.253(0.080) \\
&SLDA & 0.368(0.035) & 0.615(0.045) & 0.649(0.053) & 0.264(0.070) \\
&LDA & 0.387(0.028) & 0.591(0.045) & 0.635(0.043) & 0.226(0.056) \\
&QDA & 0.426(0.023) & 0.590(0.058) & 0.558(0.063) & 0.149(0.045) \\
&SSQDA & \bf0.344(0.033) &\bf 0.623(0.070) &\bf 0.690(0.055) &\bf 0.315(0.065) \\
\midrule
\multirow{5}{*}{200} &SDAR & 0.410(0.030) & 0.570(0.048) & 0.611(0.044) & 0.181(0.060) \\
&SLDA & 0.410(0.030) & 0.570(0.048) & 0.611(0.044) & 0.181(0.060) \\
&LDA & 0.425(0.027) & 0.537(0.051) & 0.612(0.035) & 0.150(0.054) \\
&QDA & 0.466(0.026) & 0.548(0.049) & 0.521(0.045) & 0.069(0.051) \\
&SSQDA &\bf 0.382(0.031) &\bf 0.618(0.046) &\bf 0.618(0.045) &\bf 0.236(0.062) \\
   \midrule
\multirow{5}{*}{400} & SDAR & 0.431(0.031) & 0.567(0.080) & 0.571(0.072) & 0.139(0.061) \\
&SLDA & 0.430(0.030) & 0.576(0.042) & 0.564(0.047) & 0.140(0.060) \\
&LDA & 0.455(0.024) & 0.478(0.042) & 0.613(0.042) & 0.092(0.048) \\
&QDA & 0.480(0.025) & 0.533(0.043) & 0.507(0.039) & 0.040(0.050) \\
&SSQDA & \bf0.399(0.031) &\bf 0.603(0.043) &\bf 0.599(0.045) &\bf 0.202(0.063) \\
\bottomrule
\end{tabular}
\caption{Comparison of different methods under $t_5$ distribution under Model 2.}
\label{tab:comparisont5M2}
\end{table}

\begin{table}[htbp]
\centering
\renewcommand{\arraystretch}{0.8}
\begin{tabular}{cccccc}
\toprule
\multicolumn{6}{c}{$t_5$ distribution under Model 3}\\
\toprule
p &  &{Error rate} & {Specificity}& Sensitivity & Mcc\\
\midrule
\multirow{5}{*}{100} & SDAR & 0.197(0.024) &0.793(0.032) &0.812(0.032) &0.606(0.047)   \\
   & SLDA &0.197(0.024) &0.793(0.032) &0.812(0.032) &0.606(0.047)  \\
   & LDA  &0.237(0.026) &0.753(0.041) &0.774(0.036) &0.527(0.052)  \\
   & QDA &0.226(0.025) &\bf0.805(0.040) &0.744(0.044) &0.551(0.050) \\
   & {SSQDA}&\bf0.188(0.020) &0.804(0.032) &\bf0.820(0.028) &\bf0.625(0.039) \\
\midrule
\multirow{5}{*}{200} & SDAR& 0.313(0.035) & 0.686(0.070) & 0.689(0.050) & 0.376(0.068)   \\
   & SLDA & 0.311(0.032) & 0.692(0.040) & 0.687(0.044) & 0.379(0.063)  \\
   & LDA  & 0.369(0.029) & 0.605(0.050) & 0.657(0.048) & 0.263(0.058)  \\
   & QDA & 0.340(0.025) &\bf 0.714(0.041) & 0.607(0.043) & 0.323(0.050) \\
   & {SSQDA}&\bf 0.294(0.029) & 0.707(0.040) &\bf 0.705(0.042) &\bf 0.412(0.057) \\
   \midrule
\multirow{5}{*}{400}   & SDAR & 0.359(0.029) & 0.646(0.071) & 0.637(0.055) & 0.284(0.054) \\
& SLDA & 0.358(0.027) & 0.651(0.047) & 0.633(0.043) & 0.285(0.054) \\
& LDA & 0.396(0.027) & 0.550(0.049) &\bf 0.659(0.039) & 0.210(0.053) \\
& QDA & 0.433(0.025) & 0.639(0.046) & 0.495(0.044) & 0.137(0.051) \\
& SSQDA &\bf 0.337(0.029) & \bf0.668(0.040) & 0.658(0.043) &\bf 0.326(0.059) \\
\bottomrule
\end{tabular}
\caption{Comparison of different methods under $t_5$ distribution under Model 3.}
\label{tab:comparisont5M3}
\end{table}

\begin{table}[htbp]
\centering
\renewcommand{\arraystretch}{0.8}
\begin{tabular}{cccccc}
\toprule
\multicolumn{6}{c}{Mixture normal distribution under Model 1}\\
\toprule
p &  &{Error rate} & {Specificity}& Sensitivity & Mcc\\
\midrule
\multirow{5}{*}{100} & SDAR &0.144(0.017) &0.845(0.026) &0.868(0.026) & 0.714(0.035)  \\
   & SLDA &0.165(0.038) &0.824(0.057) & 0.847(0.037) & 0.672(0.075) \\
   & LDA  &0.199(0.027) &0.747(0.044) &0.854(0.033) &0.605(0.054)  \\
   & QDA &\bf0.111(0.016) &\bf0.876(0.027) &\bf 0.902(0.020) & \bf0.779(0.032)\\
   & {SSQDA}&0.137(0.044) &0.837(0.071) &0.889(0.094) &0.732(0.079) \\
\midrule
\multirow{5}{*}{200} & SDAR &0.178(0.031) &0.820(0.041) &0.824(0.039) & 0.645(0.062)  \\
   & SLDA &0.178(0.031) &0.820(0.041) &0.825(0.039) & 0.645(0.062) \\
   & LDA  &0.224(0.025) &0.728(0.041) &0.824(0.032) & 0.556(0.049) \\
   & QDA &0.297(0.023) &0.684(0.039) &0.722(0.036) &0.407(0.045) \\
   & {SSQDA}&\bf0.151(0.090) &\bf0.823(0.191) &\bf0.876(0.031) &\bf 0.702(0.169)\\
   \midrule
\multirow{5}{*}{400} & SDAR &0.222(0.035) &0.773(0.044) &0.784(0.041) &0.558(0.070)   \\
   & SLDA &0.222(0.035) &0.773(0.044) &0.784(0.041) &0.558(0.070)  \\
   & LDA  &0.296(0.025)& 0.609(0.046)&0.799(0.031) & 0.416(0.048)  \\
   & QDA &0.381(0.023) &0.586(0.036) &0.653(0.035) &0.240(0.047) \\
   & {SSQDA}&\bf 0.164(0.028) &\bf0.835(0.036) &\bf0.838(0.032) &\bf0.673(0.056)
 \\
\bottomrule
\end{tabular}
\caption{Comparison of different methods under mixture normal distribution under Model 1.}
\label{tab:comparisonMixGaussianM1}
\end{table}

\begin{table}[htbp]
\centering
\renewcommand{\arraystretch}{0.8}
\begin{tabular}{cccccc}
\toprule
\multicolumn{6}{c}{Mixture normal distribution under Model 2}\\
\toprule
p &  &{Error rate} & {Specificity}& Sensitivity & Mcc\\
\midrule
\multirow{5}{*}{100} & SDAR & 0.243(0.049) & 0.723(0.115) & 0.791(0.058) & 0.520(0.090) \\
&SLDA & 0.242(0.047) & 0.728(0.102) & 0.788(0.056) & 0.520(0.088) \\
&LDA & 0.258(0.027) & 0.690(0.048) & 0.793(0.035) & 0.487(0.052) \\
&QDA & 0.213(0.022) & 0.679(0.041) &\bf 0.896(0.021) & 0.589(0.043) \\
&SSQDA &\bf 0.183(0.029) &\bf 0.813(0.054) & 0.821(0.037) & \bf0.635(0.055) \\
\midrule
\multirow{5}{*}{200} & SDAR & 0.264(0.038) & 0.720(0.049) & 0.752(0.048) & 0.473(0.076) \\
&SLDA & 0.264(0.038) & 0.720(0.049) & 0.752(0.048) & 0.473(0.076) \\
&LDA & 0.305(0.026) & 0.625(0.045) & 0.766(0.035) & 0.396(0.051) \\
&QDA & 0.322(0.022) & 0.592(0.040) & 0.765(0.028) & 0.363(0.043) \\
&SSQDA &\bf 0.194(0.026) & \bf0.807(0.034) & \bf0.805(0.037) & \bf0.612(0.051) \\
   \midrule
\multirow{5}{*}{400}   & SDAR & 0.327(0.062) & 0.627(0.166) & 0.720(0.085) & 0.350(0.119) \\
&SLDA & 0.315(0.042) & 0.669(0.052) & 0.702(0.053) & 0.372(0.084) \\
&LDA & 0.369(0.026) & 0.517(0.044) & 0.745(0.040) & 0.270(0.054) \\
&QDA & 0.415(0.026) & 0.539(0.039) & 0.631(0.038) & 0.171(0.053) \\
&SSQDA &\bf 0.234(0.028) &\bf 0.766(0.036) &\bf 0.766(0.041) &\bf 0.532(0.057) \\
\bottomrule
\end{tabular}
\caption{Comparison of different methods under mixture normal distribution under Model 2.}
\label{tab:comparisonMixGaussianM2}
\end{table}

\begin{table}[htbp]
\centering
\renewcommand{\arraystretch}{0.8}
\begin{tabular}{cccccc}
\toprule
\multicolumn{6}{c}{Mixture normal distribution under Model 3}\\
\toprule
p &  &{Error rate} & {Specificity}& Sensitivity & Mcc\\
\midrule
\multirow{5}{*}{100} & SDAR &0.167(0.071) &0.759(0.165) &0.908(0.034) &0.683(0.117)   \\
   & SLDA &0.154(0.061) &0.790(0.140) &0.902(0.030) &0.702(0.105)  \\
   & LDA  &0.144(0.027) &0.817(0.050)
 &0.896(0.022) &0.715(0.052)  \\
   & QDA & \bf0.092(0.016) &\bf0.882(0.026) &\bf0.933(0.015) &\bf0.817(0.032) \\
   & {SSQDA}&0.107(0.035) &0.875(0.076) &0.911(0.021) &0.789(0.062) \\
\midrule
\multirow{5}{*}{200} & SDAR &0.173(0.038) &0.802(0.082) &0.852(0.035) &0.658(0.067)   \\
   & SLDA &0.173(0.038) &0.802(0.082) &0.852(0.035) &0.658(0.068)  \\
   & LDA  &0.194(0.023) &0.759(0.044) &0.853(0.027) &0.616(0.045)  \\
   & QDA &0.176(0.018) &0.812(0.031) &0.837(0.026) &0.650(0.035) \\
   & {SSQDA}&\bf0.146(0.052) &\bf0.831(0.112) &\bf0.877(0.028) &\bf0.710(0.092) \\
   \midrule
\multirow{5}{*}{400}   & SDAR &0.170(0.039) &0.820(0.085) &0.840(0.033) &0.661(0.073) \\
   & SLDA &0.167(0.023) &0.827(0.034) &0.838(0.030)&0.666(0.045) \\
   & LDA  &0.199(0.023) &0.759(0.038) &0.842(0.027) &0.603(0.045) \\
   & QDA& 0.279(0.023) &0.696(0.039) &0.746(0.032) &0.443(0.045)\\
   & {SSQDA} &\bf0.143(0.040)&\bf0.848(0.087) &\bf0.866(0.026) &\bf 0.714(0.077)  \\
\bottomrule
\end{tabular}
\caption{Comparison of different methods under mixture normal distribution under Model 3.}
\label{tab:comparisonMixGaussianM3}
\end{table}


Building upon the results presented in Tables \ref{tab:comparisonGaussianM1}–\ref{tab:comparisonGaussianM3}, it is evident that both SSQDA and SDAR perform competitively under multivariate normality, consistently achieving lower misclassification rates and robust predictive performance across different model configurations. This observation confirms the efficiency of these methods when classical distributional assumptions hold.

However, more compelling insights emerge from the performance comparison under non-Gaussian settings, such as multivariate $t$-distributions and mixed Gaussian models, as shown in Tables \ref{tab:comparisont5M1}–\ref{tab:comparisont5M3} and \ref{tab:comparisonMixGaussianM1}–\ref{tab:comparisonMixGaussianM3}. In these more challenging scenarios characterized by heavier tails and increased heterogeneity, the proposed SSQDA method significantly outperforms its counterparts, including SDAR, SLDA, and standard QDA and LDA. This robust performance is attributed to the use of spatial-median-based estimators and the spatial-sign covariance matrix, which offer resilience against deviations from normality and reduce sensitivity to outliers.

Furthermore, while conventional QDA remains a strong competitor in low-dimensional regimes (e.g., Table \ref{tab:comparisonGaussianM2}, \ref{tab:comparisonGaussianM3}), its efficacy deteriorates as the dimensionality increases—reflected in rising error rates and instability across all metrics. This performance decline can be primarily attributed to the singularity or ill-conditioning of the sample covariance matrix in high-dimensional settings, which the SSQDA method effectively mitigates through its robust estimation framework.

Taken together, these findings highlight the versatility and adaptability of SSQDA. Not only does it maintain competitive performance under ideal Gaussian conditions, but it also delivers substantial gains in robustness and accuracy when faced with heavy-tailed and non-normal distributions. This underscores the practical value of SSQDA for real-world high-dimensional classification problems where Gaussian assumptions may not hold.

\section{Real data analysis}
\label{sec:real data}
In this section, we evaluate the effectiveness of the proposed SSQDA classifier on an image classification task involving concrete surface inspection. The goal is to determine whether a given image of concrete contains cracks. The dataset, sourced from concrete structures on the METU campus, is publicly available at \url{https://www.kaggle.com/datasets/arnavr10880/concrete-crack-images-for-classification}. Each image has a resolution of \(227 \times 227\) pixels and is labeled as either containing cracks (positive class) or not (negative class).

To standardize the input dimensions while preserving the aspect ratio, we first applied isotropic scaling to all images using bilinear interpolation with a scaling factor of 0.1. This preprocessing step reduces computational complexity without compromising structural information. Following the resizing, all images were converted to grayscale using the standard luminance-preserving transformation:
\[
[Gr_{ij}]_{m \times n}, \quad x_{ij} = 0.1140 \cdot r_{ij} + 0.5870 \cdot g_{ij} + 0.2989 \cdot b_{ij},
\]
where \(r_{ij}, g_{ij}, b_{ij}\) denote the red, green, and blue channel intensities at pixel position \((i, j)\). The resulting grayscale image was then flattened into a feature vector for input into the classifiers.

For the classification experiment, we randomly selected 200 images from each class (positive and negative) to form the training dataset. The performance of SSQDA was compared against several baseline methods, including SDAR, SLDA, LDA, and QDA, using 50 independent repetitions to ensure statistical robustness.

The comparative results, reported in Table~\ref{tab:comparisonrealdata}, include the mean and standard deviation of four evaluation metrics: classification error, specificity, sensitivity, and Matthews correlation coefficient (MCC). Among the evaluated methods, SSQDA achieved the lowest average error rate (0.095) and the highest MCC (0.814), indicating strong and balanced predictive performance. Notably, QDA failed completely in this high-dimensional setting, yielding an error rate of 0.5 and an MCC of 0, likely due to overfitting or singular covariance estimates.

These results demonstrate that SSQDA not only provides improved overall classification accuracy but also maintains a better balance between true positive and true negative rates, making it a strong candidate for real-world applications in automated crack detection systems.

\begin{table}[htbp]
\centering
\begin{tabular}{cccccc}
\toprule
Method & Error (mean(std)) & Specifity (mean(std)) & Sensitivity (mean(std)) & Mcc (mean(std)) \\
\midrule
SDAR & 0.105(0.025) & 0.936(0.030) & 0.853(0.046) & 0.794(0.049) \\
SLDA & 0.105(0.025) & 0.936(0.030) & 0.853(0.047) & 0.794(0.049) \\
LDA & 0.101(0.023) & 0.944(0.027) & 0.853(0.044) & 0.802(0.045) \\
QDA & 0.500(0.000) & 0.000(0.000) & \bf1.000(0.000) & 0.000(0.000) \\
SSQDA & \bf{0.095(0.024)} & \bf{0.946(0.026)} & {0.864(0.044)} & \bf{0.814(0.047)} \\
\bottomrule
\end{tabular}
\caption{Comparison of different methods for image classification.}
\label{tab:comparisonrealdata}
\end{table}

\section{Extension to multigroup classification}
We first extend the theory to unequal prior  probabilities setting, where $\pi_1, \pi_2$ can be estimated by $\hat\pi_1=\frac{n_1}{n_1+n_2}$ and $\hat\pi_2=\frac{n_2}{n_1+n_2}$. The corresponding SSQDA rule can be written as $$\widetilde{Q}(\boldsymbol{z})=(\boldsymbol{z} - \tilde{\boldsymbol{\mu}}_1)^T\tilde{\mathbf{D}}(\boldsymbol{z} - \tilde{\boldsymbol{\mu}}_1) - 2 \tilde{{\boldsymbol{\beta}}}^T(\boldsymbol{z} - \tilde{\bar{\boldsymbol{\mu}}}) - \log(|\tilde{\mathbf{{D}}}{\tilde{\mathbf{\Sigma}}}_1 + \mathbf{I}_p|)+\log\left(\frac{\hat\pi_1}{\hat\pi_2}\right).$$ As the same in \cite{cai2011constrained}, $\log\left(\frac{\hat\pi_1}{\hat\pi_2}\right)$ convergences fast at a rate of $$\mathbb{P}\left(\left|2\log\left(\frac{\pi_1}{\pi_2}\right)-\log\left(\frac{\hat{\pi}_1}{\hat{\pi}_2}\right)\right|>M\sqrt{\frac{1}{n}}\right)\leq e^{-cM}.$$ Therefore, the convergence rate in Theorem \ref{thm:classification error} remains unchanged.

This idea and theory can also be extended to the classification problem involving multiple populations. Assume there are $K$ groups with distribution $EC_p({\boldsymbol{\mu}_k}, \mathbf{\Sigma}_k, r), k=1,\cdots, K$ respectively. We adopt the Bayesian classification criterion for the multivariate normal distribution in the multigroup setting. Therefore, a new observation $\boldsymbol{z}$ is assigned to class $k$ if and only if $$
k = \arg\min_{k\in \{1, \cdots, K\}}Q_k(\boldsymbol{z}),
$$ where $Q_k(\boldsymbol{z})$ is defined as $$
Q_k(\boldsymbol{z})=\frac{1}{2}(\boldsymbol{z} - {\boldsymbol{\mu}_k})^{\top}{\mathbf{D}_k}(\boldsymbol{z} - {\boldsymbol{\mu}_k}) - {\boldsymbol{\beta}}_k^{\top}(\boldsymbol{z} - \bar{\boldsymbol{\mu}}_k) - \frac{1}{2}\log|{\mathbf{D}_k}{\mathbf{\Sigma}_1} + \mathbf{I}_p|+\log\left(\frac{\pi_1}{\pi_k}\right),\, k=1, \cdots ,K,
$$ with ${\mathbf{D}_k} = {\mathbf{\Omega}_k} - {{\mathbf{\Omega}_1}}, \bar{\boldsymbol{\mu}}_k = \frac{{\boldsymbol{\mu}_1}+{\boldsymbol{\mu}_k}}{2}, {\boldsymbol{\beta}}_k = {{\mathbf{\Omega}_1}}({\boldsymbol{\mu}_k}-{\boldsymbol{\mu}_1})$. When the parameters in $Q_k(\boldsymbol{z})$ are unknown, under the assumptions of sparsity in ${\mathbf{D}_k}$ and ${\boldsymbol{\beta}}_k$ , we can obtain estimates of ${\mathbf{D}_k}$ and ${\boldsymbol{\beta}}_k$ using $K$ sets of samples: $\boldsymbol{X}_1^{(k)}, \cdots \boldsymbol{X}_{nk}^{(k)} \overset{\text{i.i.d.}}{\sim} EC_p({\boldsymbol{\mu}_k}, \mathbf{\Sigma}_k, r)$ by
\begin{align}
    \tilde{\mathbf{{D}}}_k = \arg\min_{{\mathbf{D}}\in\mathbb{R}^{p\times p}} \left\{\|\text{Vec}({\mathbf{D}})\|_1 : \left\|\frac{1}{2}{\tilde{\mathbf{\Sigma}}}_1 {\mathbf{D}}\tilde{\mathbf{\Sigma}}_k + \frac{1}{2}\tilde{\mathbf{\Sigma}}_k {\mathbf{D}}{\tilde{\mathbf{\Sigma}}}_1 - {\tilde{\mathbf{\Sigma}}}_1 + \tilde{\mathbf{\Sigma}}_k\right\|_{\text{max}} \leq \lambda_{1,n}\right\},
\end{align}
where$\tilde{\mathbf{\Sigma}}_k=\widetilde{\text{tr}(\mathbf{\Sigma}_k)}\tilde{\mathbf{S}}_k$, and  $\lambda_{1,n}$ is a tuning parameter.
\begin{align}
\tilde{{\boldsymbol{\beta}}}_k = \arg\min_{{\boldsymbol{\beta}}\in\mathbb{R}^{p}} \left\{\|{\boldsymbol{\beta}}\|_1 : \|{\tilde{\mathbf{\Sigma}}}_1{\boldsymbol{\beta}} - \tilde{\boldsymbol{\mu}}_k + \tilde{\boldsymbol{\mu}}_1\|_{\infty} \leq \lambda_{2,n}\right\},
\end{align}
with $\tilde{\boldsymbol{\mu}}_k$ be the sample spatial median of the $k_\text{th}$ group, and $\lambda_{2,n}$ is a tuning parameter. Therefore, the discriminating function can be constructed as$$
\tilde Q_k(\boldsymbol{z})=\frac{1}{2}(\boldsymbol{z} - \tilde{\boldsymbol{\mu}}_k)^{\top}\tilde{\mathbf{D}}_k(\boldsymbol{z} - \tilde{{{\boldsymbol{\mu}}}}_k) - \tilde {\boldsymbol{\beta}}_k^{\top}(\boldsymbol{z} - \frac{\tilde{\boldsymbol{\mu}}_1 +\boldsymbol{\tilde{\boldsymbol{\mu}}_k}}{2}) - \frac{1}{2}\log|\tilde{\mathbf{D}}_k{\tilde{\mathbf{\Sigma}}}_1 + \mathbf{I}_p|+\log\left(\frac{\hat \pi_1}{\hat \pi_k}\right),\, k=1, \cdots ,K
.$$

Then the classification rule in multigroup setting are as follows 
$$
\tilde{G}_{\tilde{Q}}(\boldsymbol{z})=\arg\min_{k\in\{1,\cdots, k\}}\{\tilde{Q}_k(\boldsymbol{z})\}.
$$
The convergence rate for misclassification can be derived by applying the same techniques used in the proof of Theorem \ref{thm:classification error}.

\section{Conclusion}

In this paper, we proposed a novel classification method, Spatial-Sign based Sparse Quadratic Discriminant Analysis (SSQDA), tailored for high-dimensional settings where the number of features greatly exceeds the number of observations. By leveraging spatial signs, our method achieves robust estimation in the presence of heavy-tailed distributions and outliers, while simultaneously inducing sparsity to enhance interpretability and prevent overfitting. Through comprehensive simulations and a real-world image classification task, we demonstrated that SSQDA outperforms several existing linear and quadratic discriminant methods in terms of classification accuracy and robustness. The empirical results confirm the advantage of incorporating spatial-sign information and sparse modeling in high-dimensional discriminant analysis. Our method provides a promising framework for robust and interpretable classification in modern applications, especially those involving high-dimensional and noisy data. While SSQDA has demonstrated strong performance in supervised high-dimensional classification tasks, extending its principles to unsupervised learning and clustering presents an exciting direction for future research \citep{Cai2019CHIME}.

\section{Appendix: Proofs of Theorems}
\label{sec:proof}
In this section, we prove Theorem \ref{thm:estimators}, Theorem \ref{thm:classification error} and Theorem \ref{thm:in gausssetting}. 

\subsection{Useful lemmas}
We begin by presenting several lemmas that establish the consistency of spatial median and sample spatial sign covariance. Let $\boldsymbol X \sim EC_p(\boldsymbol{\mu},{\mathbf{\Sigma}_0},r),\mathbb{E}(r^2)=p, \Lambda_0 = \frac{p}{\text{tr}({\mathbf{\Sigma}_0})}{\mathbf{\Sigma}_0}$.  Spatial sign convariance matrix is denoted by $\mathbf{S}$. Sample spatial sign convariance matrix and spatial median is denoted by$\tilde{\mathbf{S}}$ and $\tilde{\boldsymbol{\mu}}$ respectively.

\begin{lemma}
\label{lem:Smax}
Under Assumption \ref{ass:V0}, \ref{ass:bound of covariance} and \ref{ass: trace},
$\|\mathbf{S}\|_{\text{max}}=O(p^{-1}).$
\end{lemma}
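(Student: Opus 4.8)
The plan is to work with the population spatial-sign covariance matrix $\mathbf{S}=\mathbb{E}[U(\boldsymbol{X}-\boldsymbol{\mu})U(\boldsymbol{X}-\boldsymbol{\mu})^{T}]$ and to exploit that $U$ keeps only the directional part of $\boldsymbol{X}-\boldsymbol{\mu}$. Using the elliptical representation $\boldsymbol{X}-\boldsymbol{\mu}=r\,\mathbf{\Sigma}_0^{1/2}\boldsymbol{u}$ with $\boldsymbol{u}$ uniform on $\mathbb{S}^{p-1}$ and $r\geq 0$ independent of $\boldsymbol{u}$, the radius cancels in $U(\boldsymbol{X}-\boldsymbol{\mu})=\mathbf{\Sigma}_0^{1/2}\boldsymbol{u}/\|\mathbf{\Sigma}_0^{1/2}\boldsymbol{u}\|_2$, so that
\begin{equation*}
\mathbf{S}=\mathbb{E}\left[\frac{\mathbf{\Sigma}_0^{1/2}\boldsymbol{u}\boldsymbol{u}^{T}\mathbf{\Sigma}_0^{1/2}}{\boldsymbol{u}^{T}\mathbf{\Sigma}_0\boldsymbol{u}}\right].
\end{equation*}
Passing from $\mathbf{\Gamma}$ to the symmetric root $\mathbf{\Sigma}_0^{1/2}$ is legitimate because the law of the direction depends on the scatter only through $\mathbf{\Sigma}_0=\mathbf{\Gamma}\mathbf{\Gamma}^{T}$.

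Since $\mathbf{S}$ is a Gram-type expectation, it is positive semidefinite, whence $|S_{ij}|\leq\sqrt{S_{ii}S_{jj}}$ and it suffices to bound the diagonal. The key step is to lower bound the quadratic form in the denominator deterministically, $\boldsymbol{u}^{T}\mathbf{\Sigma}_0\boldsymbol{u}\geq\lambda_p(\mathbf{\Sigma}_0)$, which decouples the ratio and leaves an elementary moment of $\boldsymbol{u}$:
\begin{equation*}
S_{ii}=\mathbb{E}\left[\frac{(\boldsymbol{e}_i^{T}\mathbf{\Sigma}_0^{1/2}\boldsymbol{u})^2}{\boldsymbol{u}^{T}\mathbf{\Sigma}_0\boldsymbol{u}}\right]\leq\frac{1}{\lambda_p(\mathbf{\Sigma}_0)}\,\mathbb{E}\left[(\boldsymbol{e}_i^{T}\mathbf{\Sigma}_0^{1/2}\boldsymbol{u})^2\right]=\frac{(\mathbf{\Sigma}_0)_{ii}}{p\,\lambda_p(\mathbf{\Sigma}_0)},
\end{equation*}
where the last equality uses $\mathbb{E}[\boldsymbol{u}\boldsymbol{u}^{T}]=p^{-1}\mathbf{I}_p$. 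Assumption \ref{ass:bound of covariance} then gives $(\mathbf{\Sigma}_0)_{ii}\leq\|\mathbf{\Sigma}_0\|_{\max}\leq M_2$ and $\lambda_p(\mathbf{\Sigma}_0)\geq M_1^{-1}$, so $S_{ii}\leq M_1M_2/p$ uniformly in $i$, and combining with the off-diagonal bound yields $\|\mathbf{S}\|_{\max}\leq M_1M_2/p=O(p^{-1})$.

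The computation is short, so I do not anticipate a serious obstacle; the only points needing care are the representation step (justifying that $r$ drops out and that the symmetric root may be used) and the recognition that crudely bounding $\boldsymbol{u}^{T}\mathbf{\Sigma}_0\boldsymbol{u}$ from below by $\lambda_p(\mathbf{\Sigma}_0)$ — rather than tracking its fluctuations — is precisely what decouples the expectation and produces the $p^{-1}$ factor through $\mathbb{E}[\boldsymbol{u}\boldsymbol{u}^{T}]=p^{-1}\mathbf{I}_p$. If one prefers to phrase everything through the shape matrix $\Lambda_0$ and its inverse $\mathbf{V}_0$ (Assumptions \ref{ass:V0} and \ref{ass: trace}), the identical argument applies since $\mathbf{S}$ is invariant under scaling of the scatter, with $\lambda_p(\Lambda_0)$ bounded below by a constant because $\text{tr}(\mathbf{\Sigma}_0)\asymp p$.
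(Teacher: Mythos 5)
Your proof is correct, but it follows a genuinely different route from the paper: the paper offers no internal argument for this lemma at all, disposing of it (together with Lemmas \ref{lem:diffS} and \ref{lem:diffSandmu}) by citing Lemmas 5--7 in Section 5 of \cite{lu2025robust}. Your argument, by contrast, is short and self-contained: reduce to the symmetric root via rotational invariance (legitimate, since $\mathbf{\Gamma}=\mathbf{\Sigma}_0^{1/2}\mathbf{O}$ for an orthogonal $\mathbf{O}$ and $\mathbf{O}\boldsymbol{u}\overset{d}{=}\boldsymbol{u}$), cancel the radius, bound the denominator deterministically by $\boldsymbol{u}^{T}\mathbf{\Sigma}_0\boldsymbol{u}\geq\lambda_p(\mathbf{\Sigma}_0)\geq M_1^{-1}$, and use $\mathbb{E}[\boldsymbol{u}\boldsymbol{u}^{T}]=p^{-1}\mathbf{I}_p$ together with the positive semidefiniteness of $\mathbf{S}$ (so $|S_{ij}|\leq\sqrt{S_{ii}S_{jj}}$) to get $\|\mathbf{S}\|_{\max}\leq M_1M_2/p$. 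This buys transparency and economy of hypotheses: your proof uses only Assumption \ref{ass:bound of covariance} (Assumptions \ref{ass:V0} and \ref{ass: trace} are not needed, as you observe via scale invariance of $\mathbf{S}$), whereas the paper's reader must chase an external reference whose proof may carry its own assumptions. Two minor points worth tightening: first, since the paper's convention sets $U(\boldsymbol{0})=\boldsymbol{0}$, if $r$ has an atom at zero your displayed identity for $\mathbf{S}$ acquires a factor $\mathbb{P}(r>0)\leq 1$; this only strengthens the upper bound, so nothing breaks, but the equality as written silently assumes $\mathbb{P}(r=0)=0$. Second, the off-diagonal reduction can be done even more directly by Cauchy--Schwarz on the expectation, $|S_{ij}|=|\mathbb{E}[U_iU_j]|\leq\sqrt{\mathbb{E}[U_i^2]\,\mathbb{E}[U_j^2]}$, avoiding any appeal to the $2\times 2$ minor argument.
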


\begin{lemma}
\label{lem:diffS}
Under Assumption \ref{ass:V0}, \ref{ass:bound of covariance} and \ref{ass: trace}, when $p$ is large enough, $\exists C_{c_0,\eta,T,M} >0 , s.t. $ 
    $$
    \|pS-\Lambda_0\|_{\text{max}} \leq\frac{C_{c_0,\eta,T,M}}{\sqrt{p}}.
    $$
\end{lemma}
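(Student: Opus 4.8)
The plan is to exploit the elliptical representation to turn $\mathbf{S}$ into a deterministic expectation over the uniform sphere, and then to show that replacing the random normalizer by its mean costs only $O(p^{-1/2})$. Writing $\boldsymbol X=\boldsymbol{\mu}+r\mathbf{\Gamma}\boldsymbol{u}$ with $\boldsymbol{u}$ uniform on $\mathbb{S}^{p-1}$ and $r$ independent of $\boldsymbol{u}$, the radius $r$ cancels in the spatial sign $U(\boldsymbol X-\boldsymbol{\mu})$, and since the law of $U(\mathbf{\Gamma}\boldsymbol{u})$ depends on $\mathbf{\Gamma}$ only through $\mathbf{\Sigma}_0=\mathbf{\Gamma}\mathbf{\Gamma}^{T}$ we may take $\mathbf{\Gamma}=\mathbf{\Sigma}_0^{1/2}$, giving
\begin{equation*}
\mathbf{S}=\mathbb{E}\left[\frac{\mathbf{\Sigma}_0^{1/2}\boldsymbol{u}\boldsymbol{u}^{T}\mathbf{\Sigma}_0^{1/2}}{\boldsymbol{u}^{T}\mathbf{\Sigma}_0\boldsymbol{u}}\right].
\end{equation*}
Set $W:=\boldsymbol{u}^{T}\mathbf{\Sigma}_0\boldsymbol{u}$, $\mu_W:=\mathbb{E}[W]=\text{tr}(\mathbf{\Sigma}_0)/p$, and $A_{ij}:=(\mathbf{\Sigma}_0^{1/2}\boldsymbol{u})_i(\mathbf{\Sigma}_0^{1/2}\boldsymbol{u})_j$. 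Using $\mathbb{E}[\boldsymbol{u}\boldsymbol{u}^{T}]=p^{-1}\mathbf{I}_p$ one checks $(\Lambda_0)_{ij}=(p/\mu_W)\,\mathbb{E}[A_{ij}]$, so the entrywise error splits cleanly as
\begin{equation*}
p\,\mathbf{S}_{ij}-(\Lambda_0)_{ij}=p\,\mathbb{E}\!\left[A_{ij}\left(\frac{1}{W}-\frac{1}{\mu_W}\right)\right].
\end{equation*}

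The crucial structural observation is that, because $\|\boldsymbol{u}\|_2=1$, Assumption \ref{ass:bound of covariance} forces $W$ to be bounded both above and below by deterministic constants, $M_1^{-1}\le\lambda_p(\mathbf{\Sigma}_0)\le W\le\lambda_1(\mathbf{\Sigma}_0)\le M_1$, and likewise $\mu_W\in[M_1^{-1},M_1]$; there are therefore no tail events to handle. I would then invoke the exact identity
\begin{equation*}
\frac{1}{W}-\frac{1}{\mu_W}=-\frac{W-\mu_W}{\mu_W^{2}}+\frac{(W-\mu_W)^{2}}{\mu_W^{2}W},
\end{equation*}
which turns the error into a leading covariance term $-\tfrac{p}{\mu_W^{2}}\text{Cov}(A_{ij},W)$ plus a remainder $\tfrac{p}{\mu_W^{2}}\mathbb{E}[A_{ij}(W-\mu_W)^{2}/W]$.

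It then remains to control both pieces through the second- and fourth-order moments of $\boldsymbol{u}$ on the sphere, which are available in closed form via $\mathbb{E}[u_ku_l]=\delta_{kl}/p$ and $\mathbb{E}[u_ku_lu_mu_n]=(\delta_{kl}\delta_{mn}+\delta_{km}\delta_{ln}+\delta_{kn}\delta_{lm})/(p(p+2))$. Contracting these against $\mathbf{\Sigma}_0^{1/2}$ gives, uniformly in $(i,j)$, that $\text{Var}(A_{ij})=O(p^{-2})$ (it reduces to $[(\mathbf{\Sigma}_0)_{ii}(\mathbf{\Sigma}_0)_{jj}+2(\mathbf{\Sigma}_0)_{ij}^{2}]/(p(p+2))-(\mathbf{\Sigma}_0)_{ij}^{2}/p^{2}$), that $\text{Var}(W)=\tfrac{2}{p(p+2)}(\text{tr}(\mathbf{\Sigma}_0^{2})-\text{tr}(\mathbf{\Sigma}_0)^{2}/p)=O(p^{-1})$, and that $\mathbb{E}[(W-\mu_W)^{4}]=O(p^{-2})$; here the eigenvalue bounds of Assumption \ref{ass:bound of covariance} enter again to guarantee $\text{tr}(\mathbf{\Sigma}_0^{2})\le\lambda_1(\mathbf{\Sigma}_0)\,\text{tr}(\mathbf{\Sigma}_0)\asymp p$. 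By Cauchy--Schwarz the leading term satisfies $|\text{Cov}(A_{ij},W)|\le\sqrt{\text{Var}(A_{ij})\,\text{Var}(W)}=O(p^{-3/2})$, so after multiplying by $p$ it contributes $O(p^{-1/2})$; the remainder, bounded using $W\ge M_1^{-1}$ and Cauchy--Schwarz by $M_1\sqrt{\mathbb{E}[A_{ij}^{2}]\,\mathbb{E}[(W-\mu_W)^{4}]}=O(p^{-2})$, contributes only $O(p^{-1})$ and is negligible. Maximizing over $(i,j)$ then yields $\|p\mathbf{S}-\Lambda_0\|_{\max}\le C\,p^{-1/2}$ with $C$ depending only on the constants in the assumptions. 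The main obstacle is essentially computational bookkeeping: carrying out the fourth-moment contractions so that every estimate is uniform in $(i,j)$, tracking the exact powers of $p$, and making sure the two uses of the eigenvalue bounds—pinning $W$ away from $0$ and $\infty$, and controlling the trace quantities $\text{tr}(\mathbf{\Sigma}_0)$ and $\text{tr}(\mathbf{\Sigma}_0^{2})$—are both in place.
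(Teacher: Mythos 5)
Your proposal is correct, but it takes a genuinely different route from the paper: the paper does not prove this lemma at all, it simply cites it as Lemma 6 in Section 5 of \cite{lu2025robust}, whereas you give a complete, self-contained argument. Your decomposition is sound: the representation $\mathbf{S}=\mathbb{E}\bigl[\mathbf{\Sigma}_0^{1/2}\boldsymbol{u}\boldsymbol{u}^{T}\mathbf{\Sigma}_0^{1/2}/W\bigr]$ with $W=\boldsymbol{u}^{T}\mathbf{\Sigma}_0\boldsymbol{u}$ is legitimate (the radius cancels in the sign, and the law of $U(\mathbf{\Gamma}\boldsymbol{u})$ indeed depends on $\mathbf{\Gamma}$ only through $\mathbf{\Gamma}\mathbf{\Gamma}^{T}$, as seen by writing $\boldsymbol{u}=\boldsymbol{x}/\|\boldsymbol{x}\|_2$ with $\boldsymbol{x}\sim N(\boldsymbol{0},\mathbf{I}_p)$); the identity $1/W-1/\mu_W=-(W-\mu_W)/\mu_W^{2}+(W-\mu_W)^{2}/(\mu_W^{2}W)$ is exact; the fourth-moment contractions giving $\mathrm{Var}(A_{ij})=O(p^{-2})$ and $\mathrm{Var}(W)=O(p^{-1})$ are correct; and Assumption \ref{ass:bound of covariance} does pin $W,\mu_W\in[M_1^{-1},M_1]$, so Cauchy--Schwarz closes both terms at $O(p^{-1/2})$. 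Two remarks. First, your claim $\mathbb{E}[(W-\mu_W)^{4}]=O(p^{-2})$ does not follow from fourth moments of $\boldsymbol{u}$ as written --- it is an eighth-moment (or L\'evy concentration-of-measure) statement; however, this does not matter, since the crude bound $\mathbb{E}[(W-\mu_W)^{4}]\leq 4M_1^{2}\,\mathrm{Var}(W)=O(p^{-1})$, available from the deterministic boundedness of $W$, already makes the remainder $p\cdot O(p^{-1})\cdot O(p^{-1/2})=O(p^{-1/2})$, which is all the lemma needs. Second, your argument never uses Assumption \ref{ass:V0}, so it actually proves the lemma under weaker hypotheses than stated; what the self-contained route buys is independence from the external reference and explicit constants depending only on $M_1$, at the cost of the moment bookkeeping you describe.
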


\begin{lemma} 
\label{lem:diffSandmu}
Under Assumption \ref{ass: XYnorm}, for any $\alpha>0$ , with probability over $1-\alpha$
    $$
    \begin{aligned}
    \|\widetilde{\mathbf{S}} - \mathbf{S}\|_{\text{max}} &\leq C \left( \frac{8\lambda_1({\mathbf{\Sigma}_0})}{p\lambda_p({\mathbf{\Sigma}_0})} + \|\mathbf{S}\|_{\text{max}} \right) \sqrt{\frac{\log p + \log(1/\sqrt{\alpha/2})}{n}},\\
    \|\tilde{\boldsymbol{\mu}} - \boldsymbol{\mu}\|_{\text{max}} &\leq \frac{2\lambda_1({\mathbf{\Sigma}_0})}{\zeta_1 p \lambda_p({\mathbf{\Sigma}_0})} \sqrt{\frac{\log(p) + \log(2/\alpha)}{n}}.
    \end{aligned}
    $$

    By Jenson's inequality, we have $\zeta_1 = \mathbb{E}\left(\frac{1}{\|\boldsymbol X - \boldsymbol{\mu}\|_2}\right)\geq \sqrt{\frac{1}{\mathbb{E}\left(\|\boldsymbol X-\boldsymbol{\mu}\|_2^2\right)}}\geq \sqrt{\frac{1}{\mathbb{E}(r^2)\|{\mathbf{\Sigma}_0}\|_2}}$. Therefore, we can further obtain
$$
\|\tilde{\boldsymbol{\mu}} - \boldsymbol{\mu}\|_{\text{max}} \leq \frac{2\lambda_1({\mathbf{\Sigma}_0})}{\sqrt{p}\lambda_p({\mathbf{\Sigma}_0})} \sqrt{\frac{\log(p) + \log(2/\alpha)}{n}}.
$$    
\end{lemma}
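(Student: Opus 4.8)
The plan is to prove the two inequalities by separate arguments that share a common engine: the spatial-sign vectors $U(\boldsymbol{X}_i-\boldsymbol{\mu})$ are uniformly bounded ($\|U(\cdot)\|_2=1$, hence every coordinate lies in $[-1,1]$) and have population mean zero, because $\boldsymbol{\mu}$ is the population spatial median and therefore satisfies the first-order condition $\mathbb{E}\,U(\boldsymbol{X}-\boldsymbol{\mu})=\boldsymbol{0}$. This boundedness, combined with the sub-Gaussian control on the normalized radii $\nu_i=\zeta_1^{-1}\xi_i^{-1}$ from Assumption~\ref{ass: XYnorm}, lets me invoke Hoeffding/Bernstein tail bounds coordinatewise and then union-bound over the $p$ (resp.\ $p^2$) coordinates; this is the source of the $\sqrt{(\log p+\log(1/\alpha))/n}$ factor common to both claims. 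I would also reuse the control $\|\mathbf{S}\|_{\max}=O(p^{-1})$ from Lemma~\ref{lem:Smax} to pin down the scale of the entries.

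For the median bound I exploit the optimality conditions $\frac1n\sum_i U(\boldsymbol{X}_i-\tilde{\boldsymbol{\mu}})=\boldsymbol{0}$ and $\mathbb{E}\,U(\boldsymbol{X}-\boldsymbol{\mu})=\boldsymbol{0}$ and linearize the empirical score map about $\boldsymbol{\mu}$. Using that the Jacobian of $U$ at $\boldsymbol{x}$ is $\|\boldsymbol{x}\|_2^{-1}(\mathbf{I}_p-U(\boldsymbol{x})U(\boldsymbol{x})^T)$, this yields
\[
\tilde{\boldsymbol{\mu}}-\boldsymbol{\mu}\approx \mathbf{A}^{-1}\Big(\tfrac1n\textstyle\sum_i U(\boldsymbol{X}_i-\boldsymbol{\mu})\Big),\qquad \mathbf{A}=\mathbb{E}\big[\|\boldsymbol{X}-\boldsymbol{\mu}\|_2^{-1}(\mathbf{I}_p-UU^T)\big].
\]
The empirical score on the right is a mean-zero, bounded average whose $\ell_\infty$ norm concentrates at the required rate; the deterministic matrix $\mathbf{A}$ has smallest eigenvalue bounded away from the degenerate regime because Assumption~\ref{ass: XYnorm}(2) keeps $\|\mathbf{S}\|_2$ below $1-\psi$, while its overall scale $\asymp\zeta_1$ and its conditioning $\asymp\lambda_1({\mathbf{\Sigma}_0})/\lambda_p({\mathbf{\Sigma}_0})$ supply exactly the prefactor $\frac{\lambda_1({\mathbf{\Sigma}_0})}{\zeta_1 p\lambda_p({\mathbf{\Sigma}_0})}$. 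The Jensen step $\zeta_1\ge(\mathbb{E}(r^2)\|{\mathbf{\Sigma}_0}\|_2)^{-1/2}$ together with $\mathbb{E}(r^2)=p$ then rewrites $1/\zeta_1$ in the displayed $\sqrt{p}$ form.

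For the covariance bound I decompose
\[
\widetilde{\mathbf{S}}-\mathbf{S}=\underbrace{\tfrac1n\textstyle\sum_i\big[U(\boldsymbol{X}_i-\tilde{\boldsymbol{\mu}})U(\boldsymbol{X}_i-\tilde{\boldsymbol{\mu}})^T-U(\boldsymbol{X}_i-\boldsymbol{\mu})U(\boldsymbol{X}_i-\boldsymbol{\mu})^T\big]}_{\text{(I): plug-in of }\tilde{\boldsymbol{\mu}}}+\underbrace{\tfrac1n\textstyle\sum_i U(\boldsymbol{X}_i-\boldsymbol{\mu})U(\boldsymbol{X}_i-\boldsymbol{\mu})^T-\mathbf{S}}_{\text{(II): empirical fluctuation}}.
\]
Term (II) is an average of bounded, mean-zero matrices; the variance proxy of each entry $U_{ij}U_{ik}-S_{jk}$ is of order $S_{jj}S_{kk}\le\|\mathbf{S}\|_{\max}^2$ under the sub-Gaussian coordinate control of Assumption~\ref{ass: XYnorm}(3), so a Bernstein bound with a union bound over the $p^2$ entries produces the $\|\mathbf{S}\|_{\max}\sqrt{(\log p+\log(1/\sqrt{\alpha/2}))/n}$ piece. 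Term (I) is governed by the Lipschitz sensitivity of $\boldsymbol{x}\mapsto U(\boldsymbol{x})U(\boldsymbol{x})^T$, whose Jacobian is $O(\|\boldsymbol{X}_i-\boldsymbol{\mu}\|_2^{-1})$; averaging $\frac1n\sum_i\|\boldsymbol{X}_i-\boldsymbol{\mu}\|_2^{-1}\asymp\zeta_1$ against the median error already controlled in part one cancels the $1/\zeta_1$ there and yields the $\frac{8\lambda_1({\mathbf{\Sigma}_0})}{p\lambda_p({\mathbf{\Sigma}_0})}\sqrt{(\log p+\log(1/\sqrt{\alpha/2}))/n}$ piece. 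Adding the two contributions gives the claim.

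The step I expect to be the \emph{main obstacle} is making term (I) rigorous, and more generally controlling everything uniformly over the $n$ random radii $\xi_i=\|\boldsymbol{X}_i-\boldsymbol{\mu}\|_2$: the Jacobian of $U$ (and of $UU^T$) blows up like $\xi_i^{-1}$, so the plug-in error is dangerous precisely when some $\xi_i$ is unusually small. This is where Assumption~\ref{ass: XYnorm}(3) (sub-Gaussianity of $\nu_i=\zeta_1^{-1}\xi_i^{-1}$) and the moment bounds $\zeta_k\zeta_1^{-k}<\zeta$ of Assumption~\ref{ass: XYnorm}(1) are indispensable: on a high-probability event they let me replace the random denominators by the deterministic scale $\zeta_1^{-1}\asymp\sqrt{p}$ and simultaneously guarantee the invertibility and eigenvalue lower bound for $\mathbf{A}$ needed in the median step. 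Since this lemma is quoted from \cite{lu2025robust}, I would otherwise defer to their detailed verification of these concentration steps.
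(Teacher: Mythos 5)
Your fallback position---``defer to \cite{lu2025robust}''---is, in fact, the paper's entire proof: the paper establishes Lemma~\ref{lem:diffSandmu} by the single sentence that it is Lemmas~5--7 of \cite{lu2025robust}, plus the Jensen step already displayed in the statement (which you reproduce correctly). So judged purely against the paper, you are not missing anything the paper supplies. The problem is with the self-contained sketch you offer in its place, and the gap is concrete: your median argument cannot produce the stated prefactor. Keep $\zeta_1$ symbolic and track scales in your own linearization $\tilde{\boldsymbol{\mu}}-\boldsymbol{\mu}\approx \mathbf{A}^{-1}\bigl(\tfrac1n\sum_i U(\boldsymbol{X}_i-\boldsymbol{\mu})\bigr)$: the score coordinates have variance $S_{jj}\le\|\mathbf{S}\|_{\max}=O(1/p)$ by Lemma~\ref{lem:Smax}, so a union bound gives $\bigl\|\tfrac1n\sum_i U_i\bigr\|_\infty\lesssim\sqrt{\log p/(np)}$, while $\mathbf{A}\preceq\zeta_1\mathbf{I}_p$ forces every coordinate to be amplified by at least $\zeta_1^{-1}$, so your engine yields $\|\tilde{\boldsymbol{\mu}}-\boldsymbol{\mu}\|_{\max}\asymp\zeta_1^{-1}\sqrt{\log p/(np)}$. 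The bound you claim to recover is $\frac{2\lambda_1}{\zeta_1 p\lambda_p}\sqrt{(\log p+\log(2/\alpha))/n}\asymp \zeta_1^{-1}p^{-1}\sqrt{\log p/n}$ for well-conditioned $\mathbf{\Sigma}_0$; the ratio of the two is $\asymp\sqrt{p}$, independently of the value of $\zeta_1$. This is not a removable looseness in your argument: for the spherical Gaussian (an admissible member of this family) the spatial median has per-coordinate error of exact order $n^{-1/2}$, so no correct proof can deliver a $\sqrt{\log p/(np)}$-order $\ell_\infty$ bound, and your assertion that the Jacobian's ``scale $\asymp\zeta_1$ and conditioning $\asymp\lambda_1/\lambda_p$ supply exactly the prefactor'' is precisely the step that fails. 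What your approach genuinely proves is $\|\tilde{\boldsymbol{\mu}}-\boldsymbol{\mu}\|_{\max}\lesssim\sqrt{\log p/n}$, which---notably---is the only form the paper ever uses downstream (e.g.\ $\|\boldsymbol{\delta}-\tilde{\boldsymbol{\delta}}\|_\infty\lesssim\sqrt{\log p/n}$ in the proof of Theorem~\ref{thm:classification error}); the stronger prefactor as transcribed should be treated as a quotation you cannot (and need not) reproduce.

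The covariance half has two further bookkeeping gaps. For your term (I), ``average Lipschitz constant $\asymp\zeta_1$ times the median error'' is too crude by a polynomial factor: $\zeta_1\|\tilde{\boldsymbol{\mu}}-\boldsymbol{\mu}\|_2\asymp p^{-1/2}\cdot\sqrt{p/n}=n^{-1/2}$, which dwarfs the target $\tfrac1p\sqrt{\log p/n}$. The $1/p$ factor only appears if you expand entrywise: writing $\Delta=\tilde{\boldsymbol{\mu}}-\boldsymbol{\mu}$, the $(j,k)$ perturbation is $\approx -\xi_i^{-1}\bigl(U_{ik}\Delta_j+U_{ij}\Delta_k-2U_{ij}U_{ik}U_i^T\Delta\bigr)$, and each summand carries \emph{two} factors of size $O(p^{-1/2})$ (a coordinate of $U_i$ together with $\xi_i^{-1}$ or another coordinate of $U_i$), plus a coordinate of $\Delta$ or the projection $U_i^T\Delta\le\|\Delta\|_2$; nothing in your sketch records this structure, and without it the claimed rate is unreachable. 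For term (II), boundedness of the entries is not sufficient: Bernstein with the almost-sure bound $|U_{ij}U_{ik}|\le 1$ leaves a linear term $\log p/n$, which dominates $\tfrac1p\sqrt{\log p/n}$ as soon as $p^2\log p\gtrsim n$---exactly the high-dimensional regime of interest. You need the sub-exponential control $\|U_{ij}U_{ik}\|_{\psi_1}\lesssim 1/p$, which follows from the sub-Gaussianity of projections of $U$ (the fact the paper quotes from \cite{han2018eca} in its Remark), not from Assumption~\ref{ass: XYnorm}(3), which constrains the radii $\nu_i=\zeta_1^{-1}\xi_i^{-1}$ rather than the directions. With those two repairs your decomposition does prove the first inequality at the stated rate, so the covariance claim is salvageable by your route; the median prefactor is not.
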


\begin{proof}
    Lemmas \ref{lem:Smax}, \ref{lem:diffS} and \ref{lem:diffSandmu} are Lemmas $5,6,7$ in Section 5 of \cite{lu2025robust}.
\end{proof}

\begin{lemma}
    \label{lem:consistency of trace}
    If $\text{Var}(r^2)\lesssim p^2$, $${P}\left(\left|\frac{\widetilde{\text{tr}({\mathbf{\Sigma}_0})}}{\text{tr}({\mathbf{\Sigma}_0})}-1\right|\geq t\right)\lesssim \frac{1}{nt^2}.$$
\end{lemma}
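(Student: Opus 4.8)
The plan is to establish a Chebyshev-type bound, so the entire task reduces to showing that the \emph{relative} variance of the estimator is of order $1/n$, i.e. $\text{Var}(\widetilde{\text{tr}({\mathbf{\Sigma}_0})})\lesssim \text{tr}({\mathbf{\Sigma}_0})^2/n$. Writing $\boldsymbol{W}_i=\boldsymbol{X}_i-\boldsymbol{\mu}=r_i\mathbf{\Gamma}\boldsymbol{u}_i$ with $\mathbf{\Gamma}\mathbf{\Gamma}^T={\mathbf{\Sigma}_0}$, the location $\boldsymbol{\mu}$ cancels in every difference $\boldsymbol{X}_i-\boldsymbol{X}_j$, so the estimator depends only on the centered vectors. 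The first step I would take is a purely algebraic simplification: expanding the summand $(\boldsymbol{W}_i-\boldsymbol{W}_j)^T(\boldsymbol{W}_k-\boldsymbol{W}_j)$ and collecting terms over all ordered distinct triples shows that $\widetilde{\text{tr}({\mathbf{\Sigma}_0})}$ collapses to the familiar unbiased trace estimator
$$\widetilde{\text{tr}({\mathbf{\Sigma}_0})}=\frac{1}{n-1}\sum_{i=1}^{n}\|\boldsymbol{W}_i-\bar{\boldsymbol{W}}\|_2^2=\binom{n}{2}^{-1}\sum_{i<j}\tfrac{1}{2}\|\boldsymbol{W}_i-\boldsymbol{W}_j\|_2^2,$$
where $\bar{\boldsymbol{W}}=\tfrac1n\sum_i\boldsymbol{W}_i$. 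This exhibits it as a U-statistic of degree two with symmetric kernel $h(\boldsymbol{W}_i,\boldsymbol{W}_j)=\tfrac12\|\boldsymbol{W}_i-\boldsymbol{W}_j\|_2^2$, which is what makes the variance tractable.

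Unbiasedness is then immediate, since $\boldsymbol{W}_1,\boldsymbol{W}_2$ are i.i.d. with mean zero and covariance ${\mathbf{\Sigma}_0}$, giving $\mathbb{E}[h]=\tfrac12\mathbb{E}\|\boldsymbol{W}_1-\boldsymbol{W}_2\|_2^2=\text{tr}({\mathbf{\Sigma}_0})$. For the variance I would invoke the Hoeffding decomposition for a degree-two U-statistic, $\text{Var}(\widetilde{\text{tr}({\mathbf{\Sigma}_0})})=\binom{n}{2}^{-1}\{2(n-2)\zeta_1+\zeta_2\}$, where $\zeta_1=\text{Var}(h_1(\boldsymbol{W}_1))$ with $h_1(\boldsymbol{w})=\mathbb{E}[h(\boldsymbol{w},\boldsymbol{W}_2)]=\tfrac12\|\boldsymbol{w}\|_2^2+\tfrac12\text{tr}({\mathbf{\Sigma}_0})$, and $\zeta_2=\text{Var}(h(\boldsymbol{W}_1,\boldsymbol{W}_2))$. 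Hence $\zeta_1=\tfrac14\text{Var}(\|\boldsymbol{W}_1\|_2^2)$, and the leading contribution to the variance is $\tfrac1n\text{Var}(\|\boldsymbol{W}_1\|_2^2)$ up to an $O(\zeta_2/n^2)$ remainder.

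The crux of the argument, and the step I expect to be the main obstacle, is controlling $\text{Var}(\|\boldsymbol{W}_1\|_2^2)$. Writing $\|\boldsymbol{W}_1\|_2^2=r_1^2\,\boldsymbol{u}_1^T\mathbf{\Gamma}^T\mathbf{\Gamma}\boldsymbol{u}_1$ with $r_1$ independent of $\boldsymbol{u}_1$, I would combine independence with the spherical-moment identity $\mathbb{E}[(\boldsymbol{u}^T\boldsymbol{A}\boldsymbol{u})^2]=\{(\text{tr}\,\boldsymbol{A})^2+2\,\text{tr}(\boldsymbol{A}^2)\}/\{p(p+2)\}$, applied to $\boldsymbol{A}=\mathbf{\Gamma}^T\mathbf{\Gamma}$ whose trace is $\text{tr}({\mathbf{\Sigma}_0})$ and whose squared trace is $\text{tr}({\mathbf{\Sigma}_0}^2)$. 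This yields $\text{Var}(\|\boldsymbol{W}_1\|_2^2)=\mathbb{E}[r^4]\,\tfrac{(\text{tr}({\mathbf{\Sigma}_0}))^2+2\,\text{tr}({\mathbf{\Sigma}_0}^2)}{p(p+2)}-(\text{tr}({\mathbf{\Sigma}_0}))^2$. The hypothesis $\text{Var}(r^2)\lesssim p^2$ gives $\mathbb{E}[r^4]=\text{Var}(r^2)+p^2\lesssim p^2$, while Assumptions \ref{ass:bound of covariance} and \ref{ass: trace} give $(\text{tr}({\mathbf{\Sigma}_0}))^2\asymp p^2$ and $\text{tr}({\mathbf{\Sigma}_0}^2)\le\lambda_1({\mathbf{\Sigma}_0})\text{tr}({\mathbf{\Sigma}_0})\lesssim p$, so the fraction is $O(1)$ and therefore $\text{Var}(\|\boldsymbol{W}_1\|_2^2)\lesssim p^2\asymp\text{tr}({\mathbf{\Sigma}_0})^2$. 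A cruder bound $\zeta_2\le\mathbb{E}[h^2]\lesssim\mathbb{E}\|\boldsymbol{W}_1\|_2^4=O(p^2)$ shows the remainder is $O(p^2/n^2)$ and hence negligible. Combining these gives $\text{Var}(\widetilde{\text{tr}({\mathbf{\Sigma}_0})})\lesssim\text{tr}({\mathbf{\Sigma}_0})^2/n$, and Chebyshev's inequality applied to $\widetilde{\text{tr}({\mathbf{\Sigma}_0})}/\text{tr}({\mathbf{\Sigma}_0})$ delivers the claimed $\lesssim 1/(nt^2)$ bound.
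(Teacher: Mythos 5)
Your proof is correct, and it reaches the same Chebyshev-plus-variance-bound conclusion as the paper, but the way you control the variance is genuinely different. The paper works with the triple sum as given: it computes the needed mixed moments ($\mathbb{E}(\boldsymbol{X}_i^T\boldsymbol{X}_k)^2=\text{tr}(\mathbf{\Sigma}_0^2)$ for $i\neq k$, and $\mathbb{E}\|\boldsymbol{X}_i\|_2^4\asymp(\text{tr}\mathbf{\Sigma}_0)^2+\text{tr}(\mathbf{\Sigma}_0^2)$ via the same spherical moments you use, together with $\mathbb{E}[r^4]\lesssim p^2$), and then bounds $\text{Var}(\widetilde{\text{tr}(\mathbf{\Sigma}_0)})$ by expanding the square of the degree-three sum and ``considering the possible combinations'' of terms, a combinatorial count it does not spell out. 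You instead first collapse the estimator algebraically to the classical form $\frac{1}{n-1}\sum_i\|\boldsymbol{X}_i-\bar{\boldsymbol{X}}\|_2^2$, i.e.\ a degree-two U-statistic with kernel $\tfrac12\|\boldsymbol{W}_i-\boldsymbol{W}_j\|_2^2$ (this identity is correct under the all-distinct-triples reading of $\sum_{i\neq j\neq k}$, which is the reading forced by the denominator $n(n-1)(n-2)$ and by unbiasedness), and then invoke the Hoeffding variance formula $\binom{n}{2}^{-1}\{2(n-2)\zeta_1+\zeta_2\}$, which reduces everything to $\text{Var}(\|\boldsymbol{W}_1\|_2^2)$ plus a negligible $O(\zeta_2/n^2)$ remainder. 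What your route buys is rigor and transparency in exactly the step the paper leaves vague: the combinatorial bookkeeping is absorbed into a standard, citable formula, and your spherical-moment identity $\mathbb{E}[(\boldsymbol{u}^T\mathbf{A}\boldsymbol{u})^2]=\{(\text{tr}\mathbf{A})^2+2\,\text{tr}(\mathbf{A}^2)\}/\{p(p+2)\}$ is exact rather than an $\asymp$ statement. What the paper's route buys is self-containedness: it never needs the algebraic collapse or any U-statistic theory, at the cost of a sketchier variance expansion. One cosmetic slip on your side: you call $\text{tr}(\mathbf{\Sigma}_0^2)$ the ``squared trace'' of $\mathbf{\Gamma}^T\mathbf{\Gamma}$; it is the trace of the square, which is what your formula actually uses.
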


\begin{proof}
    Since
    $$
    \widetilde{\text{tr}({\mathbf{\Sigma}_0})}=\frac{\sum_{i\neq j\neq k}({{\boldsymbol{X}_i}}-\boldsymbol{\mu}+\boldsymbol{\mu}-{{\boldsymbol{X}_j}})^T({{\boldsymbol{X}_k}}-\boldsymbol{\mu}+\boldsymbol{\mu}-{{\boldsymbol{X}_j}})}{n(n-1)(n-2)},
    $$     
     without loss of generality, we can assume $\boldsymbol{\mu}=\boldsymbol 0$ . In the meantime, 
     $$
     \begin{aligned}
         \mathbb{E}\left(({{\boldsymbol{X}_i}}-{{\boldsymbol{X}_j}})^T({{\boldsymbol{X}_k}}-{{\boldsymbol{X}_j}})\right)=&\mathbb{E}\left({{\boldsymbol{X}_i}}^T{{\boldsymbol{X}_k}}-{{\boldsymbol{X}_i}}^T{{\boldsymbol{X}_j}}-{{\boldsymbol{X}_j}}^T{{\boldsymbol{X}_k}}+{{\boldsymbol{X}_j}}^T{{\boldsymbol{X}_j}}\right)\\=&\mathbb{E}({{\boldsymbol{X}_j}}^T{{\boldsymbol{X}_j}})=\text{tr}({\mathbf{\Sigma}_0}).
     \end{aligned}
     $$     
     It is straightforward to show that $\widetilde{\text{tr}({\mathbf{\Sigma}_0})}$ is an unbiased estimator. Next we consider $\text{Var}(\widetilde{\text{tr}({\mathbf{\Sigma}_0})})$ . For $i\neq k\neq j$,
     $$
     \begin{aligned}
         \mathbb{E}({{\boldsymbol{X}_i}}^T{{\boldsymbol{X}_k}})^2=&\mathbb{E}\left(\mathbb{E}({{\boldsymbol{X}_i}}^T{{\boldsymbol{X}_k}}{{\boldsymbol{X}_k}}^T{{\boldsymbol{X}_i}}|{{\boldsymbol{X}_k}})\right)\\=&\mathbb{E}\left(\text{tr}({{\boldsymbol{X}_k}}{{\boldsymbol{X}_k}}^T{\mathbf{\Sigma}_0})\right)\\=&\text{tr}({\mathbf{\Sigma}_0}^2),
         \end{aligned}
         $$
         $$
         \begin{aligned}
         \mathbb{E}({{\boldsymbol{X}_i}}^T{{\boldsymbol{X}_i}})^2=&\mathbb{E}\left(r^4\right)\mathbb{E}\left(\boldsymbol{u}^T\Sigma_0\boldsymbol{u}\right)^2&\\=&\left[\text{Var}(r^2)+\left(\mathbb{E}(r^2)\right)^2\right]\mathbb{E}\left(\boldsymbol{u}^T\Sigma_0\boldsymbol{u}\right)^2\\\asymp & p^2\mathbb{E}\left(\boldsymbol{u}^T{\mathbf{\Sigma}_0}\boldsymbol{u}\right)^2,
     \end{aligned}
     $$
    where $\boldsymbol{u}=({u}_1, {u}_2, \cdots,  u_p)$ is uniformly distributed on $\mathbb{S}^{p-1}$. A well known result is that $\mathbb{E}(u_i^4)\asymp\mathbb{E}(u_i^2u_j^2)\asymp\frac{1}{p^2}$. Let ${\mathbf{\Sigma}_0}=(\sigma_{ij})_{p\times p}$ , then  
    $$
    \begin{aligned}
         \mathbb{E}(\boldsymbol{u}^T{\mathbf{\Sigma}_0}\boldsymbol{u})^2=&\mathbb{E}\left(\sum_{i,j}\sigma_{ij}u_iu_j\right)^2\\=&\mathbb{E}\left(\sum_{i,j,l,m}\sigma_{ij}\sigma_{lm}u_iu_ju_lu_m\right)\\=&\sum_{i,l}\sigma_{ii}\sigma_{ll}\mathbb{E}\left(u_i^2u_l^2\right)+\sum_{i,j}\sigma_{i,j}^2\mathbb{E}\left(u_i^2u_j^2\right)+\sum_i\sigma_{ii}^2\mathbb{E}\left(u_i^4\right)\\\asymp &\frac{\left(\text{tr}({\mathbf{\Sigma}_0})\right)^2+\text{tr}({\mathbf{\Sigma}_0}^2)}{p^2}.
    \end{aligned}
    $$
Thus, $\mathbb{E}({{\boldsymbol{X}_i}}^T{{\boldsymbol{X}_i}})^2 \asymp \left(\text{tr}({\mathbf{\Sigma}_0})\right)^2+\text{tr}({\mathbf{\Sigma}_0}^2)$. For other combinations of quadratic terms, the expectation is 0. Therefore, by considering the possible combinations, we can obtain
     $$
     \begin{aligned}
         \text{Var}(\widetilde{\text{tr}({\mathbf{\Sigma}_0})})=&\frac{\mathbb{E}\left(\sum_{i\neq j\neq k}({{\boldsymbol{X}_i}}-{{\boldsymbol{X}_j}})^T({{\boldsymbol{X}_k}}-{{\boldsymbol{X}_j}})\sum_{l\neq m\neq n}(\boldsymbol{X}_l-\boldsymbol{X}_m)^T(\boldsymbol{X}_n-\boldsymbol{X}_m)\right)}{n^2(n-1)^2(n-2)^2}-(\text{tr}({\mathbf{\Sigma}_0}))^2\\=&
         \frac{n^6+O(n^5)}{n^2(n-1)^2(n-2)^2}(\text{tr}({\mathbf{\Sigma}_0}))^2+O\left(\frac{1}{n}\right)\left[\text{tr}({\mathbf{\Sigma}_0}^2)+(\text{tr}({\mathbf{\Sigma}_0}))^2\right]-(\text{tr}({\mathbf{\Sigma}_0}))^2\\\leq&O\left(\frac{1}{n}\right)(\text{tr}({\mathbf{\Sigma}_0}))^2.
     \end{aligned}
     $$
Lastly, by chebyshelve's inequality:
     $$
     {P}\left(\left|\frac{\widetilde{\text{tr}({\mathbf{\Sigma}_0})}}{\text{tr}({\mathbf{\Sigma}_0})}-1\right|\geq t\right)\leq \frac{\text{Var}\left(\frac{\widetilde{\text{tr}({\mathbf{\Sigma}_0})}}{\text{tr}({\mathbf{\Sigma}_0})}\right)}{t^2}\lesssim\frac{1}{nt^2}.
     $$
\end{proof}

\begin{lemma}
\label{lem:diff pS Sigma}
With probability over $1-O\left(\frac{1}{\log p}\right)$, we have
    $$
    \|\mathbf{\tilde {\Sigma}}_0-{\mathbf{\Sigma}_0}\|_{max}\lesssim \sqrt{\frac{1}{p}}+\sqrt{\frac{\log{(p)}}{n}}.
    $$
\end{lemma}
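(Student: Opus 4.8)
The plan is to exploit the multiplicative structure $\tilde{\mathbf{\Sigma}}_0 = \widetilde{\text{tr}(\mathbf{\Sigma}_0)}\,\tilde{\mathbf{S}}$ together with the identity $\mathbf{\Sigma}_0 = \frac{\text{tr}(\mathbf{\Sigma}_0)}{p}\Lambda_0$, which writes both the estimator and the target as a scalar times a matrix at the common scale $p\tilde{\mathbf{S}}$ versus $\Lambda_0$. Abbreviating $\hat t = \widetilde{\text{tr}(\mathbf{\Sigma}_0)}$ and $t = \text{tr}(\mathbf{\Sigma}_0)$, I would first record the decomposition
\begin{equation*}
\tilde{\mathbf{\Sigma}}_0 - \mathbf{\Sigma}_0 = \frac{1}{p}\bigl(\hat t\, p\tilde{\mathbf{S}} - t\,\Lambda_0\bigr) = \frac{1}{p}\Bigl[\hat t\,(p\tilde{\mathbf{S}} - \Lambda_0) + (\hat t - t)\,\Lambda_0\Bigr],
\end{equation*}
so that by the triangle inequality for the entrywise maximum norm,
\begin{equation*}
\|\tilde{\mathbf{\Sigma}}_0 - \mathbf{\Sigma}_0\|_{\max} \leq \frac{\hat t}{p}\,\|p\tilde{\mathbf{S}} - \Lambda_0\|_{\max} + \frac{|\hat t - t|}{p}\,\|\Lambda_0\|_{\max}.
\end{equation*}
The remaining task is to control each of the three scalar and matrix factors appearing on the right.

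For the shape-matrix factor I would split $\|p\tilde{\mathbf{S}} - \Lambda_0\|_{\max} \leq p\,\|\tilde{\mathbf{S}} - \mathbf{S}\|_{\max} + \|p\mathbf{S} - \Lambda_0\|_{\max}$. The second term is $O(p^{-1/2})$ by Lemma \ref{lem:diffS}; for the first, Lemma \ref{lem:diffSandmu} together with $\|\mathbf{S}\|_{\max} = O(p^{-1})$ from Lemma \ref{lem:Smax} and the eigenvalue bounds in Assumption \ref{ass:bound of covariance} gives $p\,\|\tilde{\mathbf{S}} - \mathbf{S}\|_{\max} \lesssim \sqrt{(\log p + \log(1/\alpha))/n}$ on an event of probability at least $1 - \alpha$. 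Hence $\|p\tilde{\mathbf{S}} - \Lambda_0\|_{\max} \lesssim p^{-1/2} + \sqrt{(\log p + \log(1/\alpha))/n}$.

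For the trace factors, Assumption \ref{ass: scalar random variable} ensures $\text{Var}(r^2) \lesssim p^{3/2} \lesssim p^2$, so Lemma \ref{lem:consistency of trace} applies. Taking the threshold $s = \sqrt{\log p / n}$ there yields $P(|\hat t/t - 1| \geq s) \lesssim 1/(ns^2) = 1/\log p$; thus on the complementary event $|\hat t - t| = t\,|\hat t/t - 1| \lesssim p\sqrt{\log p/n}$ and simultaneously $\hat t \lesssim t \asymp p$, using $\text{tr}(\mathbf{\Sigma}_0) \asymp p$ from Assumption \ref{ass: trace}. Finally $\|\Lambda_0\|_{\max} = \frac{p}{t}\,\|\mathbf{\Sigma}_0\|_{\max} \lesssim 1$ by Assumptions \ref{ass:bound of covariance} and \ref{ass: trace}. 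Substituting these three bounds yields
\begin{equation*}
\|\tilde{\mathbf{\Sigma}}_0 - \mathbf{\Sigma}_0\|_{\max} \lesssim \sqrt{\frac{1}{p}} + \sqrt{\frac{\log p + \log(1/\alpha)}{n}} + \sqrt{\frac{\log p}{n}}.
\end{equation*}

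It remains to calibrate the two probabilistic events. I would choose $\alpha \asymp 1/\log p$ in Lemma \ref{lem:diffSandmu}, so that $\log(1/\alpha) = O(\log\log p)$ is dominated by $\log p$ and the bound collapses to $\sqrt{1/p} + \sqrt{\log p/n}$; a union bound over this event and the trace event from Lemma \ref{lem:consistency of trace}, each of probability $O(1/\log p)$, leaves the claimed exceptional probability $O(1/\log p)$ (the contributions of Lemmas \ref{lem:Smax} and \ref{lem:diffS} are deterministic for $p$ large). The main bookkeeping obstacle is precisely this scale matching: because the shape matrix $\Lambda_0$ lives at the $O(1)$ scale while $\mathbf{\Sigma}_0$ has trace of order $p$, every term carries a compensating factor of $p$ (from $\hat t \asymp p$ cancelling the $1/p$ prefactor), and one must check that these cancellations leave the $\sqrt{1/p}$ contribution of Lemma \ref{lem:diffS} intact rather than amplifying it. A secondary point is that the Chebyshev threshold in Lemma \ref{lem:consistency of trace} must be tuned to the same $1/\log p$ rate demanded by the statement, which forces $s \asymp \sqrt{\log p/n}$ and is exactly consistent with the target rate.
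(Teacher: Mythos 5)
Your proof is correct and follows essentially the same route as the paper: the same triangle-inequality split into a shape-estimation term (handled via Lemmas \ref{lem:Smax}, \ref{lem:diffS}, and \ref{lem:diffSandmu} with the prefactor bound $\frac{8\lambda_1(\mathbf{\Sigma}_0)}{p\lambda_p(\mathbf{\Sigma}_0)} + \|\mathbf{S}\|_{\max} \lesssim 1/p$) and a trace-estimation term (handled via Lemma \ref{lem:consistency of trace} with threshold $\sqrt{\log p/n}$, giving the $O(1/\log p)$ exceptional probability). The only cosmetic difference is the algebraic arrangement: you attach the trace error $\hat t - t$ to the deterministic matrix $\mathbf{\Lambda}_0$, whereas the paper attaches it to the random matrix $p\tilde{\mathbf{S}}$ (which then implicitly requires bounding $\|p\tilde{\mathbf{S}}\|_{\max}$); your variant sidesteps that small step but is otherwise the same argument with the same calibration.
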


\begin{proof}
    By Lemma \ref{lem:Smax} and Assumtion \ref{ass:bound of covariance}, we have
    $$
    \frac{8\lambda_1({\mathbf{\Sigma}_0})}{p\lambda_p({\mathbf{\Sigma}_0})} + \|\mathbf{S}\|_{\text{max}} \lesssim \frac{1}{p}.
    $$
    
By Assumption \ref{ass: trace}, $\text{tr}({\mathbf{\Sigma}_0})\asymp p$ . Let $t=\sqrt{\frac{\log p}{n}}$ in Lemma \ref{lem:consistency of trace},  by Lemma \ref{lem:diffS} ,  and triangle inequality, we obtain    $$
    \begin{aligned}
        \|\mathbf{\tilde {\Sigma}}_0-{\mathbf{\Sigma}_0}\|_{\text{max}}\leq& 
        \frac{\text{tr}({\mathbf{\Sigma}_0})}{p}\left(\left\|\left(\frac{\widetilde{\text{tr}({\mathbf{\Sigma}_0})}}{\text{tr}{\mathbf{\Sigma}_0}}-1\right)p\tilde{\mathbf{S}}\right\|_\text{max}+\|p\tilde{\mathbf{S}}-\mathbf{\Lambda}_0\|_\text{max}\right)\\\lesssim&
         \sqrt{\frac{1}{p}}+\sqrt{\frac{\log{(p)}}{n}},
    \end{aligned}
    $$
with probability over $1-O\left(\frac{1}{\log p}\right)$.
\end{proof}

\begin{lemma}
    When $(s\log(ep/s)+\log(1/\alpha))/n \to 0$, with probability at least $1 - 3\alpha$, we have
$$
\|\widetilde{\mathbf{S}} - \mathbf{S}\|_{2,s} \leq C_0 \left( \sup_{\boldsymbol{v}\in\mathbb{S}^{p - 1}} 2 \|\boldsymbol{v}^T U(\boldsymbol{X}-\boldsymbol{\mu})\|_{\psi_2}^2 + \|\mathbf{S}\|_2 \right) \sqrt{\frac{s(3 + \log(p/s))+\log(1/\alpha)}{n}}+C_1 \left( \frac{np}{s} \right)^{-\frac{1}{2}(1 + {\delta})},
$$
for some absolute constants $C_0,C_1 > 0$ and ${\delta} \in (0,1)$.
\end{lemma}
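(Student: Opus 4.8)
The plan is to split the target into an \emph{oracle} part, in which the true spatial median $\boldsymbol{\mu}$ replaces the estimate $\tilde{\boldsymbol{\mu}}$, and a \emph{perturbation} part accounting for the median estimation. Writing $\mathbf{S}_n = \frac{1}{n}\sum_{i=1}^{n} U(X_i-\boldsymbol{\mu})U(X_i-\boldsymbol{\mu})^T$ for the oracle sample spatial-sign covariance and $U_i = U(X_i-\boldsymbol{\mu})$, the triangle inequality for the restricted spectral norm gives
$$\|\widetilde{\mathbf{S}}-\mathbf{S}\|_{2,s} \le \|\mathbf{S}_n-\mathbf{S}\|_{2,s} + \|\widetilde{\mathbf{S}}-\mathbf{S}_n\|_{2,s}.$$
I expect the first summand to generate the sub-Gaussian concentration term $C_0(\cdots)\sqrt{(s(3+\log(p/s))+\log(1/\alpha))/n}$ and the second to generate the polynomial remainder $C_1(np/s)^{-(1+\delta)/2}$, each failure event costing probability $\alpha$ for a total of $1-3\alpha$.

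For the oracle term, write $\mathbf{S}_n-\mathbf{S}=\frac1n\sum_i(U_iU_i^T-\mathbf{S})$ and cover the set of $s$-sparse unit vectors $\Gamma_s:=\{\boldsymbol{u}:\|\boldsymbol{u}\|_2=1,\ \|\boldsymbol{u}\|_0\le s\}$ by an $\epsilon$-net $\mathcal{N}$. Since $\Gamma_s$ is a union of $\binom{p}{s}$ spheres of dimension $s$, one has $\log|\mathcal{N}|\lesssim s\log(ep/s)+s\log(1/\epsilon)$, which is exactly the entropy factor $s(3+\log(p/s))$. For each fixed direction, the summand $(\boldsymbol{u}^TU_i)^2$ is sub-exponential with $\psi_1$-norm at most $\|\boldsymbol{u}^TU_i\|_{\psi_2}^2\le \sup_{\boldsymbol{v}\in\mathbb{S}^{p-1}}\|\boldsymbol{v}^TU(\boldsymbol X-\boldsymbol{\mu})\|_{\psi_2}^2$ and mean $\boldsymbol{u}^T\mathbf{S}\boldsymbol{u}$ of size $\le\|\mathbf{S}\|_2$; a Bernstein bound controls the centred average per direction, and a union bound over $\mathcal{N}$ together with the usual net-to-supremum approximation delivers the first term with variance proxy $\sup_{\boldsymbol{v}}2\|\boldsymbol{v}^TU\|_{\psi_2}^2+\|\mathbf{S}\|_2$ on an event of probability $1-\alpha$.

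For the perturbation term, I would expand the spatial sign about the true median,
$$U(X_i-\tilde{\boldsymbol{\mu}})-U(X_i-\boldsymbol{\mu})=-\xi_i^{-1}(\mathbf{I}_p-U_iU_i^T)(\tilde{\boldsymbol{\mu}}-\boldsymbol{\mu})+R_i,$$
where $\xi_i=\|X_i-\boldsymbol{\mu}\|_2$ and $R_i$ is a second-order remainder, and substitute into $\widetilde{\mathbf{S}}-\mathbf{S}_n$. The leading linear term is controlled by the median-consistency rate of Lemma \ref{lem:diffSandmu} (the $\tfrac1{\sqrt p}$ factor it carries is decisive), together with the sub-Gaussianity of $\nu_i=\zeta_1^{-1}\xi_i^{-1}$ from Assumption \ref{ass: XYnorm} and the eigenvalue bounds of Assumption \ref{ass:bound of covariance}. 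The remainder $\frac1n\sum_iR_i$ is dominated after truncating the weights $\xi_i^{-1}$ at a high quantile, which is legitimate precisely because $\nu_i$ is sub-Gaussian; this truncation is where the exponent $\delta\in(0,1)$ is produced, and reorganising the $\sqrt p$ and entropy factors recasts the resulting lower-order contribution as $(np/s)^{-(1+\delta)/2}$. The two events here (median consistency and the truncation) each cost $\alpha$.

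The main obstacle is twofold. First, one must control the empirical process uniformly over the sparse directions while retaining only the $s\log(p/s)$ entropy rather than the ambient $p$, so that the output of the net-plus-Bernstein step does not leak a spurious $\sqrt{p/n}$ factor. Second, within the perturbation analysis, the remainder $\sum_iR_i$ must be bounded sharply enough to stay strictly below the leading rate; this hinges on the careful truncation of the $\xi_i^{-1}$ weights and the associated moment bookkeeping, and is the most delicate part of the argument.
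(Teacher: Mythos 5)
First, a point of comparison: the paper does not actually prove this lemma --- its entire ``proof'' is the single line ``This is Theorem 3.1 of \cite{feng2024spatial}.'' So your attempt has to be judged as a stand-alone argument rather than against an in-paper derivation. Your overall architecture is the natural one and very likely mirrors the cited proof: split $\|\widetilde{\mathbf{S}}-\mathbf{S}\|_{2,s}$ into an oracle term $\|\mathbf{S}_n-\mathbf{S}\|_{2,s}$ (true median known) and a perturbation term $\|\widetilde{\mathbf{S}}-\mathbf{S}_n\|_{2,s}$. The oracle half of your sketch is sound: an $\epsilon$-net over $s$-sparse unit vectors has entropy of order $s\log(ep/s)$, the variables $(\boldsymbol{u}^TU_i)^2$ are sub-exponential with $\psi_1$-norm at most $\|\boldsymbol{u}^TU_i\|_{\psi_2}^2$ and mean at most $\|\mathbf{S}\|_2$, and Bernstein plus a union bound delivers exactly the first term with the stated variance proxy. (One caveat: the paper's definition of $\|\cdot\|_{2,s}$ takes $\sup_{\|\boldsymbol{u}\|_0\le s}\|\mathbf{M}\boldsymbol{u}\|_2$ with a \emph{dense} output direction, whereas your net controls the sparse quadratic form $\sup|\boldsymbol{u}^T\mathbf{M}\boldsymbol{u}|$; these are not the same object, though the paper itself conflates them later.)

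The genuine gap is in the perturbation half, which is precisely the part that produces the lemma's distinctive remainder $C_1(np/s)^{-(1+\delta)/2}$. You assert that truncating the weights $\xi_i^{-1}$ ``is where the exponent $\delta$ is produced'' and that ``reorganising the $\sqrt{p}$ and entropy factors'' recasts the contribution as $(np/s)^{-(1+\delta)/2}$, but the sketch contains no truncation level, no bound on $\frac1n\sum_iR_i$, and no computation exhibiting that exponent; the hardest step is named rather than executed. Worse, the route you point to does not suffice. Lemma \ref{lem:diffSandmu} controls only $\|\tilde{\boldsymbol{\mu}}-\boldsymbol{\mu}\|_{\text{max}}$, while your leading linear term $-\xi_i^{-1}(\mathbf{I}_p-U_iU_i^T)(\tilde{\boldsymbol{\mu}}-\boldsymbol{\mu})$ requires Euclidean control of $\tilde{\boldsymbol{\mu}}-\boldsymbol{\mu}$; converting max-norm to $\ell_2$ costs a factor $\sqrt{p}$ that exactly cancels the $1/\sqrt{p}$ you call ``decisive,'' and crude term-by-term bounding of the linear term then yields a contribution of order at least $\sqrt{\log p/(np)}$, which dominates \emph{both} terms on the right-hand side of the lemma (it exceeds the oracle term by roughly $\sqrt{p/s}$). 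A correct proof must therefore exploit averaging and cancellation across $i$ in $\frac1n\sum_i\xi_i^{-1}(\mathbf{I}_p-U_iU_i^T)(\tilde{\boldsymbol{\mu}}-\boldsymbol{\mu})U_i^T$ --- i.e., show the perturbation is effectively quadratic, not linear, in the median error --- and that cancellation argument, together with the truncation bookkeeping, is the actual content of Theorem 3.1 of \cite{feng2024spatial}. Your proposal correctly locates the difficulty but does not resolve it.
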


\begin{proof}
    This is Theorem 3.1 of \cite{feng2024spatial}.
\end{proof}

\begin{remark}
    As a special case within Theorem 4.2 in \cite{han2018eca} , when $\frac{\lambda_1({\mathbf{\Sigma}_0})}{\lambda_p({\mathbf{\Sigma}_0})}$ is upper bounded by some positive constant, we have
    $$
    \sup_{\boldsymbol{v}} \|\boldsymbol{v}^T S(\boldsymbol{X})\|_{\psi_2}^2 \asymp \frac{1}{p},
    $$
    where $S(\cdot)$ is the self-normalized operator, with the fact that $U(\boldsymbol{X})\overset{\text{d}}{=}S(\boldsymbol{X})$ when $\boldsymbol X$ follows elliptical distribution with mean 0. Therefore, we have
    \begin{align}
    \label{eq:S restrict norm}
        \|\widetilde{\mathbf{S}} - \mathbf{S}\|_{2,s} \lesssim  \left( \frac{1}{p} + \|\mathbf{S}\|_2 \right) \sqrt{\frac{s(3 + \log(p/s))+\log(1/\alpha)}{n}}+C_1 \left( \frac{np}{s} \right)^{-\frac{1}{2}(1 + {\delta})}.
    \end{align}
\end{remark}

\begin{lemma}
\label{lem:S spec}
    Under Assumption \ref{ass:bound of covariance}, we have 
    $$
    \|\mathbf{S}\|_2\lesssim \frac{1}{p}.
    $$
\end{lemma}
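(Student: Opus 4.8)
The plan is to exploit the elliptical representation to collapse $\|\mathbf{S}\|_2$ into a scalar quadratic form and bound it directly, with no concentration argument needed since $\mathbf{S}$ is a population object. Because $\mathbf{S} = \mathbb{E}\left[U(\boldsymbol{X}-\boldsymbol{\mu})U(\boldsymbol{X}-\boldsymbol{\mu})^T\right]$ is positive semidefinite, its spectral norm equals $\sup_{\|\boldsymbol{v}\|_2 = 1}\boldsymbol{v}^T\mathbf{S}\boldsymbol{v}$, so it suffices to bound this supremum. First I would write $\boldsymbol{X}-\boldsymbol{\mu} = r\mathbf{\Gamma}\boldsymbol{u}$ with $\mathbf{\Sigma}_0 = \mathbf{\Gamma}\mathbf{\Gamma}^T$ and $\boldsymbol{u}$ uniform on $\mathbb{S}^{p-1}$ independent of $r$. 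The crucial observation is that the radius cancels in the spatial sign, $U(\boldsymbol{X}-\boldsymbol{\mu}) = \mathbf{\Gamma}\boldsymbol{u}/\|\mathbf{\Gamma}\boldsymbol{u}\|_2$, so that $\mathbf{S} = \mathbb{E}_{\boldsymbol{u}}\left[\mathbf{\Gamma}\boldsymbol{u}\boldsymbol{u}^T\mathbf{\Gamma}^T/(\boldsymbol{u}^T\mathbf{\Gamma}^T\mathbf{\Gamma}\boldsymbol{u})\right]$ depends on the law of $\boldsymbol{X}$ only through the uniform direction $\boldsymbol{u}$.

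Next, fixing a unit vector $\boldsymbol{v}$, I would estimate $\boldsymbol{v}^T\mathbf{S}\boldsymbol{v} = \mathbb{E}\left[(\boldsymbol{v}^T\mathbf{\Gamma}\boldsymbol{u})^2/(\boldsymbol{u}^T\mathbf{\Gamma}^T\mathbf{\Gamma}\boldsymbol{u})\right]$. For the denominator I would use that the eigenvalues of $\mathbf{\Gamma}^T\mathbf{\Gamma}$ coincide with those of $\mathbf{\Gamma}\mathbf{\Gamma}^T = \mathbf{\Sigma}_0$, so that $\boldsymbol{u}^T\mathbf{\Gamma}^T\mathbf{\Gamma}\boldsymbol{u} \geq \lambda_p(\mathbf{\Sigma}_0)\|\boldsymbol{u}\|_2^2 = \lambda_p(\mathbf{\Sigma}_0) \geq M_1^{-1}$ by Assumption \ref{ass:bound of covariance}. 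Pulling this lower bound out of the expectation gives $\boldsymbol{v}^T\mathbf{S}\boldsymbol{v} \leq M_1\,\mathbb{E}\left[(\boldsymbol{v}^T\mathbf{\Gamma}\boldsymbol{u})^2\right] = M_1\,(\mathbf{\Gamma}^T\boldsymbol{v})^T\mathbb{E}[\boldsymbol{u}\boldsymbol{u}^T](\mathbf{\Gamma}^T\boldsymbol{v})$. Since the uniform law on the sphere satisfies $\mathbb{E}[\boldsymbol{u}\boldsymbol{u}^T] = p^{-1}\mathbf{I}_p$, this equals $\frac{M_1}{p}\|\mathbf{\Gamma}^T\boldsymbol{v}\|_2^2 = \frac{M_1}{p}\boldsymbol{v}^T\mathbf{\Sigma}_0\boldsymbol{v} \leq \frac{M_1}{p}\lambda_1(\mathbf{\Sigma}_0) \leq \frac{M_1^2}{p}$, again by Assumption \ref{ass:bound of covariance}. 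Taking the supremum over unit $\boldsymbol{v}$ yields $\|\mathbf{S}\|_2 \leq M_1^2/p \lesssim 1/p$.

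There is essentially no deep obstacle here once the radius is cancelled; the argument is self-contained and uses only Assumption \ref{ass:bound of covariance}. The one point requiring care is the lower bound on the denominator $\boldsymbol{u}^T\mathbf{\Gamma}^T\mathbf{\Gamma}\boldsymbol{u}$: rather than bounding $\mathbf{\Gamma}^T\mathbf{\Gamma}$ directly, one must invoke that $\mathbf{\Gamma}^T\mathbf{\Gamma}$ and $\mathbf{\Gamma}\mathbf{\Gamma}^T = \mathbf{\Sigma}_0$ share the same spectrum, and take $\mathbf{\Gamma}$ to be $p\times p$ so that $\lambda_p(\mathbf{\Sigma}_0)$ is indeed the relevant smallest eigenvalue. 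I would also record the symmetry identity $\mathbb{E}[\boldsymbol{u}\boldsymbol{u}^T] = p^{-1}\mathbf{I}_p$ (from $\mathbb{E}[u_i^2] = 1/p$ and $\mathbb{E}[u_iu_j] = 0$ for $i\neq j$), which is the step that produces the decisive factor $1/p$.
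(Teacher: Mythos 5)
Your proof is correct, and it reaches the same bound $\|\mathbf{S}\|_2 \leq M_1^2/p$ as the paper, but by a genuinely more self-contained route. The paper's proof imports two external results: the exact spectral identity from \cite{feng2024spatial},
$$
\lambda_j(\mathbf{S}) = \mathbb{E}\left(\frac{\lambda_j(\mathbf{\Sigma}_0)Y_j^2}{\lambda_1(\mathbf{\Sigma}_0)Y_1^2+\cdots+\lambda_p(\mathbf{\Sigma}_0)Y_p^2}\right), \qquad Y_i \overset{i.i.d.}{\sim} N(0,1),
$$
which holds because under ellipticity $\mathbf{S}$ and $\mathbf{\Sigma}_0$ are simultaneously diagonalizable, and then Proposition 2.1 of \cite{han2018eca}, which identifies $Y_j^2/\sum_i Y_i^2 \sim \mathrm{Beta}(\tfrac12,\tfrac{p-1}{2})$ with mean $1/p$. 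After lower-bounding the denominator by $M_1^{-1}\sum_i Y_i^2$, each eigenvalue is bounded by $M_1\lambda_j(\mathbf{\Sigma}_0)/p \lesssim 1/p$. You bypass both external ingredients: positive semidefiniteness plus the variational characterization $\|\mathbf{S}\|_2 = \sup_{\|\boldsymbol{v}\|_2=1}\boldsymbol{v}^T\mathbf{S}\boldsymbol{v}$ means you never need to know the eigenvectors of $\mathbf{S}$, the pointwise bound $\boldsymbol{u}^T\mathbf{\Gamma}^T\mathbf{\Gamma}\boldsymbol{u} \geq \lambda_p(\mathbf{\Sigma}_0) \geq M_1^{-1}$ (valid since $\mathbf{\Gamma}$ is square and invertible, so $\mathbf{\Gamma}^T\mathbf{\Gamma}$ is similar to $\mathbf{\Sigma}_0$ --- a point you rightly flag) replaces the Gaussian-ratio manipulation, and $\mathbb{E}[\boldsymbol{u}\boldsymbol{u}^T]=p^{-1}\mathbf{I}_p$ replaces the Beta-mean fact; these two are of course the same computation in disguise, since $\boldsymbol{u} \overset{d}{=} \boldsymbol{Y}/\|\boldsymbol{Y}\|_2$. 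What the paper's route buys that yours does not is per-eigenvalue information ($\lambda_j(\mathbf{S})$ tracks $\lambda_j(\mathbf{\Sigma}_0)/p$), which can matter in other parts of such arguments; what yours buys is a shorter, fully elementary proof of the stated lemma that relies only on Assumption \ref{ass:bound of covariance} and no citations.
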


\begin{proof}
    By \cite{feng2024spatial}, we obtain the relationship between the eigenvalues of the spatial sign covariance $\mathbf{S}$ and ${\mathbf{\Sigma}_0}$ 
\begin{equation}
\lambda_j(\mathbf{S}) = \mathbb{E}\left(\frac{\lambda_j({{\mathbf{\Sigma}}_0}){{\boldsymbol{Y}_j}}^2}{\lambda_1({{\mathbf{\Sigma}}_0}){\boldsymbol{Y}_1}^2+\cdots+\lambda_p({{\mathbf{\Sigma}}_0}){\boldsymbol{Y}_p}^2}\right),
\end{equation}
where ${\boldsymbol{Y}_1}, \boldsymbol{Y}_2, \cdots, \boldsymbol{Y}_p\overset{i.i.d.}{\sim}N(0,1)$ . Since $\lambda_i({\mathbf{\Sigma}_0})$ are lower bounded by a positive constant $M_1^{-1},$ 
$$
\begin{aligned}
    \lambda_j(\mathbf{S})\leq& M_1\lambda_j({\mathbf{\Sigma}_0})\mathbb{E}\left(\frac{{{\boldsymbol{Y}_j}}^2}{{\boldsymbol{Y}_1}^2+\cdots+{\boldsymbol{Y}_p}^2}\right)\\\lesssim&
    \mathbb{E}\left(\frac{{{\boldsymbol{Y}_j}}^2}{{\boldsymbol{Y}_1}^2+\cdots+{\boldsymbol{Y}_p}^2}\right).
\end{aligned}
$$
As is mentioned in Proposition 2.1 of \cite{han2018eca}, $\frac{{{\boldsymbol{Y}_j}}^2}{{\boldsymbol{Y}_1}^2+\cdots+{\boldsymbol{Y}_p}^2}\sim \text{Beta}(\frac{1}{2}, \frac{p-1}{2})$ with mean $\frac{1}{p},$  we reach the conclusion.
\end{proof}

The following are technical lemmas.

 \begin{lemma}
 \label{lem:sdar8.6}
     Suppose $\boldsymbol{x}, \boldsymbol{y} \in \mathbb{R}^p$. Let $\boldsymbol{h} = \boldsymbol{x} - \boldsymbol{y}$. Denote $\mathcal{S} = \text{supp}(\boldsymbol{y})$ and $s = |\mathcal{S}|$. If $\|\boldsymbol{x}\|_1 \leq \|\boldsymbol{y}\|_1$, then $\boldsymbol{h} \in \Gamma_1(s; p)$, that is,
    \[
    \|\boldsymbol{h}_{\mathcal{S}^c}\|_1 \leq \|\boldsymbol{h}_{\mathcal{S}}\|_1.
    \]
 \end{lemma}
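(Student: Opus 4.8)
The plan is to exploit that $\boldsymbol{y}$ is supported on $\mathcal{S}$ and to decompose every $\ell_1$ norm into its contributions on $\mathcal{S}$ and on $\mathcal{S}^c$. Writing $\boldsymbol{x} = \boldsymbol{y} + \boldsymbol{h}$ and recalling $\boldsymbol{y}_{\mathcal{S}^c} = \boldsymbol{0}$, I would first note that on the complement $(\boldsymbol{y} + \boldsymbol{h})_{\mathcal{S}^c} = \boldsymbol{h}_{\mathcal{S}^c}$, so the coordinates of $\boldsymbol{x}$ outside $\mathcal{S}$ coincide exactly with those of $\boldsymbol{h}$. Hence $\|\boldsymbol{x}_{\mathcal{S}^c}\|_1 = \|\boldsymbol{h}_{\mathcal{S}^c}\|_1$, which is precisely the quantity we want to control.

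Next I would bound the contribution on $\mathcal{S}$ from below using the reverse triangle inequality, giving $\|\boldsymbol{x}_{\mathcal{S}}\|_1 = \|\boldsymbol{y}_{\mathcal{S}} + \boldsymbol{h}_{\mathcal{S}}\|_1 \geq \|\boldsymbol{y}_{\mathcal{S}}\|_1 - \|\boldsymbol{h}_{\mathcal{S}}\|_1$. Because $\boldsymbol{y}$ vanishes off $\mathcal{S}$, we have $\|\boldsymbol{y}_{\mathcal{S}}\|_1 = \|\boldsymbol{y}\|_1$. Adding the two pieces and invoking the hypothesis $\|\boldsymbol{x}\|_1 \leq \|\boldsymbol{y}\|_1$ yields the chain
\[
\|\boldsymbol{y}\|_1 \;\geq\; \|\boldsymbol{x}\|_1 \;=\; \|\boldsymbol{x}_{\mathcal{S}}\|_1 + \|\boldsymbol{x}_{\mathcal{S}^c}\|_1 \;\geq\; \|\boldsymbol{y}\|_1 - \|\boldsymbol{h}_{\mathcal{S}}\|_1 + \|\boldsymbol{h}_{\mathcal{S}^c}\|_1 .
\]
Cancelling $\|\boldsymbol{y}\|_1$ from both ends and rearranging gives $\|\boldsymbol{h}_{\mathcal{S}^c}\|_1 \leq \|\boldsymbol{h}_{\mathcal{S}}\|_1$, which is the claimed cone membership $\boldsymbol{h} \in \Gamma_1(s;p)$.

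This is a standard basic-inequality (cone condition) argument and I do not anticipate any genuine obstacle; the only point requiring care is the correct orientation of the reverse triangle inequality on $\mathcal{S}$ and the bookkeeping ensuring that the $\mathcal{S}^c$ part of $\boldsymbol{x}$ is attributed entirely to $\boldsymbol{h}$, both of which hinge on the single structural fact that $\boldsymbol{y}$ is supported on $\mathcal{S}$.
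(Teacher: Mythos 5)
Your proof is correct: the decomposition $\|\boldsymbol{x}\|_1 = \|\boldsymbol{x}_{\mathcal{S}}\|_1 + \|\boldsymbol{h}_{\mathcal{S}^c}\|_1 \geq \|\boldsymbol{y}\|_1 - \|\boldsymbol{h}_{\mathcal{S}}\|_1 + \|\boldsymbol{h}_{\mathcal{S}^c}\|_1$ followed by cancellation of $\|\boldsymbol{y}\|_1$ is exactly the canonical argument for this cone-condition lemma. The paper itself states Lemma~\ref{lem:sdar8.6} without proof (it is imported as a known technical result from the SDAR literature, as the label suggests), and your argument is the standard one underlying that result, so there is nothing further to reconcile.
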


\begin{lemma}
\label{lem:logandtrace}
    For positive matrix $\boldsymbol X, \boldsymbol Y$,   
$$
\log |\boldsymbol X|\leq \log|\boldsymbol Y|+\text{tr}(Y^{-1}(\boldsymbol X-\boldsymbol Y)).
$$
\end{lemma}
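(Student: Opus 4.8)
The plan is to reduce the matrix inequality to the elementary scalar fact that $\log t \le t - 1$ for every $t > 0$. Since $\boldsymbol{Y}$ is symmetric positive definite it admits a symmetric positive definite square root $\boldsymbol{Y}^{1/2}$, so I would introduce the congruence transform $\boldsymbol{M} = \boldsymbol{Y}^{-1/2}\boldsymbol{X}\boldsymbol{Y}^{-1/2}$, which is again positive definite. Writing $\lambda_1, \dots, \lambda_p > 0$ for the eigenvalues of $\boldsymbol{M}$, I would show that both sides of the claimed inequality become symmetric functions of these eigenvalues, after which the whole statement decouples into $p$ scalar inequalities.

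For the left-hand side, multiplicativity of the determinant gives $\log|\boldsymbol{X}| - \log|\boldsymbol{Y}| = \log|\boldsymbol{Y}^{-1/2}\boldsymbol{X}\boldsymbol{Y}^{-1/2}| = \log|\boldsymbol{M}| = \sum_{i=1}^p \log\lambda_i$. For the right-hand side, the cyclic property of the trace yields $\text{tr}(\boldsymbol{Y}^{-1}(\boldsymbol{X} - \boldsymbol{Y})) = \text{tr}(\boldsymbol{Y}^{-1/2}\boldsymbol{X}\boldsymbol{Y}^{-1/2}) - \text{tr}(\mathbf{I}_p) = \text{tr}(\boldsymbol{M}) - p = \sum_{i=1}^p (\lambda_i - 1)$. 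Hence the lemma is equivalent to $\sum_{i=1}^p \log\lambda_i \le \sum_{i=1}^p (\lambda_i - 1)$, and I would finish by applying $\log t \le t - 1$ termwise to each $\lambda_i$ and summing.

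There is no real obstacle here: the only conceptual step is recognizing that the congruence $\boldsymbol{X} \mapsto \boldsymbol{Y}^{-1/2}\boldsymbol{X}\boldsymbol{Y}^{-1/2}$ simultaneously normalizes $\boldsymbol{Y}$ to the identity and diagonalizes the problem, after which everything collapses to the scalar bound. Equivalently, one may read the same computation as the first-order supporting-hyperplane inequality for the concave map $\boldsymbol{X} \mapsto \log|\boldsymbol{X}|$ on the cone of positive definite matrices, whose gradient at $\boldsymbol{Y}$ equals $\boldsymbol{Y}^{-1}$; the eigenvalue reduction above is precisely the self-contained verification of that concavity estimate, so no external concavity result needs to be invoked.
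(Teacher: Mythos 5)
Your proof is correct and complete. Note, however, that the paper states this lemma without any proof, treating it as a standard fact (it is the first-order supporting-hyperplane inequality for the concave map $\boldsymbol{X} \mapsto \log|\boldsymbol{X}|$ on the positive definite cone), so there is no paper argument to compare yours against; your write-up in effect fills that gap. Your route is the standard self-contained verification: the congruence $\boldsymbol{M} = \boldsymbol{Y}^{-1/2}\boldsymbol{X}\boldsymbol{Y}^{-1/2}$ normalizes $\boldsymbol{Y}$ to the identity, determinant multiplicativity gives $\log|\boldsymbol{X}| - \log|\boldsymbol{Y}| = \sum_i \log\lambda_i$, trace cyclicity gives $\text{tr}(\boldsymbol{Y}^{-1}(\boldsymbol{X}-\boldsymbol{Y})) = \sum_i(\lambda_i - 1)$, and positive definiteness of $\boldsymbol{M}$ guarantees each $\lambda_i > 0$ so the scalar bound $\log t \le t - 1$ applies termwise. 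All steps are justified; nothing is missing.
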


\begin{lemma}
\label{lem:Von Neumann}
(Von Neumann Lemma) Let $\mathbf{E}\in \mathbb{R}^{p\times p}$ with $\|\mathbf{E}\|<1$ , where $\|\cdot\|$ is a consistent matrix norm satisfying $\|\mathbf{I}_p\|=1$ , then $\mathbf{I}_p-\mathbf{E}$ is invertible and $$
\|\left(\mathbf{I}_p-\mathbf{E}\right)^{-1}\|\leq \frac{1}{1-\|\mathbf{E}\|}.
$$
\end{lemma}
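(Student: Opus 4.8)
The plan is to construct the inverse explicitly as a Neumann series and then estimate its norm term by term. First I would introduce the partial sums $S_N = \sum_{k=0}^{N} \mathbf{E}^k$ and argue that they form a Cauchy sequence. Since the norm is consistent, i.e.\ submultiplicative, and satisfies $\|\mathbf{I}_p\| = 1$, we have $\|\mathbf{E}^k\| \leq \|\mathbf{E}\|^k$ for every $k$, and because $\|\mathbf{E}\| < 1$ the geometric tail $\sum_{k > N} \|\mathbf{E}\|^k$ tends to $0$ as $N \to \infty$. By completeness of $\mathbb{R}^{p\times p}$ under any norm, the sequence $S_N$ converges to a limit $S = \sum_{k=0}^{\infty} \mathbf{E}^k$.

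Next I would verify that $S$ is a genuine two-sided inverse of $\mathbf{I}_p - \mathbf{E}$. Using the telescoping identity $(\mathbf{I}_p - \mathbf{E}) S_N = \mathbf{I}_p - \mathbf{E}^{N+1}$ together with $\|\mathbf{E}^{N+1}\| \leq \|\mathbf{E}\|^{N+1} \to 0$, passing to the limit in $N$ yields $(\mathbf{I}_p - \mathbf{E}) S = \mathbf{I}_p$; the identical computation applied on the other side gives $S (\mathbf{I}_p - \mathbf{E}) = \mathbf{I}_p$. Hence $\mathbf{I}_p - \mathbf{E}$ is invertible with $(\mathbf{I}_p - \mathbf{E})^{-1} = S$.

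Finally, for the norm bound I would apply the triangle inequality to the partial sums and pass to the limit, obtaining $\|(\mathbf{I}_p - \mathbf{E})^{-1}\| = \|S\| \leq \sum_{k=0}^{\infty} \|\mathbf{E}^k\| \leq \sum_{k=0}^{\infty} \|\mathbf{E}\|^k = \frac{1}{1 - \|\mathbf{E}\|}$, which is precisely the claimed estimate. This is a classical result, so there is no substantial obstacle; the only points requiring care are the use of submultiplicativity together with $\|\mathbf{I}_p\| = 1$ (which is exactly what validates $\|\mathbf{E}^k\| \leq \|\mathbf{E}\|^k$) and the appeal to completeness to ensure the series limit exists as an actual matrix rather than merely as a formal expression.
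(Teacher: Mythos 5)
Your proof is correct. The paper does not actually prove this lemma at all---it is stated as the classical Von Neumann (Neumann series) result and used as a black box in the proof of Theorem \ref{thm:classification error}---and your argument is precisely the standard proof one would supply: the bound $\|\mathbf{E}^k\|\leq\|\mathbf{E}\|^k$ from submultiplicativity together with $\|\mathbf{I}_p\|=1$, completeness of $\mathbb{R}^{p\times p}$ for convergence of the partial sums, the telescoping identity $(\mathbf{I}_p-\mathbf{E})S_N=\mathbf{I}_p-\mathbf{E}^{N+1}$ for two-sided invertibility, and the triangle inequality plus the geometric series for the bound $\|(\mathbf{I}_p-\mathbf{E})^{-1}\|\leq \frac{1}{1-\|\mathbf{E}\|}$.
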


\begin{lemma}
    \label{lem:spec to infty}
    For a symmetric matrix $\mathbf{M}=(m_{ij})_{p\times p}$ and a positive constant $s$ , 
    $$
    \|\mathbf{M}\|_{2, s}\leq s\|\mathbf{M}\|_{\text{max}}.
    $$
\end{lemma}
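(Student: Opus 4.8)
The plan is to reduce $\|\mathbf{M}\|_{2,s}$ to a bilinear form tested against $s$-sparse unit vectors and then control that form entrywise by $\|\mathbf{M}\|_{\max}$ via H\"older's inequality. Concretely, I would fix an $s$-sparse unit vector $\boldsymbol{u}$ with support $S_u$, $|S_u|\le s$, and write $\|\mathbf{M}\boldsymbol{u}\|_2 = \sup_{\|\boldsymbol{v}\|_2=1}\boldsymbol{v}^T\mathbf{M}\boldsymbol{u}$. The quantity to estimate is then the bilinear form $\boldsymbol{v}^T\mathbf{M}\boldsymbol{u}$, and, in the restricted regime relevant here, the test direction $\boldsymbol{v}$ shares the sparsity budget, so both $\boldsymbol{u}$ and $\boldsymbol{v}$ satisfy $\|\boldsymbol{u}\|_0,\|\boldsymbol{v}\|_0\le s$ and $\|\boldsymbol{u}\|_2=\|\boldsymbol{v}\|_2=1$.

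Next I would expand this form over the two supports, $\boldsymbol{v}^T\mathbf{M}\boldsymbol{u}=\sum_{i\in S_v}\sum_{j\in S_u} v_i\, m_{ij}\, u_j$, and bound every entry by $|m_{ij}|\le\|\mathbf{M}\|_{\max}$. Factoring the resulting double sum gives $|\boldsymbol{v}^T\mathbf{M}\boldsymbol{u}|\le\|\mathbf{M}\|_{\max}\,\|\boldsymbol{u}\|_1\,\|\boldsymbol{v}\|_1$. Finally I would invoke the elementary Cauchy--Schwarz bound on the support, $\|\boldsymbol{w}\|_1\le\sqrt{\|\boldsymbol{w}\|_0}\,\|\boldsymbol{w}\|_2$, applied to both $\boldsymbol{u}$ and $\boldsymbol{v}$, which yields $\|\boldsymbol{u}\|_1\le\sqrt{s}$ and $\|\boldsymbol{v}\|_1\le\sqrt{s}$; the product of these factors produces $s$, so that $|\boldsymbol{v}^T\mathbf{M}\boldsymbol{u}|\le s\|\mathbf{M}\|_{\max}$. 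Taking the supremum over admissible $\boldsymbol{u}$ and $\boldsymbol{v}$ then gives $\|\mathbf{M}\|_{2,s}\le s\|\mathbf{M}\|_{\max}$.

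The H\"older and $\ell_1$--$\ell_2$ bookkeeping is routine and amounts to a one-line computation; the one point that genuinely requires care is the reduction to a doubly $s$-sparse bilinear form. Since $\mathbf{M}\boldsymbol{u}$ is in general not sparse even when $\boldsymbol{u}$ is, the estimate would fail if the output direction were allowed to range over all unit vectors: for $\mathbf{M}=\boldsymbol{1}\boldsymbol{1}^T$ (symmetric) and $\boldsymbol{u}=\boldsymbol{e}_1$ one gets $\|\mathbf{M}\boldsymbol{u}\|_2=\sqrt{p}$, far exceeding $s\|\mathbf{M}\|_{\max}=1$. Hence the inequality should be understood with the sparsity constraint imposed on both the input and the output directions, and making that reading explicit is the main thing to pin down; once it is in place the remainder of the argument is immediate.
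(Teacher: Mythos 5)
Your proof is correct, and its computational core --- expand the form entrywise, bound each entry by $\|\mathbf{M}\|_{\max}$, and use $\|\boldsymbol{w}\|_1\le\sqrt{\|\boldsymbol{w}\|_0}\,\|\boldsymbol{w}\|_2$ on the sparse support --- is exactly the paper's. The difference is which quantity gets bounded, and here you are more careful than the paper. The paper's proof takes an $s$-sparse unit vector $\boldsymbol{v}$ and bounds only the quadratic form $\boldsymbol{v}^T\mathbf{M}\boldsymbol{v}\le\|\mathbf{M}\|_{\max}\|\boldsymbol{v}\|_1^2\le s\|\mathbf{M}\|_{\max}$, then stops; it never reconciles this with the definition $\|\mathbf{M}\|_{2,s}=\sup_{\|\boldsymbol{u}\|_2=1,\,\|\boldsymbol{u}\|_0\le s}\|\mathbf{M}\boldsymbol{u}\|_2$ given in the notation section, under which the supremum $\|\mathbf{M}\boldsymbol{u}\|_2=\sup_{\|\boldsymbol{v}\|_2=1}\boldsymbol{v}^T\mathbf{M}\boldsymbol{u}$ runs over \emph{all} unit output directions. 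You prove the doubly restricted bilinear bound $\sup\{|\boldsymbol{v}^T\mathbf{M}\boldsymbol{u}|:\|\boldsymbol{u}\|_2=\|\boldsymbol{v}\|_2=1,\ \|\boldsymbol{u}\|_0,\|\boldsymbol{v}\|_0\le s\}\le s\|\mathbf{M}\|_{\max}$, which contains the paper's bound as the diagonal case $\boldsymbol{v}=\boldsymbol{u}$, and you correctly observe that this restriction on the output direction is not cosmetic: your example $\mathbf{M}=\boldsymbol{1}\boldsymbol{1}^T$, $\boldsymbol{u}=\boldsymbol{e}_1$, $s=1$ gives $\|\mathbf{M}\boldsymbol{u}\|_2=\sqrt{p}>s\|\mathbf{M}\|_{\max}$, so the lemma is simply false under the literal definition. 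This is a genuine defect in the paper's statement, not in your argument; indeed, where the lemma is invoked (the proof of Theorem \ref{thm:classification error}), the paper itself silently identifies $\|\boldsymbol{\Sigma}_1-\tilde{\boldsymbol{\Sigma}}_1\|_{2,s_1}$ with $\sup_{\|\boldsymbol{u}\|_0\le s_1,\|\boldsymbol{u}\|_2=1}|\boldsymbol{u}^T(\boldsymbol{\Sigma}_1-\tilde{\boldsymbol{\Sigma}}_1)\boldsymbol{u}|$, an identity valid for the restricted quadratic form but not for the norm as defined. In short: both computations deliver the same rate, but your version is the one for which the statement is actually true, and your counterexample documents why the statement must be read (or restated) in the bilinear/quadratic-form sense.
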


\begin{proof}
Let $\boldsymbol{v}=(v_i)_{i=1\cdots p}$ be a vector with $\|\boldsymbol{v}\|_2=1, \|\boldsymbol{v}\|_0\leq s$ . Without loss of generality, assume its non-zero elements are among $v_{k_1}\cdots v_{k_s}$. We have
    $$
    \begin{aligned}
        \boldsymbol{v}^T\mathbf{M}\boldsymbol{v}=&\sum_{i=1}^{p}\sum_{j=1}^{p}v_im_{ij}v_j=
        \sum_{i=k_1}^{k_s}\sum_{j=k_1}^{k_s}v_im_{ij}v_j\\\leq &
        \|\mathbf{M}\|_{\text{max}}\|\boldsymbol{v}\|_1^2\leq
        s\|\mathbf{M}\|_{\text{max}}.
    \end{aligned}
    $$
\end{proof}

\subsection{The proof of Theorem \ref{thm:estimators}}
\label{subsec: proof of 1}
\subsubsection{Auxiliary Lemmas}
\begin{lemma}
\label{lem: Dfeasible} With probability $1-O\left(\frac{1}{\log p}\right)$, we have
    $$
    Vec(\tilde{\mathbf{{D}}}-{\mathbf{D}}) \in \Gamma_1(s_1;p^2).
    $$
\end{lemma}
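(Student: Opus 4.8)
The plan is to obtain the cone membership directly from Lemma~\ref{lem:sdar8.6}, once we verify that the true parameter $\mathbf{D}$ is feasible for the convex program (\ref{eq:SSQDAD}). Setting $\boldsymbol{x}=\text{Vec}(\tilde{\mathbf{D}})$ and $\boldsymbol{y}=\text{Vec}(\mathbf{D})$, if $\mathbf{D}$ satisfies the constraint then, because $\tilde{\mathbf{D}}$ minimizes $\|\text{Vec}(\cdot)\|_1$ over the feasible set, we get $\|\boldsymbol{x}\|_1\leq\|\boldsymbol{y}\|_1$; Lemma~\ref{lem:sdar8.6} applied with $\mathcal{S}=\text{supp}(\text{Vec}(\mathbf{D}))$ and $|\mathcal{S}|\leq s_1$ (Assumption~\ref{ass: sparsity}) then yields $\text{Vec}(\tilde{\mathbf{D}}-\mathbf{D})\in\Gamma_1(s_1;p^2)$. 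Hence the entire task reduces to showing that, with probability $1-O(1/\log p)$,
\begin{equation*}
\Bigl\| \tfrac{1}{2}\tilde{\mathbf{\Sigma}}_1\mathbf{D}\tilde{\mathbf{\Sigma}}_2 + \tfrac{1}{2}\tilde{\mathbf{\Sigma}}_2\mathbf{D}\tilde{\mathbf{\Sigma}}_1 - \tilde{\mathbf{\Sigma}}_1 + \tilde{\mathbf{\Sigma}}_2 \Bigr\|_{\mathrm{max}}\leq \lambda_{1,n}.
\end{equation*}

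The starting point is an exact population identity. Substituting $\mathbf{D}=\mathbf{\Sigma}_2^{-1}-\mathbf{\Sigma}_1^{-1}$ gives $\mathbf{\Sigma}_1\mathbf{D}\mathbf{\Sigma}_2=\mathbf{\Sigma}_2\mathbf{D}\mathbf{\Sigma}_1=\mathbf{\Sigma}_1-\mathbf{\Sigma}_2$, so that
\begin{equation*}
\tfrac{1}{2}\mathbf{\Sigma}_1\mathbf{D}\mathbf{\Sigma}_2 + \tfrac{1}{2}\mathbf{\Sigma}_2\mathbf{D}\mathbf{\Sigma}_1 - \mathbf{\Sigma}_1 + \mathbf{\Sigma}_2 = \mathbf{0}.
\end{equation*}
Writing $\Delta_i=\tilde{\mathbf{\Sigma}}_i-\mathbf{\Sigma}_i$ and subtracting this zero from the empirical residual, the residual decomposes into the linear part $-\Delta_1+\Delta_2$ and the quadratic parts $\tfrac12(\Delta_1\mathbf{D}\mathbf{\Sigma}_2+\mathbf{\Sigma}_1\mathbf{D}\Delta_2+\Delta_1\mathbf{D}\Delta_2)$ together with the analogous terms obtained by interchanging the indices $1$ and $2$.

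It remains to control these terms in the entrywise max norm, which is the main technical point. I would bound a generic entry $e_i^\top\mathbf{A}\mathbf{D}\mathbf{B}e_j=\boldsymbol{a}^\top\mathbf{D}\boldsymbol{b}$, with $\boldsymbol{a}=\mathbf{A}e_i$ and $\boldsymbol{b}=\mathbf{B}e_j$, by the sparsity-exploiting inequality $|\boldsymbol{a}^\top\mathbf{D}\boldsymbol{b}|\leq\|\boldsymbol{a}\|_\infty\|\boldsymbol{b}\|_\infty\|\text{Vec}(\mathbf{D})\|_1$, and then use $\|\text{Vec}(\mathbf{D})\|_1\leq\sqrt{s_1}\,\|\mathbf{D}\|_F\leq\sqrt{s_1}\,M_0$ from Assumptions~\ref{ass: sparsity} and~\ref{ass:boundforD and beta}. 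Taking $(\mathbf{A},\mathbf{B})=(\Delta_1,\mathbf{\Sigma}_2)$, $(\mathbf{\Sigma}_1,\Delta_2)$, etc., and combining $\|\Delta_i\|_{\mathrm{max}}\lesssim K_{n,p}$ (Lemma~\ref{lem:diff pS Sigma}) with $\|\mathbf{\Sigma}_i\|_{\mathrm{max}}\leq M_2$ (Assumption~\ref{ass:bound of covariance}), each cross term is $\lesssim\sqrt{s_1}\,K_{n,p}$, the double-perturbation term $\Delta_1\mathbf{D}\Delta_2$ is higher order ($\lesssim\sqrt{s_1}\,K_{n,p}^2$), and the linear part is $\lesssim K_{n,p}$. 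Summing, the residual is $\lesssim\sqrt{s_1}\,K_{n,p}$ on the event of Lemma~\ref{lem:diff pS Sigma}, which has probability $1-O(1/\log p)$; choosing $c_1$ large enough in $\lambda_{1,n}=c_1\sqrt{s_1}\,K_{n,p}$ makes this at most $\lambda_{1,n}$, so $\mathbf{D}$ is feasible and the argument concludes. The one delicate point is guaranteeing that the \emph{sparsity} of $\mathbf{D}$, rather than its ambient dimension $p$, governs the max-norm of the triple products; this is exactly what the bilinear bound above secures.
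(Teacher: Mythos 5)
Your proposal is correct and follows essentially the same route as the paper: reduce to feasibility of the true $\mathbf{D}$, establish it via the population identity $\tfrac12\mathbf{\Sigma}_1\mathbf{D}\mathbf{\Sigma}_2+\tfrac12\mathbf{\Sigma}_2\mathbf{D}\mathbf{\Sigma}_1-\mathbf{\Sigma}_1+\mathbf{\Sigma}_2=\mathbf{0}$, a perturbation decomposition controlled in max norm through an $\ell_\infty$--$\ell_1$ bound together with $\|\text{Vec}(\mathbf{D})\|_1\leq\sqrt{s_1}M_0$ and Lemma~\ref{lem:diff pS Sigma}, and then invoke optimality of $\tilde{\mathbf{D}}$ plus Lemma~\ref{lem:sdar8.6}. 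The only difference is notational: the paper phrases the same estimates in Kronecker-vectorized form ($\tilde{\mathbf{V}}$, $\widetilde{\boldsymbol{v}_\Sigma}$, with $\|\tilde{\mathbf{V}}-\mathbf{V}\|_{\text{max}}\|\text{Vec}(\mathbf{D})\|_1$), which is exactly your bilinear entrywise bound on the cross terms $\Delta_1\mathbf{D}\mathbf{\Sigma}_2$, $\mathbf{\Sigma}_1\mathbf{D}\Delta_2$, $\Delta_1\mathbf{D}\Delta_2$.
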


\begin{proof}
By Lemma \ref{lem:sdar8.6}. It suffices to prove $\|Vec(\tilde{\mathbf{{D}}})\|_1\leq \|Vec({\mathbf{D}})\|_1$ which follows directly from the fact that ${\mathbf{D}}$ is a feasible solution to problem (\ref{eq:SDARD}). Denote ${\tilde{\mathbf{\Sigma}}_i} = \widetilde{\text{tr}({\mathbf{\Sigma}_i})}\mathbf{\tilde{S}_i}, \mathbf{V} = \frac{1}{2} \mathbf{{\Sigma}}_1 \otimes \mathbf{{\Sigma}}_2 + \frac{1}{2} \mathbf{{\Sigma}}_2 \otimes \mathbf{{\Sigma}}_1$, $\boldsymbol{{v_\Sigma}} = \operatorname{vec}(\mathbf{{\Sigma}}_1) - \operatorname{vec}(\mathbf{{\Sigma}}_2)$ and $\tilde{\mathbf{V}} = \frac{1}{2} \tilde{\mathbf{{\Sigma}}}_1 \otimes \tilde{\mathbf{{\Sigma}}}_2 + \frac{1}{2} \tilde{\mathbf{{\Sigma}}}_2 \otimes \tilde{\mathbf{{\Sigma}}}_1$, $\widetilde{\boldsymbol{v_\Sigma}} = \operatorname{vec}(\tilde{\mathbf{{\Sigma}}}_1) - \operatorname{vec}(\tilde{\mathbf{{\Sigma}}}_2)$. Observe that
$$
\mathbf{V}Vec({\mathbf{D}}) = \boldsymbol{{v_\Sigma}}.
$$
Thus, $$
\begin{aligned}
\|\tilde{\mathbf{V}}Vec({\mathbf{D}}) -\widetilde{\boldsymbol{v_\Sigma}}\|_\infty =&\|\tilde{\mathbf{V}}Vec({\mathbf{D}}) -{\mathbf{V}}Vec({\mathbf{D}})+\boldsymbol{v_\Sigma}-\widetilde{\boldsymbol{v_\Sigma}}\|_\infty\\ \leq& \|(\tilde{\mathbf{V}}-\mathbf{V})Vec({\mathbf{D}})\|_\infty+\|\boldsymbol{v_\Sigma}-\widetilde{\boldsymbol{v_\Sigma}}\|_\infty\\\leq&
\|(\tilde{\mathbf{V}}-\mathbf{V})Vec({\mathbf{D}})\|_\infty+ \|Vec({\mathbf{\Sigma}_1})-Vec({\tilde{\mathbf{\Sigma}}}_1)\|_\infty+\|Vec({\mathbf{\Sigma}_2})-Vec({\tilde{\mathbf{\Sigma}}_2})\|_\infty\\\leq& \|\tilde{\mathbf{V}}-\mathbf{V}\|_{\text{max}}\|Vec({\mathbf{D}})\|_1+\|Vec({\mathbf{\Sigma}_1})-Vec({\tilde{\mathbf{\Sigma}}}_1)\|_\infty+\|Vec({\mathbf{\Sigma}_2})-Vec({\tilde{\mathbf{\Sigma}}_2})\|_\infty\\\leq&\|\tilde{\mathbf{V}}-\mathbf{V}\|_{\text{max}}\sqrt{s_1}M_0+\|Vec({\mathbf{\Sigma}_1})-Vec({\tilde{\mathbf{\Sigma}}}_1)\|_\infty+\|Vec({\mathbf{\Sigma}_2})-Vec({\tilde{\mathbf{\Sigma}}_2})\|_\infty.
\end{aligned}
$$
By Lemma \ref{lem:diff pS Sigma} , we have $\|Vec({\mathbf{\Sigma}_i})-Vec({\tilde{\mathbf{\Sigma}}_i})\|_\infty=\|{\mathbf{\Sigma}_i}-\tilde{\mathbf{\Sigma}_i}\|_{\text{max}} \lesssim \sqrt{\frac{1}{p}}+\sqrt{\frac{\log{(p)}}{n}}$ with probability over $1-O\left(\frac{1}{\log p}\right)$ . As for the first term
$$
\begin{aligned}
\|\tilde{\mathbf{V}}-\mathbf{V}\|_{\text{max}} \leq& \frac{1}{2}\|{\tilde{\mathbf{\Sigma}}}_1\otimes{\tilde{\mathbf{\Sigma}}_2} -{\mathbf{\Sigma}_1} \otimes{\mathbf{\Sigma}_2}\|_{\text{max}} + \frac{1}{2}\|{\tilde{\mathbf{\Sigma}}_2}\otimes{\tilde{\mathbf{\Sigma}}}_1 -{\mathbf{\Sigma}_2} \otimes{\mathbf{\Sigma}_1}\|_{\text{max}} .
\end{aligned}
$$
It suffices to consider
$$
\|{\tilde{\mathbf{\Sigma}}}_1\otimes{\tilde{\mathbf{\Sigma}}_2} -{\mathbf{\Sigma}_1} \otimes{\mathbf{\Sigma}_2}\|_{\text{max}} \le \|{\tilde{\mathbf{\Sigma}}}_1 \otimes ({\tilde{\mathbf{\Sigma}}_2}-{\mathbf{\Sigma}_2})\|_{\text{max}}+ \|({\tilde{\mathbf{\Sigma}}}_1-{\mathbf{\Sigma}_1})\otimes{\tilde{\mathbf{\Sigma}}_2}\|_{\text{max}}.
$$
Since
$$
\begin{aligned}
\|{\tilde{\mathbf{\Sigma}}}_1\otimes({\tilde{\mathbf{\Sigma}}_2} -{\mathbf{\Sigma}_2})\|_{\text{max}}\leq&\|{\tilde{\mathbf{\Sigma}}}_1\|_{\text{max}}\|{\tilde{\mathbf{\Sigma}}_2} -{\mathbf{\Sigma}_2}\|_{\text{max}}\\\leq&(\|\mathbf{{\Sigma}}_1\|_{\text{max}}+\|{\tilde{\mathbf{\Sigma}}}_1 -{\mathbf{\Sigma}_1}\|_{\text{max}})\|{\tilde{\mathbf{\Sigma}}_2} -{\mathbf{\Sigma}_2}\|_{\text{max}},
\end{aligned}
$$
 $\|\tilde{\mathbf{V}}-\mathbf{V}\|_{\text{max}}\lesssim \|{\tilde{\mathbf{\Sigma}}_2} -{\mathbf{\Sigma}_2}\|_{\text{max}}\lesssim \sqrt{\frac{1}{p}}+\sqrt{\frac{\log{(p)}}{n}}$. Therefore with $\lambda_{1,n}=c_1\sqrt{s_1}\left(\sqrt{\frac{1}{p}}+\sqrt{\frac{\log{(p)}}{n}}\right)$ , where $c_1$ is a large enough constant, we have $\|\tilde{\mathbf{V}}Vec({\mathbf{D}}) -\widetilde{\boldsymbol{v_\Sigma}}\|_\infty \leq \lambda_{1,n}$, which leads to $Vec(\tilde{\mathbf{{D}}}-{\mathbf{D}}) \in \Gamma_1(s_1;p^2)$. 
\end{proof}

Through simple computations, we obtain a straightforward corollary derived from Lemma \ref{lem: Dfeasible} :
\begin{align}
\label{eq:norm D}
    \|\text{Vec}(\tilde{\mathbf{{D}}}-\text{Vec}({\mathbf{D}}))\|_1\leq 2\sqrt{s_1}\|\text{Vec}(\tilde{\mathbf{{D}}}-\text{Vec}({\mathbf{D}}))\|_2.
\end{align}
Through an entirely analogous process, with $\lambda_{2,n}=c_2\sqrt{s_2}\left(\sqrt{\frac{1}{p}}+\sqrt{\frac{\log{(p)}}{n}}\right)$, the following inequality also holds with probability of $1-O\left(\frac{1}{\log p} \right)$,
\begin{align}
\label{eq:norm beta}
    \|\tilde {\boldsymbol{\beta}}-{\boldsymbol{\beta}}\|_1\leq 2\sqrt{s_2}\|\tilde{{\boldsymbol{\beta}}}-{\boldsymbol{\beta}}\|_2.
\end{align}

\subsubsection{Main proof of (\ref{eq:rateD})}
\begin{proof}
With $\left\|\mathbf{V}^{-1}\right\|_{2} = \left\|{\mathbf{\Omega}_1} \otimes  {\mathbf{\Omega}_2}\right\|_{2} = \left\| {\mathbf{\Omega}_1}\right\|_{2} \cdot \left\| {\mathbf{\Omega}_2}\right\|_{2} \leq M_{1}^{2}$, consider $\|{\mathbf{D}}-\tilde {\mathbf{D}}\|_F$.
$$
\begin{aligned}
\|{\mathbf{D}} - \tilde{\mathbf{{D}}}\|_F^2 =& \|\operatorname{vec}(\hat{{\mathbf{D}}}) - \operatorname{vec}({\mathbf{D}})\|_2^2 \leq M_1^2\lambda_{min}(\mathbf{V})\|\operatorname{vec}(\hat{{\mathbf{D}}}) - \operatorname{vec}({\mathbf{D}})\|_2^2 \\\le&M_1^2|(Vec(\tilde{\mathbf{{D}}}) - Vec({\mathbf{D}}))^TV(Vec(\tilde{\mathbf{{D}}}) - Vec({\mathbf{D}}))|\\\lesssim&|(Vec(\tilde{\mathbf{{D}}}) - Vec({\mathbf{D}}))^TV(Vec(\tilde{\mathbf{{D}}}) - Vec({\mathbf{D}}))|\\=&
|(Vec(\tilde{\mathbf{{D}}}) - Vec({\mathbf{D}}))^T(VVec(\tilde{\mathbf{{D}}}) - v_\Sigma)|\\=&
|(Vec(\tilde{\mathbf{{D}}}) - Vec({\mathbf{D}}))^T((\mathbf{V}-\tilde{\mathbf{V}})Vec(\tilde{\mathbf{{D}}}) +(\tilde{\mathbf{V}}Vec(\tilde{\mathbf{{D}}}))-\widetilde{\boldsymbol{v_\Sigma}})+(\widetilde{\boldsymbol{v_\Sigma}}-\boldsymbol{ v_\Sigma}))|\\\le& |(Vec(\tilde{\mathbf{{D}}}) - Vec({\mathbf{D}}))^T(\mathbf{V}-\tilde{\mathbf{V}})Vec(\tilde{\mathbf{{D}}})| +|(Vec(\tilde{\mathbf{{D}}}) - Vec({\mathbf{D}}))^T(\tilde{\mathbf{V}}Vec(\tilde{\mathbf{{D}}}))-\widetilde{\boldsymbol{v_\Sigma}})|+\\&|(Vec(\tilde{\mathbf{{D}}}) - Vec({\mathbf{D}}))^T(\widetilde{\boldsymbol{v_\Sigma}}-\boldsymbol{ v_\Sigma}))|.
\end{aligned}
$$
By triangle inequality, $$
\begin{aligned}
    \|{\mathbf{D}}-\tilde {\mathbf{D}}\|_F^2\leq&
\|(Vec(\tilde{\mathbf{{D}}}) - Vec({\mathbf{D}}))\|_1\|(\mathbf{V}-\tilde{\mathbf{V}})Vec(\tilde{\mathbf{{D}}})\|_\infty +
\|(Vec(\tilde{\mathbf{{D}}}) - Vec({\mathbf{D}}))\|_1\|(\tilde{\mathbf{V}}Vec(\tilde{\mathbf{{D}}}))-\widetilde{v_\Sigma})\|_\infty+\\&
\|(Vec(\tilde{\mathbf{{D}}}) - Vec({\mathbf{D}}))\|_1\|(\widetilde{\boldsymbol{v_\Sigma}}-\boldsymbol{ v_\Sigma}))\|_\infty\\
\lesssim&\|(Vec(\tilde{\mathbf{{D}}}) - Vec({\mathbf{D}}))\|_2\sqrt{s_1}\|(\mathbf{V}-\tilde{\mathbf{V}})Vec(\tilde{\mathbf{{D}}})\|_\infty +
\|(Vec(\tilde{\mathbf{{D}}}) - Vec({\mathbf{D}}))\|_2\sqrt{s_1}\|(\tilde{\mathbf{V}}Vec(\tilde{\mathbf{{D}}}))-\widetilde{\boldsymbol{v_\Sigma}})\|_\infty+
\\&\|(Vec(\tilde{\mathbf{{D}}}) - Vec({\mathbf{D}}))\|_2\sqrt{s_1}\|(\widetilde{\boldsymbol{v_\Sigma}}-\boldsymbol{ v_\Sigma})\|_\infty.
\end{aligned}
$$
The last inequality uses (\ref{eq:norm D}). For the last two terms,
$$
\|(\tilde{\mathbf{V}}Vec(\tilde{\mathbf{{D}}}))-\widetilde{\boldsymbol{v_\Sigma}})\|_\infty \le \lambda_{1,n}\lesssim \sqrt{s_1}\left(\sqrt{\frac{1}{p}}+\sqrt{\frac{\log{(p)}}{n}}\right),
$$
$$
\|(\widetilde{\boldsymbol{v_\Sigma}}-\boldsymbol{ v_\Sigma}))\|_\infty\leq \|Vec({\mathbf{\Sigma}_1})-Vec({\tilde{\mathbf{\Sigma}}}_1)\|_\infty+\|Vec({\mathbf{\Sigma}_2})-Vec({\tilde{\mathbf{\Sigma}}_2})\|_\infty\lesssim \left(\sqrt{\frac{1}{p}}+\sqrt{\frac{\log{(p)}}{n}}\right).
$$
For the first term,
$$
\begin{aligned}
\|(\mathbf{V}-\tilde{\mathbf{V}})Vec(\tilde{\mathbf{{D}}})\|_\infty \leq& \|(\mathbf{V}-\tilde{\mathbf{V}})(Vec(\tilde{\mathbf{{D}}})-Vec({\mathbf{D}}))\|_\infty+\|(\tilde{\mathbf{V}}-\mathbf{V})Vec({\mathbf{D}})\|_\infty\\
\leq&\|(\mathbf{V}-\tilde{\mathbf{V}})\|_\infty\|(Vec(\tilde{\mathbf{{D}}})-Vec({\mathbf{D}}))\|_1+\|(\tilde{\mathbf{V}}-\mathbf{V})\|_\text{max} \sqrt{s_1}M_0\\\leq&\sqrt{s_1}\left(\sqrt{\frac{1}{p}}+\sqrt{\frac{\log{(p)}}{n}}\right)\|Vec(\tilde{\mathbf{{D}}})-Vec({\mathbf{D}}))\|_2+\sqrt{s_1}\left(\sqrt{\frac{1}{p}}+\sqrt{\frac{\log{(p)}}{n}}\right) M_0.
\end{aligned}
$$
Therefore $\|\tilde{\mathbf{{D}}}-{\mathbf{D}}\|_F\lesssim s_1\left(\sqrt{\frac{1}{p}}+\sqrt{\frac{\log{(p)}}{n}}\right)$ with probability over $1-O\left(\frac{1}{\log p}\right)$.

\subsubsection{Main proof of (\ref{eq:ratebeta})}

The proof of $\|\tilde{{\boldsymbol{\beta}}}-{\boldsymbol{\beta}}\|_2$ follows the same process. By Assumption \ref{ass:bound of covariance}, we have
$$
\begin{aligned}
\|{\boldsymbol{\beta}}-\tilde{{\boldsymbol{\beta}}}\|_2^2\lesssim&|({\boldsymbol{\beta}}-\tilde{{\boldsymbol{\beta}}})^T{\mathbf{\Sigma}_2}({\boldsymbol{\beta}}-\tilde{{\boldsymbol{\beta}}})|&\quad \\\leq &\left|(\tilde{{\boldsymbol{\beta}}} - {\boldsymbol{\beta}})^\top ({\tilde{\mathbf{\Sigma}}_2} \tilde{{\boldsymbol{\beta}}} - \tilde{\boldsymbol{\delta}})\right| + \left|(\tilde{{\boldsymbol{\beta}}} - {\boldsymbol{\beta}})^\top ({\tilde{\mathbf{\Sigma}}_2} - {\mathbf{\Sigma}_2}) \tilde{{\boldsymbol{\beta}}})\right| + \left|(\tilde{{\boldsymbol{\beta}}} - {\boldsymbol{\beta}})^\top (\boldsymbol{\delta} - \tilde{\boldsymbol{\delta}})\right| \\
\leq& \sqrt{s_2}\|\tilde{{\boldsymbol{\beta}}}-{\boldsymbol{\beta}}\|_2(\|{\tilde{\mathbf{\Sigma}}_2}\tilde{{\boldsymbol{\beta}}}-\tilde{\boldsymbol{\delta}}\|_\infty+\|({\tilde{\mathbf{\Sigma}}_2}-{\mathbf{\Sigma}_2})\tilde{{\boldsymbol{\beta}}}\|_\infty+\|\boldsymbol{\delta} - \tilde{\boldsymbol{\delta}}\|_\infty)\\\le&\sqrt{s_2}\|\tilde{{\boldsymbol{\beta}}}-{\boldsymbol{\beta}}\|_2\left(\lambda_{2,n}+\|({\tilde{\mathbf{\Sigma}}_2}-{\mathbf{\Sigma}_2}){{\boldsymbol{\beta}}}\|_\infty+\|({\tilde{\mathbf{\Sigma}}_2}-{\mathbf{\Sigma}_2})(\tilde{{\boldsymbol{\beta}}}-{\boldsymbol{\beta}})\|_\infty+\|\boldsymbol{\delta} - \tilde{\boldsymbol{\delta}}\|_\infty\right).
\end{aligned}
$$
By Lemma \ref{lem:diffSandmu}, $\|\boldsymbol{\delta}-\tilde{\boldsymbol{\delta}}\|_\infty\lesssim \sqrt{\frac{\log p}{n}}$ Thus $\|{\boldsymbol{\beta}}-\tilde{{\boldsymbol{\beta}}}\|_2\lesssim s_2 \left(\sqrt{\frac{1}{p}}+\sqrt{\frac{\log{(p)}}{n}}\right)$ , with probability of $1-O\left(\frac{1}{\log p}\right)$.
\end{proof}

\subsection{The proof of Theorem \ref{thm:classification error}}
\label{subsec:proof of 2}
\begin{proof}
Given $\pi_2=\pi_1=\frac{1}{2}$ , we first simplify the excess risk $R(G_{\widetilde{Q}}) - R(G_Q)$ .
$$
\begin{aligned}
R(G_{\widetilde{Q}})-R(G_Q)=&\left(\pi_1-\int_{\widetilde{Q}(\boldsymbol{z})>0}\pi_1f_1(\boldsymbol{z})\,dz+\pi_2-\int_{\widetilde{Q}(\boldsymbol{z})\leq0}\pi_2f_2(\boldsymbol{z})\,dz\right)-\\&\left(\pi_1-\int_{{Q}(\boldsymbol{z})>0}\pi_1f_1(\boldsymbol{z})\,dz+\pi_2-\int_{{Q}(\boldsymbol{z})\leq0}\pi_2f_2(\boldsymbol{z})\,dz\right)\\=&
\int_{{Q}(\boldsymbol{z})>0}\pi_1f_1(\boldsymbol{z})\,dz+\int_{{Q}(\boldsymbol{z})\leq0}\pi_2f_2(\boldsymbol{z})\,dz-\int_{\widetilde{Q}(\boldsymbol{z})>0}\pi_1f_1(\boldsymbol{z})\,dz-\int_{\widetilde{Q}(\boldsymbol{z})\leq0}\pi_2f_2(\boldsymbol{z})\,dz\\=&
\int_{{Q}(\boldsymbol{z})>0}\pi_1f_1(\boldsymbol{z})\,dz+1-\int_{{Q}(\boldsymbol{z})>0}\pi_2f_2(\boldsymbol{z})\,dz-\int_{\widetilde{Q}(\boldsymbol{z})>0}\pi_1f_1(\boldsymbol{z})\,dz-1+\int_{\widetilde{Q}(\boldsymbol{z})>0}\pi_2f_2(\boldsymbol{z})\,dz\\=&
\int_{{Q}(\boldsymbol{z})>0}\pi_1f_1(\boldsymbol{z})-\pi_2f_2(\boldsymbol{z})\,\,dz-\int_{\widetilde{Q}(\boldsymbol{z})>0}\pi_1f_1(\boldsymbol{z})-\pi_2f_2(\boldsymbol{z})\,\,dz\\=&
\int_{{Q}(\boldsymbol{z})>0}\pi_1f_1(\boldsymbol{z})-\pi_2f_2(\boldsymbol{z})\,\,dz-\int_{\widetilde{Q}(\boldsymbol{z})>0}\pi_1f_1(\boldsymbol{z})-\pi_2f_2(\boldsymbol{z})\,\,dz\\=&
\int_{{Q}(\boldsymbol{z})>0}\pi_1f_1(\boldsymbol{z})-\pi_2f_2(\boldsymbol{z})\,\,dz+\int_{\widetilde{Q}(\boldsymbol{z})\leq0}\pi_1f_1(\boldsymbol{z})-\pi_2f_2(\boldsymbol{z})\,\,dz-(\pi_1-\pi_2).
\end{aligned}
$$
Therefore, 
$$
\begin{aligned}
R(G_{\widetilde{Q}})-R(G_Q)=&\int_{Q(\boldsymbol{z})>0,\widetilde{Q}(\boldsymbol{z})\le0}\pi_1f_1(\boldsymbol{z})-\pi_2f_2(\boldsymbol{z})\,dz\\=&\int_{Q(\boldsymbol{z})>0,\widetilde{Q}(\boldsymbol{z})\le0}\pi_1f_1(\boldsymbol{z})\left(-\frac{\pi_1f_1(\boldsymbol{z})}{\pi_2f_2(\boldsymbol{z})}+1\right)\,dz\\=&
\int_{Q(\boldsymbol{z})>0,Q(\boldsymbol{z})\leq Q(\boldsymbol{z})-\widetilde{Q}(\boldsymbol{z})}\pi_1f_1(\boldsymbol{z})\left(-e^{(\log(f_1(\boldsymbol{z}))-\log(f_2(\boldsymbol{z}))+\log(\frac{\pi_1}{\pi_2}))}+1\right)\,dz.
\end{aligned}
$$
Let
$$
\begin{aligned}
&\log(f_1(\boldsymbol{z}))-\log(f_2(\boldsymbol{z}))\\=&\log\left(\frac{g((\boldsymbol{z}-{\boldsymbol{\mu}_2})^T{\mathbf{\Sigma}_1}^{-1}(\boldsymbol{z}-{\boldsymbol{\mu}_2}))}{g((\boldsymbol{z}-{\boldsymbol{\mu}_2})^T{\mathbf{\Sigma}_2}^{-1}(\boldsymbol{z}-{\boldsymbol{\mu}_2}))}\right)-\frac{1}{2}\log\frac{|{\mathbf{\Sigma}_1}|}{|{\mathbf{\Sigma}_2}|}\\:=&\frac{1}{2}Q_E(\boldsymbol{z}).
\end{aligned}
$$   
Then,
\begin{align}
R(G_{\widetilde{Q}})-R(G_Q)=&\int_{Q(\boldsymbol{z})>0,Q(\boldsymbol{z})\leq Q(\boldsymbol{z})-\widetilde{Q}(\boldsymbol{z})}\pi_1f_1(\boldsymbol{z})\left(-e^{\frac{Q_E(\boldsymbol{z})}{2}}+1\right)\,dz\\=&
\label{eq: excess risk}
\frac{1}{2}\mathbb{E}_{\boldsymbol{z}\sim f_1}[(1-e^{\frac{Q_E(\boldsymbol{z})}{2}})\mathbbm{1}\{{Q(\boldsymbol{z})>0,Q(\boldsymbol{z})\leq Q(\boldsymbol{z})-\widetilde{Q}(\boldsymbol{z})}\}].
\end{align}

Let $M(\boldsymbol{z}) =Q(\boldsymbol{z})-\widetilde{Q}(\boldsymbol{z})$ . We next consider the tail probability of $M(\boldsymbol{z})$ when $\boldsymbol{z}\sim f_1$ .We can first rewrite the QDA rule in (\ref{eq:oracle QDA}) as follows :
$$
\begin{aligned}
Q(\boldsymbol{z})=&(\boldsymbol{z} - {\boldsymbol{\mu}_1})^TD(\boldsymbol{z} - {\boldsymbol{\mu}_1}) - 2 {\boldsymbol{\beta}}^T(\boldsymbol{z} - \bar{\boldsymbol{\mu}}) -  \log(|{\mathbf{D}}{\mathbf{\Sigma}_1}+\mathbf{I}_p|))\\=&
(\boldsymbol{z} - {\boldsymbol{\mu}_1})^TD(\boldsymbol{z} - {\boldsymbol{\mu}_1}) - 2 {\boldsymbol{\beta}}^T(\boldsymbol{z} - {\boldsymbol{\mu}_1}) +{\boldsymbol{\beta}}^T({\boldsymbol{\mu}_2}-{\boldsymbol{\mu}_1})-  \log(|{\mathbf{D}}{\mathbf{\Sigma}_1}+\mathbf{I}_p|)).
\end{aligned}
$$
Consider the const term first. With probability at least $1-O\left(\frac{1}{\log p}\right)$, we have
\begin{align*}
&\left|{\boldsymbol{\beta}}^{\top}({\boldsymbol{\mu}_2}-{\boldsymbol{\mu}_1})-\tilde{{\boldsymbol{\beta}}}^{\top}(\tilde{\boldsymbol{\mu}}_{2}-\tilde{\boldsymbol{\mu}}_{1})\right| \\
&\leq \left|\tilde{{\boldsymbol{\beta}}}^{\top}({\boldsymbol{\mu}_2}-{\boldsymbol{\mu}_1}-\tilde{\boldsymbol{\mu}}_{2}+\tilde{\boldsymbol{\mu}}_{1})\right| + \left\|(\tilde{{\boldsymbol{\beta}}}-{\boldsymbol{\beta}})^{\top}({\boldsymbol{\mu}_2}-{\boldsymbol{\mu}_1})\right\|_{2} \\
&\leq \|\tilde{{\boldsymbol{\beta}}}\|_{1} \cdot \|{\boldsymbol{\mu}_2}-{\boldsymbol{\mu}_1}-\tilde{\boldsymbol{\mu}}_{2}+\tilde{\boldsymbol{\mu}}_{1}\|_{\infty} + \|\tilde{{\boldsymbol{\beta}}}-{\boldsymbol{\beta}}\|_{2}\|{\boldsymbol{\mu}_2}-{\boldsymbol{\mu}_1}\|_{2} \\
&\leq \|{\boldsymbol{\beta}}\|_{1} \cdot \|{\boldsymbol{\mu}_2}-{\boldsymbol{\mu}_1}-\tilde{\boldsymbol{\mu}}_{2}+\tilde{\boldsymbol{\mu}}_{1}\|_{\infty} + \|\tilde{{\boldsymbol{\beta}}}-{\boldsymbol{\beta}}\|_{2}\|{\boldsymbol{\mu}_2}-{\boldsymbol{\mu}_1}\|_{2} \\
&\leq \sqrt{s_{2}}\|{\boldsymbol{\beta}}\|_{2} \cdot \|{\boldsymbol{\mu}_2}-{\boldsymbol{\mu}_1}-\tilde{\boldsymbol{\mu}}_{2}+\tilde{\boldsymbol{\mu}}_{1}\|_{\infty} + \|\tilde{{\boldsymbol{\beta}}}-{\boldsymbol{\beta}}\|_{2}\|{\boldsymbol{\mu}_2}-{\boldsymbol{\mu}_1}\|_{2} \lesssim s_2\left(\sqrt{\frac{1}{p}}+\sqrt{\frac{\log{(p)}}{n}}\right).
\end{align*}
Next, we consider $\log|\tilde{\mathbf{{D}}}{\tilde{\mathbf{\Sigma}}}_1 + \mathbf{I}_p| - \log|{{\mathbf{D}}}\mathbf{\Sigma}_1 + \mathbf{I}_p|$. Since $({\mathbf{D}}{\mathbf{\Sigma}_1} + \mathbf{I}_p)^{-1} =  {{\mathbf{\Omega}_1}} {\mathbf{\Sigma}_2} = ( {{\mathbf{\Omega}_2}} - {\mathbf{D}}){\mathbf{\Sigma}_2} = \mathbf{I}_p - {\mathbf{D}}{\mathbf{\Sigma}_2}$. By Lemma \ref{lem:logandtrace} ,
$$
\begin{aligned}
&\log|\tilde{\mathbf{{D}}}{\tilde{\mathbf{\Sigma}}}_1 + \mathbf{I}_p| - \log|{{\mathbf{D}}}\mathbf{{\Sigma}}_1 + \mathbf{I}_p| \\
&\leq \text{tr}(({\mathbf{D}}{\mathbf{\Sigma}_1} + \mathbf{I}_p)^{-1}(\tilde{\mathbf{{D}}}{\tilde{\mathbf{\Sigma}}}_1 - {{\mathbf{D}}}\mathbf{{\Sigma}}_1)) \\
&= \text{tr}((-{\mathbf{D}}{\mathbf{\Sigma}_2} + \mathbf{I}_p)(\tilde{\mathbf{{D}}}{\tilde{\mathbf{\Sigma}}}_1 - {{\mathbf{D}}}\mathbf{{\Sigma}}_1)) \\
&= \text{tr}((-{\mathbf{D}}{\mathbf{\Sigma}_2})(\tilde{\mathbf{{D}}}{\tilde{\mathbf{\Sigma}}}_1 - {{\mathbf{D}}}\mathbf{{\Sigma}}_1)) + \text{tr}(\tilde{\mathbf{{D}}}{\tilde{\mathbf{\Sigma}}}_1 - {{\mathbf{D}}}\mathbf{{\Sigma}}_1) \\
&\leq \|{\mathbf{D}}{\mathbf{\Sigma}_2}\|_F \cdot \|\tilde{\mathbf{{D}}}{\tilde{\mathbf{\Sigma}}}_1 - {{\mathbf{D}}}\mathbf{{\Sigma}}_1\|_F + \text{tr}(\tilde{\mathbf{{D}}}{\tilde{\mathbf{\Sigma}}}_1 - {{\mathbf{D}}}\mathbf{{\Sigma}}_1) \\
&\leq \|{\mathbf{D}}\|_F \|{\mathbf{\Sigma}_2}\|_2 \cdot \|\tilde{\mathbf{{D}}}{\tilde{\mathbf{\Sigma}}}_1 - {{\mathbf{D}}}\mathbf{{\Sigma}}_1\|_F + \text{tr}(\tilde{\mathbf{{D}}}{\tilde{\mathbf{\Sigma}}}_1 - {{\mathbf{D}}}\mathbf{{\Sigma}}_1) \\
&\leq \|{\mathbf{D}}\|_F \|{\mathbf{\Sigma}_2}\|_2 \cdot \|\tilde{\mathbf{{D}}}{\tilde{\mathbf{\Sigma}}}_1 - {{\mathbf{D}}}\mathbf{{\Sigma}}_1\|_F + |\text{tr}(\tilde{\mathbf{{D}}}\tilde{\mathbf{\Sigma}}_1 - \tilde{\mathbf{{D}}}\mathbf{{\Sigma}}_1)| + \text{tr}(\tilde{\mathbf{{D}}}{\mathbf{\Sigma}_1} - {{\mathbf{D}}}{\mathbf{\Sigma}_1}),
\end{aligned}
$$
where
$$
\begin{aligned}
&\left\|{\mathbf{D}}\Sigma_{1}-\tilde{\mathbf{{D}}}\tilde{\Sigma}_{1}\right\|_{F} \\
\leq & \left\|{\mathbf{D}}\Sigma_{1}-\tilde{\mathbf{{D}}}\Sigma_{1}\right\|_{F} + \left\|\tilde{\mathbf{{D}}}({\mathbf{\Sigma}_1}-\tilde{\Sigma}_{1})\right\|_{F} \\
\leq&\|{\mathbf{D}}-\tilde {\mathbf{D}}\|_F\|{\mathbf{\Sigma}_1}\|_2+\|\tilde {\mathbf{D}}\|_F\|{\mathbf{\Sigma}_1}-\tilde{\Sigma}_{1}\|_{2,s_1}.
\end{aligned}
$$
Since
$$
\begin{aligned}
    \|{\mathbf{\Sigma}_1}-{\tilde{\mathbf{\Sigma}}}_1\|_{2,s_1}=&\sup_{\|\boldsymbol{u}\|_0\leq s_1,\|\boldsymbol{u}\|_2=1}\|({\mathbf{\Sigma}_1}-{\tilde{\mathbf{\Sigma}}}_1)\boldsymbol{u}\|_2\\=&
    \sup_{\|\boldsymbol{u}\|_0\leq s_1,\|\boldsymbol{u}\|_2=1}|\boldsymbol{u}^T({\mathbf{\Sigma}_1}-{\tilde{\mathbf{\Sigma}}}_1)\boldsymbol{u}|,
\end{aligned}
$$
by triangle inequality, we can obtain
$$
\begin{aligned}
    \|{\mathbf{\Sigma}_1}-{\tilde{\mathbf{\Sigma}}}_1\|_{2,s_1}\leq& 
    \frac{\text{tr}({\mathbf{\Sigma}_1})}{p}\left\{\left\|\left(\frac{\widetilde{\text{tr}({\mathbf{\Sigma}_1})}}{\text{tr}({\mathbf{\Sigma}_1})}-1\right)p\tilde{\mathbf{S}}_1\right\|_{2,s_1}+\|p\tilde{\mathbf{S}}_1-p\mathbf{S}_1\|_{2,s_1}+\|p\mathbf{S}_1-\mathbf{\Lambda}_1\|_{2,s_1}\right\}
    \\\lesssim&
    \left|\frac{\widetilde{\text{tr}({\mathbf{\Sigma}_1})}}{\text{tr}({\mathbf{\Sigma}_1})}-1\right|\left\|p\tilde{\mathbf{S}}_1\right\|_{2,s_1}+\|p\tilde{\mathbf{S}}_1-p\mathbf{S}_1\|_{2,s_1}+\|p\mathbf{S}_1-\mathbf{\Lambda}_1\|_{2,s_1}.
\end{aligned}
$$
With (\ref{eq:S restrict norm}) and Lemma \ref{lem:S spec}, $\|p\tilde{\mathbf{S}}_1-p\mathbf{S}_1\|_{2,s_1}\lesssim \sqrt{\frac{s_1\log p}{n}}, \|pS\|_{2,s_1}\leq \|pS\|_2\lesssim 1$ with probability over $1-3/p$. Therefore, by Lemma \ref{lem:consistency of trace}, we obtain
$$
\begin{aligned}
    &P\left(\left|\frac{\widetilde{\text{tr}({\mathbf{\Sigma}_1})}}{\text{tr}({\mathbf{\Sigma}_1})}-1\right|\left\|p\tilde{\mathbf{S}}_1\right\|_{2,s_1}\geq \sqrt{\frac{\log p}{n}}\right)\\\leq &P\left(\left.\left|\frac{\widetilde{\text{tr}({\mathbf{\Sigma}_1})}}{\text{tr}({\mathbf{\Sigma}_1})}-1\right|\left\|p\tilde{\mathbf{S}}_1\right\|_{2,s_1}\geq \sqrt{\frac{\log p}{n}}\right|\|p\mathbf{S}_1\|_{2,s_1}\leq C\right)+P\left(\|p\mathbf{S}_1\|_{2,s_1}\geq C\right)\\\lesssim &\frac{1}{\log p}+\frac{1}{p}\lesssim \frac{1}{\log p}.
\end{aligned}
$$
According to Lemma \ref{lem:spec to infty} , $\|p\mathbf{S}_1-\mathbf{\Lambda}_1\|_{2,s_1}\leq s_1\|p\mathbf{S}_1-\mathbf{\Lambda}_1\|_{\text{max}}\lesssim \frac{s_1}{\sqrt{p}}$ . Therefore, with probability over $1-O\left(\frac{1}{\log p}\right)$,
$$
\|{\mathbf{D}}{\mathbf{\Sigma}_1}-\tilde {\mathbf{D}}\tilde {\mathbf{\Sigma}}_1\|_{F}\lesssim s_1\left(\sqrt{\frac{1}{p}}+\sqrt{\frac{\log{(p)}}{n}}\right).
$$
For the second term $\left|\operatorname{tr}\left(\tilde{\mathbf{{D}}}\mathbf{\Sigma}_{1}-\tilde{\mathbf{{D}}}\mathbf{\tilde{\Sigma}_{1}}\right)\right|$ , with probability at least $1-O\left(\frac{1}{\log p}\right)$,
$$
\begin{aligned}
\left|\operatorname{tr}\left(\tilde{\mathbf{{D}}}\mathbf{\Sigma}_{1}-\tilde{\mathbf{{D}}}\mathbf{\tilde{\Sigma}_{1}}\right)\right|\leq&\|{\mathbf{\Sigma}_1}-{\tilde{\mathbf{\Sigma}}}_1\|_{max}\sqrt{s_1}\|\tilde {\mathbf{D}}\|_F\leq s_1\left(\sqrt{\frac{1}{p}}+\sqrt{\frac{\log{(p)}}{n}}\right).
\end{aligned}
$$
Therefore , with probability over $1-O\left(\frac{1}{\log p}\right)$,
$$
\begin{aligned}
&\log|\tilde{\mathbf{{D}}}{\tilde{\mathbf{\Sigma}}}_1 + \mathbf{I}_p| - \log|{{\mathbf{D}}}\mathbf{\Sigma}_1 + \mathbf{I}_p|-\text{tr}(\tilde{\mathbf{{D}}}{\mathbf{\Sigma}_1}-{\mathbf{D}}{\mathbf{\Sigma}_1})\\&\lesssim 
 s_1\left(\sqrt{\frac{1}{p}}+\sqrt{\frac{\log{p}}{n}}\right).
\end{aligned}
$$
As for the other side, by Lemma \ref{lem:logandtrace},
$$
\begin{aligned}
&\log |{\mathbf{D}}{\mathbf{\Sigma}_1}+I_P|-\log |\tilde{\mathbf{{D}}}\tilde{\mathbf{\Sigma}}_1+\mathbf{I}_p|\\\leq& \text{tr}((\tilde{\mathbf{{D}}}\tilde{\mathbf{\Sigma}}_1+\mathbf{I}_p)^{-1}({\mathbf{D}}{\mathbf{\Sigma}_1}-\tilde{\mathbf{{D}}}\tilde{\mathbf{\Sigma}}_1))\\
\leq&\text{tr}([(\tilde{\mathbf{{D}}}\tilde{\mathbf{\Sigma}}_1+\mathbf{I}_p)^{-1}-({{\mathbf{D}}}{\mathbf{\Sigma}_1}+\mathbf{I}_p)^{-1}]({\mathbf{D}}{\mathbf{\Sigma}_1}-\tilde{\mathbf{{D}}}\tilde{\mathbf{\Sigma}}_1))+\text{tr}(({{\mathbf{D}}}{\mathbf{\Sigma}_1}+\mathbf{I}_p)^{-1}({\mathbf{D}}{\mathbf{\Sigma}_1}-\tilde{\mathbf{{D}}}\tilde{\mathbf{\Sigma}}_1))\\
\leq&\text{tr}([(\tilde{\mathbf{{D}}}\tilde{\mathbf{\Sigma}}_1+\mathbf{I}_p)^{-1}-({{\mathbf{D}}}{\mathbf{\Sigma}_1}+\mathbf{I}_p)^{-1}]({\mathbf{D}}{\mathbf{\Sigma}_1}-\tilde{\mathbf{{D}}}\tilde{\mathbf{\Sigma}}_1))+
\|{\mathbf{D}}{\mathbf{\Sigma}_2}\|_F \cdot \|\tilde{\mathbf{{D}}}{\tilde{\mathbf{\Sigma}}}_1 - {{\mathbf{D}}}{\Sigma}_1\|_F + \text{tr}({{\mathbf{D}}}{\Sigma}_1-\tilde{\mathbf{{D}}}{\tilde{\mathbf{\Sigma}}}_1 )\\\leq&
\text{tr}([(\tilde{\mathbf{{D}}}\tilde{\mathbf{\Sigma}}_1+\mathbf{I}_p)^{-1}-({{\mathbf{D}}}{\mathbf{\Sigma}_1}+\mathbf{I}_p)^{-1}]({\mathbf{D}}{\mathbf{\Sigma}_1}-\tilde{\mathbf{{D}}}\tilde{\mathbf{\Sigma}}_1))+\\&
\|{\mathbf{D}}{\mathbf{\Sigma}_2}\|_F \cdot \|\tilde{\mathbf{{D}}}{\tilde{\mathbf{\Sigma}}}_1 - {{\mathbf{D}}}\mathbf{{\Sigma}}_1\|_F + |\text{tr}(\tilde{\mathbf{{D}}}\mathbf{{\Sigma}}_1-\tilde{\mathbf{{D}}}{\tilde{\mathbf{\Sigma}}}_1 )|-\text{tr}(\tilde {\mathbf{D}}{\mathbf{\Sigma}_1}-{\mathbf{D}}{\mathbf{\Sigma}_1})\\\lesssim&\text{tr}([(\tilde{\mathbf{{D}}}\tilde{\mathbf{\Sigma}}_1+\mathbf{I}_p)^{-1}-({{\mathbf{D}}}{\mathbf{\Sigma}_1}+\mathbf{I}_p)^{-1}]({\mathbf{D}}{\mathbf{\Sigma}_1}-\tilde{\mathbf{{D}}}\tilde{\mathbf{\Sigma}}_1))+
s_1\left(\sqrt{\frac{1}{p}}+\sqrt{\frac{\log{p}}{n}}\right)-\text{tr}(\tilde {\mathbf{D}}{\mathbf{\Sigma}_1}-{\mathbf{D}}{\mathbf{\Sigma}_1}).
\end{aligned}
$$
Consider
$$
\begin{aligned}
&\text{tr}([(\tilde{\mathbf{{D}}}\tilde{\mathbf{\Sigma}}_1+\mathbf{I}_p)^{-1}-({{\mathbf{D}}}{\mathbf{\Sigma}_1}+\mathbf{I}_p)^{-1}]({\mathbf{D}}{\mathbf{\Sigma}_1}-\tilde{\mathbf{{D}}}\tilde{\mathbf{\Sigma}}_1))\\
\leq&\|(\tilde{\mathbf{{D}}}\tilde{\mathbf{\Sigma}}_1+\mathbf{I}_p)^{-1}-({{\mathbf{D}}}{\mathbf{\Sigma}_1}+\mathbf{I}_p)^{-1}\|_F\cdot\|({\mathbf{D}}{\mathbf{\Sigma}_1}-\tilde{\mathbf{{D}}}\tilde{\mathbf{\Sigma}}_1)\|_F.
\end{aligned}
$$
Let $A:=\tilde{\mathbf{{D}}}\tilde{\mathbf{\Sigma}}_1+\mathbf{I}_p,B:={{\mathbf{D}}}{\mathbf{\Sigma}_1}+\mathbf{I}_p$ , then
$$
\begin{aligned}
\|\mathbf{A}^{-1}-\mathbf{B}^{-1}\|_F=&\|\mathbf{A}^{-1}(\mathbf{B}-\mathbf{A})\mathbf{B}^{-1}\|_F\\\leq&
\|\mathbf{A}^{-1}\|_2\|(\mathbf{B}-\mathbf{A})\mathbf{B}^{-1}\|_F\\\leq&
\|\mathbf{A}^{-1}\|_2\cdot\|(\mathbf{B}-\mathbf{A})\|_F\cdot\|\mathbf{B}^{-1}\|_2,
\end{aligned}
$$
where
$$
\begin{aligned}
\|\mathbf{B}^{-1}\|_2=&\|\mathbf{I}_p-{\mathbf{D}}{\mathbf{\Sigma}_2}\|_2\leq1+\|{\mathbf{D}}{\mathbf{\Sigma}_2}\|_2\\\leq&
1+\|{\mathbf{D}}\|_F\|{\mathbf{\Sigma}_2}\|_2:=M.
\end{aligned}
$$
From Lemma \ref{lem:Von Neumann}
$$
\begin{aligned}
\|\mathbf{A}^{-1}\|_2=&\|[(\mathbf{I}+(\mathbf{A}-\mathbf{B})\mathbf{B}^{-1})\mathbf{B}]^{-1}\|_2\\=&
\|\mathbf{B}^{-1}(\mathbf{I}+(\mathbf{A}-\mathbf{B})\mathbf{B}^{-1})^{-1}\|_2\\
\leq&\|\mathbf{B}^{-1}\|_2\|(\mathbf{I}-(\mathbf{B}-\mathbf{A})\mathbf{B}^{-1})^{-1}\|_2.
\end{aligned}
$$
Let $\mathbf{E}:=(\mathbf{B}-\mathbf{A})\mathbf{B}^{-1}$ , when $n,p$ is large enough
$$
\begin{aligned}
\|\mathbf{E}\|_2\leq&\|\mathbf{B}-\mathbf{A}\|_2\|\mathbf{B}^{-1}\|_2\leq\|\mathbf{B}-\mathbf{A}\|_FM\\\lesssim&
s_1\left(\sqrt{\frac{1}{p}}+\sqrt{\frac{\log{p}}{n}}\right) \quad<1.
\end{aligned}
$$
With Lemma \ref{lem:Von Neumann} , when $n$ is sufficient large , $$
\|(\mathbf{I}-\mathbf{E})^{-1}\|_2\leq\frac{1}{1-\|\mathbf{E}\|_2} < M+1.
$$
Thus $\|\mathbf{A}^{-1}\|_2$ is bounded, and $\|\mathbf{A}^{-1}-\mathbf{B}^{-1}\|_F\lesssim\|\mathbf{B}-\mathbf{A}\|_F$ .
Therefore,
$$
\begin{aligned}
&\text{tr}([(\tilde{\mathbf{{D}}}\tilde{\mathbf{\Sigma}}_1+\mathbf{I}_p)^{-1}-({{\mathbf{D}}}{\mathbf{\Sigma}_1}+\mathbf{I}_p)^{-1}]({\mathbf{D}}{\mathbf{\Sigma}_1}-\tilde{\mathbf{{D}}}\tilde{\mathbf{\Sigma}}_1))\\\lesssim&\|({\mathbf{D}}{\mathbf{\Sigma}_1}-\tilde{\mathbf{{D}}}\tilde{\mathbf{\Sigma}}_1)\|_F\\\lesssim&s_1\left(\sqrt{\frac{1}{p}}+\sqrt{\frac{\log{p}}{n}}\right).
\end{aligned}
$$
Consequently, we have that with probability $1-O\left(\frac{1}{\log p}\right)$,
$$
\begin{aligned}
&|\log|\tilde{\mathbf{{D}}}{\tilde{\mathbf{\Sigma}}}_1 + \mathbf{I}_p| - \log|{{\mathbf{D}}}\mathbf{{\Sigma}}_1 + \mathbf{I}_p|-\text{tr}(\tilde{\mathbf{{D}}}{\mathbf{\Sigma}_1}-{\mathbf{D}}{\mathbf{\Sigma}_1})|\\&\lesssim s_1\left(\sqrt{\frac{1}{p}}+\sqrt{\frac{\log{p}}{n}}\right).
\end{aligned}
$$
Next we consider the term involving $\boldsymbol{z}$. That is $(\boldsymbol{z} - {\boldsymbol{\mu}_1})^T {\mathbf{D}}(\boldsymbol{z} - {\boldsymbol{\mu}_1}) - (\boldsymbol{z} - {\boldsymbol{\mu}_1})^T \tilde{\mathbf{{D}}}(\boldsymbol{z} - {\boldsymbol{\mu}_1}) -  (\operatorname{tr}({\mathbf{D}}{\mathbf{\Sigma}_1} - \tilde{\mathbf{{D}}}{\mathbf{\Sigma}_1}))$. Recall the definition of elliptical distribution, for $\boldsymbol{z}$ in class 1, we can assume $\boldsymbol{z}={\boldsymbol{\mu}_1}+r{\mathbf{\Gamma}_1} \boldsymbol{u}$, where $\boldsymbol{u}$ is uniformly distributed on the sphere and $r$ is a scalar random variable with $E(r^2)=p$ and independent with $\boldsymbol{u}$. $\boldsymbol{u}$ could be expressed as $\boldsymbol{u}=\boldsymbol x/\|\boldsymbol x\|$, where $\boldsymbol x\sim N(\boldsymbol 0,\mathbf{I}_p)$. So $\boldsymbol{z}={\boldsymbol{\mu}_1}+r\|\boldsymbol x\|^{-1}{\mathbf{\Gamma}_1} \boldsymbol x$ and
\begin{align*}
&(\boldsymbol{z} - {\boldsymbol{\mu}_1})^T {\mathbf{D}}(\boldsymbol{z} - {\boldsymbol{\mu}_1}) - (\boldsymbol{z} - {\boldsymbol{\mu}_1})^T \tilde{\mathbf{{D}}}(\boldsymbol{z} - {\boldsymbol{\mu}_1})\\
=&r^2\|\boldsymbol{x}\|^{-2} \boldsymbol{x}^T{\mathbf{\Gamma}_1}^T ({\mathbf{D}}-\tilde{\mathbf{{D}}}){\mathbf{\Gamma}_1} \boldsymbol{x}\\\
=&[r^2\|\boldsymbol{x}\|^{-2}-E(r^2\|\boldsymbol{x}\|^{-2})] \boldsymbol{x}^T{\mathbf{\Gamma}_1}^T ({\mathbf{D}}-\tilde{\mathbf{{D}}}){\mathbf{\Gamma}_1} \boldsymbol{x}+E(r^2\|\boldsymbol{x}\|^{-2}) \boldsymbol{x}^T{\mathbf{\Gamma}_1}^T ({\mathbf{D}}-\tilde{\mathbf{{D}}}){\mathbf{\Gamma}_1} \boldsymbol{x}.
\end{align*}
Given $\|\boldsymbol{x}\|^2\sim \chi_p^2$ , we have $E(r^2\|\boldsymbol{x}\|^{-2})=p/(p-2)$ and
$$
\begin{aligned}
    \text{Var}(r^2\|\boldsymbol{x}\|^{-2})=&E{r^4}E(\|\boldsymbol{x}\|^{-4})-[E(r^2)E(\|\boldsymbol{x}\|^{-2})]^2\\=&\frac{E(r^4)}{(p-2)(p-4)}-\frac{p^2}{(p-2)^2}.
\end{aligned}
$$
By Chebyshev's inequality
$$
\begin{aligned}
    &P\left(|r^2\|\boldsymbol{x}\|^{-2}-E(r^2\|\boldsymbol{x}\|^{-2})|>t\right)\\\leq&\frac{\text{Var}(r^2\|\boldsymbol{x}\|^{-2})}{t^2}\\=&\frac{1}{t^2}\left(\frac{E(r^4)}{(p-2)(p-4)}-\frac{p^2}{(p-2)^2}\right)\\=&\frac{1}{t^2}\left(\frac{(p-2)\text{Var}(r^2)+2p^2}{(p-2)^2(p-4)}\right).
\end{aligned}
$$
As the same in \cite{cai2021convex},
$$
\begin{aligned}
    &\boldsymbol{x}^T{\mathbf{\Gamma}_1}^T ({\mathbf{D}}-\tilde{\mathbf{{D}}}){\mathbf{\Gamma}_1} \boldsymbol{x}-\text{tr}({\mathbf{\Gamma}_1}^T ({\mathbf{D}}-\tilde{\mathbf{{D}}}){\mathbf{\Gamma}_1})=\sum_{i=1}^p\lambda_i(x_i^2-1),
\end{aligned}
$$
where $\lambda_i$ are the eigenvalue of ${\mathbf{\Gamma}_1}^T ({\mathbf{D}}-\tilde{\mathbf{{D}}}){\mathbf{\Gamma}_1}$. Since with probability at least $1-O\left(\frac{1}{\log p}\right)$,
\[
\sqrt{\sum_{i=1}^{p} \lambda_i^2} = \|{\mathbf{\Sigma}_1}^{1/2} (\tilde{\mathbf{{D}}} - {\mathbf{D}}) {\mathbf{\Sigma}_1}^{1/2}\|_F \leq \|{\mathbf{\Sigma}_1}\|_2 \|\tilde{\mathbf{{D}}} - {\mathbf{D}}\|_F \lesssim s_1\left(\sqrt{\frac{1}{p}}+\sqrt{\frac{\log{p}}{n}}\right),
\]
and with probability at least $1-O\left(\frac{1}{\log p}\right)$,
\[
\max_i |\lambda_i| \leq \|{\mathbf{\Sigma}_1}^{1/2} (\tilde{\mathbf{{D}}} - {\mathbf{D}}) {\mathbf{\Sigma}_1}^{1/2}\|_2 \leq \|{\mathbf{\Sigma}_1}\|_2 \|\tilde{\mathbf{{D}}} - {\mathbf{D}}\|_2 \lesssim s_1\left(\sqrt{\frac{1}{p}}+\sqrt{\frac{\log{p}}{n}}\right).
\]
By Bernstein inequality for sub-exponential random variables, we have for some \({c}_1 > 0\),
\begin{align}
\label{eq:quadratic term}
    \mathbb{P}\left(\left|\sum_{i=1}^{p} \lambda_i (x_{i}^2 - 1)\right| \geq t\right) 
\leq 2 \exp\left\{
    -c_1 \min\left\{
        \frac{t^2}{s_1^2 \left(\sqrt{\frac{1}{p}} + \sqrt{\frac{\log p}{n}}\right)^2}, 
        \frac{t}{s_1 \left(\sqrt{\frac{1}{p}} + \sqrt{\frac{\log p}{n}}\right)}
    \right\}\right\}
    + \frac{C}{\log p}
.
\end{align}
Thus, we can obtain
$$
\begin{aligned}
&(\boldsymbol{z} - {\boldsymbol{\mu}_1})^T {\mathbf{D}}(\boldsymbol{z} - {\boldsymbol{\mu}_1}) - (\boldsymbol{z} - {\boldsymbol{\mu}_1})^T \tilde{\mathbf{{D}}}(\boldsymbol{z} - {\boldsymbol{\mu}_1}) -  (\operatorname{tr}({\mathbf{D}}{\mathbf{\Sigma}_1} - \tilde{\mathbf{{D}}}{\mathbf{\Sigma}_1}))\\=&[r^2\|\boldsymbol{x}\|^{-2}-E(r^2\|\boldsymbol{x}\|^{-2})] \boldsymbol{x}^T{\mathbf{\Gamma}_1}^T ({\mathbf{D}}-\tilde{\mathbf{{D}}}){\mathbf{\Gamma}_1} \boldsymbol{x}+E(r^2\|\boldsymbol{x}\|^{-2}) \boldsymbol{x}^T{\mathbf{\Gamma}_1}^T ({\mathbf{D}}-\tilde{\mathbf{{D}}}){\mathbf{\Gamma}_1} \boldsymbol{x}- \operatorname{tr}({\mathbf{D}}{\mathbf{\Sigma}_1} - \tilde{\mathbf{{D}}}{\mathbf{\Sigma}_1})\\=&
[r^2\|\boldsymbol{x}\|^{-2}-E(r^2\|\boldsymbol{x}\|^{-2})] \left(\boldsymbol{x}^T{\mathbf{\Gamma}_1}^T ({\mathbf{D}}-\tilde{\mathbf{{D}}}){\mathbf{\Gamma}_1}-\text{tr}({\mathbf{\Gamma}_1}^T ({\mathbf{D}}-\tilde{\mathbf{{D}}}){\mathbf{\Gamma}_1})\right)\\
+&\frac{p}{p-2}\left(\boldsymbol{x}^T{\mathbf{\Gamma}_1}^T ({\mathbf{D}}-\tilde{\mathbf{{D}}}){\mathbf{\Gamma}_1} \boldsymbol{x}-\text{tr}({\mathbf{\Gamma}_1}^T ({\mathbf{D}}-\tilde{\mathbf{{D}}}){\mathbf{\Gamma}_1})\right)\\+&\left(\frac{p}{p-2}+r^2\|\boldsymbol{x}\|^{-2}-E(r^2\|\boldsymbol{x}\|^{-2})-1 \right)\text{tr}({\mathbf{\Gamma}_1}^T ({\mathbf{D}}-\tilde{\mathbf{{D}}}){\mathbf{\Gamma}_1}).
\end{aligned}
$$
Then, we can consider $$
\begin{aligned}
   & P\left(\left|(\boldsymbol{z} - {\boldsymbol{\mu}_1})^T {\mathbf{D}}(\boldsymbol{z} - {\boldsymbol{\mu}_1}) - (\boldsymbol{z} - {\boldsymbol{\mu}_1})^T \tilde{\mathbf{{D}}}(\boldsymbol{z} - {\boldsymbol{\mu}_1}) -  (\operatorname{tr}({\mathbf{D}}{\mathbf{\Sigma}_1} - \tilde{\mathbf{{D}}}{\mathbf{\Sigma}_1}))\right|\geq Ms_1\left(\sqrt{\frac{1}{p}}+\sqrt{\frac{\log{p}}{n}}\right)\right)\\
   \leq& P\left(\left|[r^2\|\boldsymbol{x}\|^{-2}-E(r^2\|\boldsymbol{x}\|^{-2})] \left(\boldsymbol{x}^T{\mathbf{\Gamma}_1}^T ({\mathbf{D}}-\tilde{\mathbf{{D}}}){\mathbf{\Gamma}_1}-\text{tr}({\mathbf{\Gamma}_1}^T ({\mathbf{D}}-\tilde{\mathbf{{D}}}){\mathbf{\Gamma}_1})\right)\right|\geq \frac{M}{3}s_1\left(\sqrt{\frac{1}{p}}+\sqrt{\frac{\log{p}}{n}}\right)\right)\\+&
   P\left(\frac{p}{p-2}\left|\boldsymbol{x}^T{\mathbf{\Gamma}_1}^T ({\mathbf{D}}-\tilde{\mathbf{{D}}}){\mathbf{\Gamma}_1} \boldsymbol{x}-\text{tr}({\mathbf{\Gamma}_1}^T ({\mathbf{D}}-\tilde{\mathbf{{D}}}){\mathbf{\Gamma}_1})\right|\geq\frac{M}{3}s_1\left(\sqrt{\frac{1}{p}}+\sqrt{\frac{\log{p}}{n}}\right)\right)\\+&
   P\left(\left|\left(\frac{p}{p-2}+r^2\|\boldsymbol{x}\|^{-2}-E(r^2\|\boldsymbol{x}\|^{-2})- 1\right)\text{tr}({\mathbf{\Gamma}_1}^T ({\mathbf{D}}-\tilde{\mathbf{{D}}}){\mathbf{\Gamma}_1})\right|\geq \frac{M}{3}s_1\left(\sqrt{\frac{1}{p}}+\sqrt{\frac{\log{p}}{n}}\right)\right).
\end{aligned}
$$
For the first part,
$$
\begin{aligned}
    &P\left(\left|[r^2\|\boldsymbol{x}\|^{-2}-E(r^2\|\boldsymbol{x}\|^{-2})] \left(\boldsymbol{x}^T{\mathbf{\Gamma}_1}^T ({\mathbf{D}}-\tilde{\mathbf{{D}}}){\mathbf{\Gamma}_1}-\text{tr}({\mathbf{\Gamma}_1}^T ({\mathbf{D}}-\tilde{\mathbf{{D}}}){\mathbf{\Gamma}_1})\right)\right|\geq \frac{M}{3}s_1\left(\sqrt{\frac{1}{p}}+\sqrt{\frac{\log{p}}{n}}\right)\right)\\
    \leq&P\left(\left|\boldsymbol{x}^T{\mathbf{\Gamma}_1}^T ({\mathbf{D}}-\tilde{\mathbf{{D}}}){\mathbf{\Gamma}_1}-\text{tr}({\mathbf{\Gamma}_1}^T ({\mathbf{D}}-\tilde{\mathbf{{D}}}){\mathbf{\Gamma}_1})\right|\geq \frac{M}{3}s_1\left(\sqrt{\frac{1}{p}}+\sqrt{\frac{\log{p}}{n}}\right)\right)+
    P\left(\left|r^2\|\boldsymbol{x}\|^{-2}-E(r^2\|\boldsymbol{x}\|^{-2})\right|\geq 1\right)\\\leq&2\exp{\left\{-c_1\min\left\{\frac{M^2}{9}, \frac{M}{3}\right\}\right\}}+\frac{(p-2)\text{Var}(r^2)+2p^2}{(p-2)^2(p-4)}+\frac{c_2}{\log p}.
\end{aligned}
$$
For the second part, when $p\geq 3$
$$
\begin{aligned}
    &P\left(\frac{p}{p-2}\left|\boldsymbol{x}^T{\mathbf{\Gamma}_1}^T ({\mathbf{D}}-\tilde{\mathbf{{D}}}){\mathbf{\Gamma}_1} \boldsymbol{x}-\text{tr}({\mathbf{\Gamma}_1}^T ({\mathbf{D}}-\tilde{\mathbf{{D}}}){\mathbf{\Gamma}_1})\right|\geq\frac{M}{3}s_1\left(\sqrt{\frac{1}{p}}+\sqrt{\frac{\log{p}}{n}}\right)\right)\\\leq&
    P\left(\left|\boldsymbol{x}^T{\mathbf{\Gamma}_1}^T ({\mathbf{D}}-\tilde{\mathbf{{D}}}){\mathbf{\Gamma}_1} \boldsymbol{x}-\text{tr}({\mathbf{\Gamma}_1}^T ({\mathbf{D}}-\tilde{\mathbf{{D}}}){\mathbf{\Gamma}_1})\right|\geq\frac{M}{9}s_1\left(\sqrt{\frac{1}{p}}+\sqrt{\frac{\log{p}}{n}}\right)\right)\\\leq&
    2\exp{\left\{-c_1\min\left\{\frac{M^2}{81}, \frac{M}{9}\right\}\right\}}+\frac{c_2}{\log p}.
\end{aligned}
$$
For the third part, since with a probability of $1-O\left(\frac{1}{\log p}\right)$ ,
$$
\begin{aligned}
    \left|\text{tr}\left({\mathbf{\Gamma}_1}^T ({\mathbf{D}}-\tilde{\mathbf{{D}}}){\mathbf{\Gamma}_1}\right)\right|\leq& \left\|{\mathbf{\Sigma}_1}\right\|_{\max}\left\|\text{Vec}({\mathbf{D}}-\tilde{\mathbf{D}})\right\|_1\\\leq&\|{\mathbf{\Sigma}_1}\|_{\max}\cdot 2\sqrt{s_1}\left\|{\mathbf{D}}-\tilde {\mathbf{D}}\right\|_F\\\lesssim&s_1\sqrt{\frac{s_1\log p}{n}}.
\end{aligned}
$$
Therefore, 
$$
\begin{aligned}
  &P\left(\left|\left(\frac{2}{p-2}+r^2\|\boldsymbol{x}\|^{-2}-E(r^2\|\boldsymbol{x}\|^{-2})-1 \right)\sum_{i=1}^p\lambda_i\right|\geq \frac{M}{3}s_1\left(\sqrt{\frac{1}{p}}+\sqrt{\frac{\log{p}}{n}}\right)\right)\\\leq&
  P\left(\left|\left(\frac{2}{p-2}\right)\sum_{i=1}^p\lambda_i\right|\geq \frac{M}{6}s_1\left(\sqrt{\frac{1}{p}}+\sqrt{\frac{\log{p}}{n}}\right)\right)+P\left(\left|\left(r^2\|\boldsymbol{x}\|^{-2}-E(r^2\|\boldsymbol{x}\|^{-2})\right)\sum_{i=1}^p\lambda_i\right|\geq \frac{M}{6}s_1\left(\sqrt{\frac{1}{p}}+\sqrt{\frac{\log{p}}{n}}\right)\right)\\\lesssim&
  \frac{1}{p}+P\left(\left|\sum_{i=1}^p\lambda_i\right|\geq \frac{M}{6}{s_1}\sqrt{\frac{s_1\log p}{n}}\right)+P\left(\left|r^2\|\boldsymbol{x}\|^{-2}-E(r^2\|\boldsymbol{x}\|^{-2})\right|\geq \frac{1}{\sqrt{s_1}}\right)\\\leq&
  \frac{c_2}{\log p}+s_1\frac{(p-2)\text{Var}(r^2)+2p^2}{(p-2)^2(p-4)}.
\end{aligned}
$$
Thus,  there exists constants $c_i$,
$$
\begin{aligned}
    &P\left(\left|(\boldsymbol{z} - {\boldsymbol{\mu}_1})^T {\mathbf{D}}(\boldsymbol{z} - {\boldsymbol{\mu}_1}) - (\boldsymbol{z} - {\boldsymbol{\mu}_1})^T \tilde{\mathbf{{D}}}(\boldsymbol{z} - {\boldsymbol{\mu}_1}) - (\operatorname{tr}({\mathbf{D}}{\mathbf{\Sigma}_1} - \tilde{\mathbf{{D}}}{\mathbf{\Sigma}_1}))\right|\geq Ms_1\left(\sqrt{\frac{1}{p}}+\sqrt{\frac{\log{p}}{n}}\right)\right)\\\leq&
     c_1\exp{\left\{-c_2M\right\}}+\frac{c_3}{\log p}+\frac{(p-2)\text{Var}(r^2)+2p^2}{(p-2)^2(p-4)}+s_1\frac{(p-2)\text{Var}(r^2)+2p^2}{(p-2)^2(p-4)}.
\end{aligned}
$$
By Assumption \ref{ass: scalar random variable},
$$
\begin{aligned}
    &P\left(\left|(\boldsymbol{z} - {\boldsymbol{\mu}_1})^T {\mathbf{D}}(\boldsymbol{z} - {\boldsymbol{\mu}_1}) - (\boldsymbol{z} - {\boldsymbol{\mu}_1})^T \tilde{\mathbf{{D}}}(\boldsymbol{z} - {\boldsymbol{\mu}_1}) - (\operatorname{tr}({\mathbf{D}}{\mathbf{\Sigma}_1} - \tilde{\mathbf{{D}}}{\mathbf{\Sigma}_1}))\right|\geq M\sqrt{s_1\frac{\log p}{n}}\right)\\
    \lesssim& \exp{\left\{-c_1M\right\}}+\frac{1}{\log p}+\frac{s_1}{\sqrt{p}}.
\end{aligned}
$$
As for the linear term involving $\boldsymbol{z}$,
$$
\begin{aligned}
    \left|\left(\tilde{{\boldsymbol{\beta}}}-{\boldsymbol{\beta}}\right)^T\left(\boldsymbol{z}-{\boldsymbol{\mu}_1}\right)\right|=&\left|r\|\boldsymbol{x}\|^{-1}\left(\tilde{{\boldsymbol{\beta}}}-{\boldsymbol{\beta}}\right)^T{\mathbf{\Gamma}_1} \boldsymbol{x}\right|\\\leq&\left|[r\|\boldsymbol{x}\|^{-1}-E(r\|\boldsymbol{x}\|^{-1})]\left(\tilde{{\boldsymbol{\beta}}}-{\boldsymbol{\beta}}\right)^T{\mathbf{\Gamma}_1} \boldsymbol{x}\right|+\left|E(r\|\boldsymbol{x}\|^{-1})\left(\tilde{{\boldsymbol{\beta}}}-{\boldsymbol{\beta}}\right)^T{\mathbf{\Gamma}_1} \boldsymbol{x}\right|.
\end{aligned}
$$
Since $\left(\tilde {\boldsymbol{\beta}}-{\boldsymbol{\beta}}\right){\mathbf{\Gamma}_1} \boldsymbol{x}\sim N\left(0,\left(\tilde{{\boldsymbol{\beta}}}-{\boldsymbol{\beta}}\right)^T{\mathbf{\Sigma}_1}\left(\tilde{{\boldsymbol{\beta}}}-{\boldsymbol{\beta}}\right)\right)$ , and with probability of $1-O\left(\frac{1}{\log p}\right)$,
\begin{align}
\label{eq: concentration of variance}
    \left(\tilde{{\boldsymbol{\beta}}}-{\boldsymbol{\beta}}\right)^T{\mathbf{\Sigma}_1}\left(\tilde{{\boldsymbol{\beta}}}-{\boldsymbol{\beta}}\right)\leq \|\mathbf{\Sigma}_1\|_2\|\tilde{{\boldsymbol{\beta}}}-{\boldsymbol{\beta}}\|_2^2\lesssim s^2_2{\frac{\log p}{n}}.
\end{align}
In addition,
$$
\begin{aligned}
    P\left(\left.\left|\left(\tilde {\boldsymbol{\beta}}-{\boldsymbol{\beta}}\right){\mathbf{\Gamma}_1} \boldsymbol{x}\right|\geq Ms_2\left(\sqrt{\frac{1}{p}}+\sqrt{\frac{\log{p}}{n}}\right)\right|\tilde{{\boldsymbol{\beta}}}\right)\leq&\frac{\sqrt{\left(\tilde{{\boldsymbol{\beta}}}-{\boldsymbol{\beta}}\right)^T{\mathbf{\Sigma}_1}\left(\tilde{{\boldsymbol{\beta}}}-{\boldsymbol{\beta}}\right)}}{Ms_2\left(\sqrt{\frac{1}{p}}+\sqrt{\frac{\log{p}}{n}}\right)}\exp{\left\{\frac{-s_2{\frac{\log p}{n}}M^2}{2\left(\tilde{{\boldsymbol{\beta}}}-{\boldsymbol{\beta}}\right)^T{\mathbf{\Sigma}_1}\left(\tilde{{\boldsymbol{\beta}}}-{\boldsymbol{\beta}}\right)}\right\}}.
\end{aligned}
$$
Together with (\ref{eq: concentration of variance}),
$$
\begin{aligned}
    P\left(\left|\left(\tilde {\boldsymbol{\beta}}-{\boldsymbol{\beta}}\right){\mathbf{\Gamma}_1} \boldsymbol{x}\right|\geq Ms_2\left(\sqrt{\frac{1}{p}}+\sqrt{\frac{\log{p}}{n}}\right)\right)\lesssim \exp{-cM^2}+\frac{1}{\log p}.
\end{aligned}
$$
Since $\frac{1}{\sqrt{\boldsymbol{x}}}$ is a convex function, by Jenson's inequality, $E\left(\|\boldsymbol{x}\|^{-1}\right)\geq\sqrt{\frac{1}{E\left(\|\boldsymbol{x}\|^{2}\right)}}=\frac{1}{\sqrt{p}}$. As a result,
$$
\begin{aligned}
    \text{Var}(r\|\boldsymbol{x}\|^{-1})=&E\left(r^2\|\boldsymbol{x}\|^{-2}\right)-\left(E(r\|\boldsymbol{x}\|^{-1})\right)^{2}\\=&
    E\left(r^2\|\boldsymbol{x}\|^{-2}\right)-\left(E(r)\right)^2\left(E(\|\boldsymbol{x}\|^{-1})\right)^{2}\\\leq&
    \frac{p}{p-2}-\frac{(E(r))^2}{p}\\=&\frac{(p-2)\text{Var}(r)+2p}{p(p-2)}.
\end{aligned}
$$
By Assumption \ref{ass: scalar random variable} and Chebyshev's inequality, we have $$
\begin{aligned}
    P\left(\left|r\|\boldsymbol{x}\|^{-1}-E(r\|\boldsymbol{x}\|^{-1})\right|\geq t\right)\leq&\frac{(p-2)\text{Var}(r)+2p}{t^2p(p-2)}\\\lesssim&
    \frac{1}{t^2\sqrt{p}}.
\end{aligned}
$$
With $E\left(r\|\boldsymbol{x}\|^{-1}\right)\leq \sqrt{E\left(r^2\|\boldsymbol{x}\|^{-2}\right)}=\sqrt{\frac{p}{p-2}}$ , there exists constant $c_2$, so that $$
\begin{aligned}
    P\left(\left|\left(\tilde{{\boldsymbol{\beta}}}-{\boldsymbol{\beta}}\right)^T\left(\boldsymbol{z}-{\boldsymbol{\mu}_1}\right)\right|\geq Ms_2\left(\sqrt{\frac{1}{p}}+\sqrt{\frac{\log{p}}{n}}\right)\right)\lesssim
    \exp\{-c_2M^2\}+\frac{1}{\log p}+\frac{1}{\sqrt{p}}.
\end{aligned}
$$
To complete our analysis, we focus again on the classification error rate. Recall (\ref{eq: excess risk}) , we have $$
\begin{aligned}
R(G_{\widetilde{Q}})-R(G_Q)=&\frac{1}{2}\mathbb{E}_{\boldsymbol{z}\sim f_1}\left[(1-e^{\frac{Q_E(\boldsymbol{z})}{2}})\mathbbm{1}\left\{{Q(\boldsymbol{z})>0,Q(\boldsymbol{z})\leq Q(\boldsymbol{z})-\widetilde{Q}(\boldsymbol{z})}\right\}\right]\\\leq&
\mathbb{E}_{\boldsymbol{z}\sim f_1}\left[\mathbbm{1}\left\{{Q(\boldsymbol{z})>0,Q(\boldsymbol{z})\leq M(\boldsymbol{z})}\right\}\right]\\=&
P\left(0\leq Q(\boldsymbol{z})\leq M(\boldsymbol{z})\right)\\\leq&
P\left(0\leq Q(\boldsymbol{z})\leq M(s_1+s_2)\log n\left(\sqrt{\frac{1}{p}}+\sqrt{\frac{\log{p}}{n}}\right)\right)\\+&P\left(M(\boldsymbol{z})\geq M(s_1+s_2)\log n\left(\sqrt{\frac{1}{p}}+\sqrt{\frac{\log{p}}{n}}\right)\right)\\\leq&
P\left(0\leq Q(\boldsymbol{z})\leq M(s_1+s_2)\log n\left(\sqrt{\frac{1}{p}}+\sqrt{\frac{\log{p}}{n}}\right)\right)+
\left(\frac{1}{n}\right)^{C_1M}+C_2\frac{1}{\log p}+C_3\frac{s_1}{\sqrt{p}},
\end{aligned}
$$
where $C_i$ are some positive constant. Last, from the assumption $\sup_{|x| < \boldsymbol{\delta}} f_{Q,\theta}(x) < M_{2}$,
$$
\begin{aligned}
    \mathbb{E}\left[R(G_{\widetilde{Q}})-R(G_Q)\right]\lesssim &(s_1+s_2)\log n\left(\sqrt{\frac{1}{p}}+\sqrt{\frac{\log{p}}{n}}\right)+\frac{1}{\log p}+\frac{s_1}{\sqrt{p}}\\\lesssim&(s_1+s_2)\log n\left(\sqrt{\frac{1}{p}}+\sqrt{\frac{\log{p}}{n}}\right)+\frac{1}{\log p}.
\end{aligned}
$$
\end{proof}

\subsection{The proof of Theorem \ref{thm:in gausssetting}}
We first restate the convergence of the trace estimator in Gaussian setting.
\label{sec:proof of 3}
\begin{lemma}
   For multivariate normal distribution, $$P\left(\left|\frac{\widetilde{\text{tr}({\mathbf{\Sigma}_0})}}{\text{tr}({\mathbf{\Sigma}_0})}-1\right|>t\right)\leq2\exp\left\{-c\min\left\{npt^2/9,npt/3\right\}\right\}.$$ 
\end{lemma}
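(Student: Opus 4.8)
The plan is to exploit Gaussianity to rewrite the U-statistic $\widetilde{\text{tr}({\mathbf{\Sigma}_0})}$ as a weighted sum of \emph{independent} chi-square variables, for which exponential rather than merely polynomial tail bounds are available. First, since the kernel $(\boldsymbol{X}_i-\boldsymbol{X}_j)^T(\boldsymbol{X}_k-\boldsymbol{X}_j)$ depends only on the differences $\boldsymbol{X}_a-\boldsymbol{X}_b$, the estimator is invariant under a common shift, so I may assume $\boldsymbol{\mu}=\boldsymbol{0}$. Expanding the kernel into the four monomials and collecting the symmetric sums over distinct triples $(i,j,k)$ — counting, for each surviving monomial, the number of admissible choices of the free index — the triple sum collapses to
\begin{equation*}
\sum_{i\neq j\neq k}(\boldsymbol{X}_i-\boldsymbol{X}_j)^T(\boldsymbol{X}_k-\boldsymbol{X}_j)=(n-2)\Big(n\sum_{i=1}^n\|\boldsymbol{X}_i\|_2^2-\big\|\sum_{i=1}^n\boldsymbol{X}_i\big\|_2^2\Big),
\end{equation*}
so that $\widetilde{\text{tr}({\mathbf{\Sigma}_0})}=\frac{1}{n-1}\sum_{i=1}^n\|\boldsymbol{X}_i-\bar{\boldsymbol{X}}\|_2^2=\operatorname{tr}(\hat{\mathbf{\Sigma}})$ is exactly the trace of the usual unbiased sample covariance matrix.

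Next I would invoke the Gaussian-specific fact that $\sum_{i=1}^n(\boldsymbol{X}_i-\bar{\boldsymbol{X}})(\boldsymbol{X}_i-\bar{\boldsymbol{X}})^T$ follows a Wishart law $W_p(n-1,{\mathbf{\Sigma}_0})$, equivalently (via a Helmert rotation) equals in distribution $\sum_{l=1}^{n-1}\boldsymbol{W}_l\boldsymbol{W}_l^T$ with $\boldsymbol{W}_l\overset{i.i.d.}{\sim}N(\boldsymbol 0,{\mathbf{\Sigma}_0})$. Taking traces and diagonalizing ${\mathbf{\Sigma}_0}$ yields the representation
\begin{equation*}
\widetilde{\text{tr}({\mathbf{\Sigma}_0})}\overset{d}{=}\frac{1}{n-1}\sum_{l=1}^{n-1}\sum_{m=1}^{p}\lambda_m({\mathbf{\Sigma}_0})\,g_{lm}^2,\qquad g_{lm}\overset{i.i.d.}{\sim}N(0,1),
\end{equation*}
a weighted sum of $(n-1)p$ independent $\chi_1^2$ variables with mean $\operatorname{tr}({\mathbf{\Sigma}_0})$. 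This independent-sum structure, unavailable for general elliptical laws, is precisely what upgrades the Chebyshev bound of Lemma \ref{lem:consistency of trace} to an exponential one.

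It then remains to apply a Bernstein/Laurent--Massart inequality to $W:=\sum_{l,m}\lambda_m({\mathbf{\Sigma}_0})(g_{lm}^2-1)$, which gives
\begin{equation*}
P\big(|W|\geq u\big)\leq 2\exp\Big(-c\min\big\{u^2/\textstyle\sum_{l,m}\lambda_m^2,\ u/\max_m\lambda_m\big\}\Big),
\end{equation*}
with $\sum_{l,m}\lambda_m^2=(n-1)\operatorname{tr}({\mathbf{\Sigma}_0}^2)$ and $\max_m\lambda_m=\lambda_1({\mathbf{\Sigma}_0})$. Setting $u=t(n-1)\operatorname{tr}({\mathbf{\Sigma}_0})$ and invoking Assumption \ref{ass:bound of covariance} (bounded eigenvalues, whence $\operatorname{tr}({\mathbf{\Sigma}_0})\asymp p$, $\operatorname{tr}({\mathbf{\Sigma}_0}^2)\asymp p$, and $\lambda_1({\mathbf{\Sigma}_0})\leq M_1$) gives $u^2/\sum_{l,m}\lambda_m^2\asymp npt^2$ and $u/\max_m\lambda_m\asymp npt$, reproducing the claimed $\min\{npt^2/9,\,npt/3\}$ once the eigenvalue constants are absorbed into $c$. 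The only genuinely delicate step is the combinatorial reduction in the first paragraph; after the estimator is recognized as $\operatorname{tr}(\hat{\mathbf{\Sigma}})$, the Wishart representation and the Bernstein bound are routine, and the eigenvalue-ratio bookkeeping needs only the spectral bounds already assumed.
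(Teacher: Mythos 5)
Your proof is correct, and it takes a genuinely different route from the paper's. The paper keeps the U-statistic form, splits it as $\widetilde{\text{tr}({\mathbf{\Sigma}_0})}=\frac{1}{n}\sum_i\|\boldsymbol{X}_i\|_2^2-\frac{1}{n(n-1)}\sum_{i\neq k}\boldsymbol{X}_i^T\boldsymbol{X}_k$, handles the diagonal part by a Bernstein bound on $\sum_{i,k}\frac{\lambda_k}{n}(y_{ik}^2-1)$, handles the cross part by writing $\sum_{i\neq k}\boldsymbol{X}_i^T\boldsymbol{X}_k=\boldsymbol{Z}^T\boldsymbol{Z}-\sum_i\|\boldsymbol{X}_i\|_2^2$ with $\boldsymbol{Z}\sim N_p(\boldsymbol 0,n{\mathbf{\Sigma}_0})$ and applying Bernstein again, and then combines the pieces. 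You instead observe the algebraic (and distribution-free) identity that the triple sum collapses so that $\widetilde{\text{tr}({\mathbf{\Sigma}_0})}=\frac{1}{n-1}\sum_i\|\boldsymbol{X}_i-\bar{\boldsymbol{X}}\|_2^2=\operatorname{tr}(\hat{\mathbf{\Sigma}})$ — I verified the counting: the four monomials contribute $-(n-2)\sum_{i\neq k}\boldsymbol{X}_i^T\boldsymbol{X}_k+(n-1)(n-2)\sum_j\|\boldsymbol{X}_j\|_2^2=(n-2)\bigl(n\sum_i\|\boldsymbol{X}_i\|_2^2-\|\sum_i\boldsymbol{X}_i\|_2^2\bigr)$, which matches — and then invoke the Wishart/Helmert representation to get a single sum of $(n-1)p$ independent weighted $\chi_1^2$ variables, to which one Bernstein application suffices. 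What your approach buys is a one-shot argument with exact independence and no union bound over pieces; it also isolates cleanly where Gaussianity enters (only in the Wishart step), and as a byproduct identifies the U-statistic estimator with the ordinary sample-covariance trace. What the paper's approach buys is that it avoids citing the Wishart distribution and stays entirely within elementary quadratic-form manipulations, closely mirroring its elliptical-case Lemma so the two proofs read in parallel. Your final bookkeeping ($u=t(n-1)\operatorname{tr}({\mathbf{\Sigma}_0})$, $\sum_{l,m}\lambda_m^2=(n-1)\operatorname{tr}({\mathbf{\Sigma}_0}^2)\asymp np$, $\max_m\lambda_m\leq M_1$) is right, and absorbing the factors $9$ and $3$ into the unspecified constant $c$ is legitimate since the paper's exponent is only weaker than $\min\{npt^2,npt\}$. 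One incidental benefit of your derivation: it sidesteps a denominator slip in the paper's own decomposition (the cross term there should be normalized by $n(n-1)$, not $(n-1)(n-2)$), which is immaterial asymptotically but confirms the value of collapsing the sum exactly.
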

\begin{proof}
Without loss of generality, assume $\boldsymbol{\mu}=\boldsymbol{0}$ . Therefore, $$
\begin{aligned}
\widetilde{\text{tr}({\mathbf{\Sigma}_0})}=&\frac{\sum_{i\neq j\neq k}({{\boldsymbol{X}_i}}-{{\boldsymbol{X}_j}})^T({{\boldsymbol{X}_k}}--{{\boldsymbol{X}_j}})}{n(n-1)(n-2)}\\=&\frac{\sum_{i=1}^n{{\boldsymbol{X}_i}}^T{{\boldsymbol{X}_i}}}{n}-\frac{\sum_{i\neq k}{{\boldsymbol{X}_i}}^T{{\boldsymbol{X}_k}}}{(n-1)(n-2)}.
\end{aligned}
$$ For the first term, consider $$ \begin{aligned}
    \frac{\sum_{i=1}^n{{\boldsymbol{X}_i}}^T{{\boldsymbol{X}_i}}}{n\text{tr}({\mathbf{\Sigma}_0})}-1=& \frac{1}{n}\sum_{i=1}^n\left[{{\boldsymbol{Y}_i}}^T\left(\frac{{\mathbf{\Sigma}_0}}{\text{tr}({\mathbf{\Sigma}_0})}\right){{\boldsymbol{Y}_i}}-\text{tr}\left(\frac{{\mathbf{\Sigma}_0}}{\text{tr}({\mathbf{\Sigma}_0})}\right)\right]\\=&
    \sum_{i=1}^n\sum_{k=1}^p\left[\frac{\lambda_k}{n}(y_{ik}^2-1)\right],
\end{aligned}$$ where $\lambda_k$ are the eigenvalue of $\frac{{\mathbf{\Sigma}_0}}{\text{tr}({\mathbf{\Sigma}_0})}$ and $y_{ik}$ are independent random variables from standard normal distribution. 

By Assumption \ref{ass:bound of covariance} and Assumption \ref{ass: trace} , we have, $$\sqrt{\sum_{i=1}^n\sum_{k=1}^p\frac{\lambda_k^2}{n^2}}=\frac{\sqrt{\sum_{k=1}^p \lambda_k^2({\mathbf{\Sigma}_0})}}{\sqrt{n}\text{tr}({\mathbf{\Sigma}_0})}\asymp\frac{1}{\sqrt{np}},$$ $$\sup_{k,i}\left|\frac{\lambda_k}{n}\right|\lesssim \frac{M_1}{n\text{tr}({\mathbf{\Sigma}_0})}\lesssim\frac{1}{np}.$$ Thus by Bernstein inequality for subexponential random variables, we have for some positive constant $c$, $$P\left(\left|\frac{\sum_{i=1}^n{{\boldsymbol{X}_i}}^T{{\boldsymbol{X}_i}}}{n\text{tr}({\mathbf{\Sigma}_0})}-1\right|>t\right)\leq 2\exp\left\{-c\min\left\{npt^2, npt\right\}\right\}.$$

Let $\boldsymbol{Z}=\sum_i^n {{\boldsymbol{X}_i}}$ . Observe that $\sum_{i\neq k}{{\boldsymbol{X}_i}}^T{{\boldsymbol{X}_k}}=\boldsymbol{Z}^T\boldsymbol{Z}-\sum_{i=1}^n{{\boldsymbol{X}_i}}^T{{\boldsymbol{X}_i}}$ , with $\boldsymbol{Z}\sim N_p(\boldsymbol{0}, n{\mathbf{\Sigma}_0})$. We first consider the concentration of $\frac{\boldsymbol{Z}^T\boldsymbol{Z}}{(n-1)(n-2)\text{tr}({\mathbf{\Sigma}_0})}$.  $$\begin{aligned}
    \frac{1}{(n-1)(n-2)}\left(\frac{\boldsymbol{Z}^T\boldsymbol{Z}}{\text{tr}({\mathbf{\Sigma}_0})}-n\right)=&\frac{1}{(n-1)(n-2)}\left[\frac{\boldsymbol{Y}^T(n{\mathbf{\Sigma}_0})\boldsymbol{Y}}{\text{tr}({\mathbf{\Sigma}_0})}-\text{tr}\left(\frac{n{\mathbf{\Sigma}_0}}{\text{tr}({\mathbf{\Sigma}_0})}\right)\right]\\=&\frac{1}{(n-1)(n-2)}\left(\sum_{k=1}^p\lambda_k(y_k^2-1)\right),
\end{aligned}$$ where $\lambda_k$ be the eigenvalue of $\frac{n{\mathbf{\Sigma}_0}}{\text{tr}({\mathbf{\Sigma}_0})}$ , and $y_k$ are independent variable from $N(0,1)$. Similarly, we have $$\sqrt{\sum_{k=1}^p\frac{\lambda_k^2}{(n-1)^2(n-2)^2}}\lesssim \frac{1}{n\sqrt{p}}, \quad \sup_{k}\left|\frac{\lambda_k}{(n-1)(n-2)}\right|\lesssim \frac{1}{np}.$$ Therefore, by Bernstein inequality, we can obtain $$P\left(\left|\frac{1}{(n-1)(n-2)}\left(\frac{\boldsymbol{Z}^TZ}{\text{tr}({\mathbf{\Sigma}_0})}-n\right)\right|>t\right)\leq 2\exp\left\{-c\min\left\{n^2pt^2, npt\right\}\right\}.$$

The estimation of $P\left(\left|\frac{n}{(n-1)(n-2)}-\frac{\sum_{i=1}^n{{\boldsymbol{X}_i}}^T{{\boldsymbol{X}_i}}}{(n-1)(n-2)\text{tr}({\mathbf{\Sigma}_0})}\right|>t\right)$ follows the same process.

Combine the results above, we have $$\begin{aligned}
    P\left(\left|\frac{\widetilde{\text{tr}({\mathbf{\Sigma}_0})}}{\text{tr}({\mathbf{\Sigma}_0})}-1\right|>t\right)=&P\left(\left|\frac{\sum_{i=1}^nX^T_i{{\boldsymbol{X}_i}}}{n\text{tr}({\mathbf{\Sigma}_0})}-1-\frac{1}{(n-1)(n-2)}\left(\frac{\boldsymbol{Z}^TZ}{\text{tr}({\mathbf{\Sigma}_0})}-n+n-\sum_{i=1}^n{{\boldsymbol{X}_i}}^T{{\boldsymbol{X}_i}}\right)\right|>t\right)\\\leq&2\exp\left\{-c\min\left\{npt^2/9,npt/3\right\}\right\}.
\end{aligned}$$
\end{proof}

\begin{lemma}
    With probability over $1- O(\frac{1}{p})$, $$\|\mathbf{\tilde {\Sigma}}_0-{\mathbf{\Sigma}_0}\|_{max}\lesssim \sqrt{\frac{1}{p}}+\sqrt{\frac{\log{p}}{n}}.$$
\end{lemma}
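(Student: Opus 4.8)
The plan is to reprise the decomposition from the proof of Lemma~\ref{lem:diff pS Sigma} essentially verbatim, the only change being that the Chebyshev-based trace bound (Lemma~\ref{lem:consistency of trace}), which was the sole source of the weak $O(1/\log p)$ probability there, is now replaced by the exponential trace concentration just established in the Gaussian case. Since $\Lambda_0 = \frac{p}{\text{tr}(\mathbf{\Sigma}_0)}\mathbf{\Sigma}_0$ and $\tilde{\mathbf{\Sigma}}_0 = \widetilde{\text{tr}(\mathbf{\Sigma}_0)}\,\tilde{\mathbf{S}}$, I would write
\begin{equation*}
\tilde{\mathbf{\Sigma}}_0 - \mathbf{\Sigma}_0 = \frac{\text{tr}(\mathbf{\Sigma}_0)}{p}\left[\left(\frac{\widetilde{\text{tr}(\mathbf{\Sigma}_0)}}{\text{tr}(\mathbf{\Sigma}_0)} - 1\right)p\tilde{\mathbf{S}} + (p\tilde{\mathbf{S}} - p\mathbf{S}) + (p\mathbf{S} - \Lambda_0)\right],
\end{equation*}
apply $\|\cdot\|_{\text{max}}$ together with the triangle inequality, and use $\text{tr}(\mathbf{\Sigma}_0)/p = O(1)$ from Assumption~\ref{ass: trace}.

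I would then bound the three terms separately. The deterministic term $\|p\mathbf{S} - \Lambda_0\|_{\text{max}} \lesssim 1/\sqrt{p}$ is immediate from Lemma~\ref{lem:diffS}. For the sampling term, apply Lemma~\ref{lem:diffSandmu} with $\alpha = 1/p$: since $\frac{8\lambda_1(\mathbf{\Sigma}_0)}{p\lambda_p(\mathbf{\Sigma}_0)} + \|\mathbf{S}\|_{\text{max}} \lesssim 1/p$ (Lemma~\ref{lem:Smax} and Assumption~\ref{ass:bound of covariance}) and $\log(1/\sqrt{\alpha/2}) \asymp \log p$, this gives $\|p\tilde{\mathbf{S}} - p\mathbf{S}\|_{\text{max}} \lesssim \sqrt{\log p / n}$ with probability at least $1 - 1/p$, and in particular $\|p\tilde{\mathbf{S}}\|_{\text{max}} = O(1)$ on that event. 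For the trace term, set $t = \sqrt{\log p / n}$ in the Gaussian concentration bound stated above; then $npt^2 = p\log p$ and $npt = \sqrt{n}\,p\sqrt{\log p}$, so the failure probability is at most $2\exp\{-c\min(p\log p,\,\sqrt{n}\,p\sqrt{\log p})/9\}$, which is super-polynomially small in $p$ and hence $\ll 1/p$. Multiplying by the bounded factor $\|p\tilde{\mathbf{S}}\|_{\text{max}}$ yields a contribution $\lesssim \sqrt{\log p / n}$ on this event.

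Combining the three bounds gives $\|\tilde{\mathbf{\Sigma}}_0 - \mathbf{\Sigma}_0\|_{\text{max}} \lesssim \sqrt{1/p} + \sqrt{\log p / n}$, and a union bound over the two stochastic events shows this holds with probability at least $1 - O(1/p)$. I anticipate no serious obstacle here: the argument is structurally identical to Lemma~\ref{lem:diff pS Sigma}, and the only point requiring care is the bookkeeping on the probabilities — namely verifying that the Gaussian trace bound's failure probability is negligible compared with $1/p$, so that the spatial-sign term of Lemma~\ref{lem:diffSandmu}, tuned via $\alpha = 1/p$, becomes the binding $O(1/p)$ contribution that governs the final probability.
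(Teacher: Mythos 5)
Your proposal is correct and follows essentially the same route as the paper: the paper's own proof simply sets $t=\sqrt{\log p/n}$ in the Gaussian trace concentration bound to get failure probability $\lesssim 1/p$, and then repeats the decomposition and triangle-inequality argument of Lemma~\ref{lem:diff pS Sigma} verbatim. Your write-up just makes explicit the details the paper leaves implicit (the choice $\alpha = 1/p$ in Lemma~\ref{lem:diffSandmu}, the boundedness of $\|p\tilde{\mathbf{S}}\|_{\text{max}}$ on the good event, and the final union bound), all of which are handled correctly.
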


\begin{proof}
    Let $t = \sqrt{\frac{\log p}{n}}$, then $$P\left(\left|\frac{\widetilde{\text{tr}({\mathbf{\Sigma}_0})}}{\text{tr}({\mathbf{\Sigma}_0})}-1\right|>\sqrt{\frac{\log p}{n}}\right)\lesssim \frac{1}{p}.$$
    Follow the same process in the proof of Lemma \ref{lem:diff pS Sigma}, we obtain $$
    \begin{aligned}
        \|\mathbf{\tilde {\Sigma}}_0-{\mathbf{\Sigma}_0}\|_{\text{max}}\lesssim&
         \sqrt{\frac{\log p}{n}}+\sqrt{\frac{1}{p}}.
    \end{aligned}
    $$
\end{proof}

The subsequent proof follows essentially the same procedure as in Section \ref{subsec:proof of 2}, and we present only the key steps here. For the estimators $\tilde{\mathbf{D}}, \tilde{\boldsymbol{\beta}}$ , \begin{align}
\label{eq:rateDnormal}
    \|{\mathbf{D}}-\tilde{\mathbf{D}}\|_F\lesssim s_1\left(\sqrt{\frac{\log p}{n}}+\sqrt{\frac{1}{p}}\right),\\
\label{eq:ratebetanormal}
    \|{\boldsymbol{\beta}}-\tilde{{\boldsymbol{\beta}}}\|_2\lesssim s_2\left(\sqrt{\frac{\log p}{n}}+\sqrt{\frac{1}{p}}\right),
\end{align} with a probability of $1-O\left(\frac{1}{p}\right)$. 

Next, we consider the terms in $M(\boldsymbol z)$. For the constant term, we have $$\left|{\boldsymbol{\beta}}^{\top}({\boldsymbol{\mu}_2}-{\boldsymbol{\mu}_1})-\tilde{{\boldsymbol{\beta}}}^{\top}(\tilde{\boldsymbol{\mu}}_{2}-\tilde{\boldsymbol{\mu}}_{1})\right|\lesssim s_2\left(\sqrt{\frac{\log p}{n}}+\sqrt{\frac{1}{p}}\right),$$ with a probability over $1-O\left(\frac{1}{p}\right)$. With respect to term $\log|\tilde{\mathbf{{D}}}{\tilde{\mathbf{\Sigma}}}_1 + \mathbf{I}_p|$, $$
\begin{aligned}
&P\left(|\log|\tilde{\mathbf{{D}}}{\tilde{\mathbf{\Sigma}}}_1 + \mathbf{I}_p| - \log|{{\mathbf{D}}}\mathbf{{\Sigma}}_1 + \mathbf{I}_p|-\text{tr}(\tilde{\mathbf{{D}}}{\mathbf{\Sigma}_1}-{\mathbf{D}}{\mathbf{\Sigma}_1})|\lesssim s_1\left(\sqrt{\frac{\log p}{n}}+\sqrt{\frac{1}{p}}\right)\right)\\&\geq1-O\left(\frac{1}{p}\right).
\end{aligned}
$$ Concerning quadratic term involving $\boldsymbol z$, follow the same process in (\ref{eq:quadratic term}), we have $$
\begin{aligned}
    &P\left(\left|(\boldsymbol{z} - {\boldsymbol{\mu}_1})^T {\mathbf{D}}(\boldsymbol{z} - {\boldsymbol{\mu}_1}) - (\boldsymbol{z} - {\boldsymbol{\mu}_1})^T \tilde{\mathbf{{D}}}(\boldsymbol{z} - {\boldsymbol{\mu}_1}) - (\operatorname{tr}({\mathbf{D}}{\mathbf{\Sigma}_1} - \tilde{\mathbf{{D}}}{\mathbf{\Sigma}_1}))\right|\geq Ms_1\left(\sqrt{\frac{\log p}{n}}+\sqrt{\frac{1}{p}}\right)\right)\\&=P\left(\left|\boldsymbol{x}^T {\mathbf{\Gamma}_1^T(\mathbf{D}-\tilde{\mathbf{D}})\mathbf{\Gamma}_1}x - (\operatorname{tr}({\mathbf{\Gamma}_1^T(\mathbf{D}-\tilde{\mathbf{D}})\mathbf{\Gamma}_1})\right|\geq Ms_1\left(\sqrt{\frac{\log p}{n}}+\sqrt{\frac{1}{p}}\right)\right)\\&\lesssim 2\exp\left\{-c\min\{M^2, M\}\right\}+\frac{1}{p}.
\end{aligned}
$$ Regarding linear term involving $z$, as $(\mathbb{E}(r))^2 \geq \left(\mathbb{E}(r^{-2}\right)^{-1}=p-2$, we can obtain $$
\begin{aligned}
    P\left(\left|\left(\tilde{{\boldsymbol{\beta}}}-{\boldsymbol{\beta}}\right)^T\left(\boldsymbol{z}-{\boldsymbol{\mu}_1}\right)\right|\geq Ms_2\left(\sqrt{\frac{\log p}{n}}+\sqrt{\frac{1}{p}}\right)\right)\lesssim&
    \exp\{-c_2M^2\}+\frac{1}{p}+\frac{p}{p-2}-\frac{(E(r))^2}{p}\\&\lesssim \exp\{-c_2M^2\}+\frac{1}{p}.
\end{aligned}
$$

To reach the conclusion, observing that for Gaussian distribution $Q_E(\boldsymbol z)=Q(\boldsymbol z)$, we consider the convergence of misclassification error. $$
\begin{aligned}
&R(G_{\widetilde{Q}})-R(G_Q)\\&=\frac{1}{2}\mathbb{E}_{\boldsymbol{z}\sim f_1}\left[(1-e^{\frac{Q_E(\boldsymbol{z})}{2}})\mathbbm{1}\left\{{Q(\boldsymbol{z})>0,Q(\boldsymbol{z})\leq Q(\boldsymbol{z})-\widetilde{Q}(\boldsymbol{z})}\right\}\right]\\&= \frac{1}{2} \mathbb{E}_{\boldsymbol z \sim N_{p}(\boldsymbol{\mu}_{1}, \mathbf{\Sigma}_{1})} \left[ (1 - e^{-Q(\boldsymbol z)}) \mathbbm{1}\{0 < Q(\boldsymbol z) \leq M(\boldsymbol z)\} \times \mathbbm{1} \left\{ M(\boldsymbol z) < M (s_{1} + s_{2})\log n \left(\sqrt{\frac{\log p}{n}}+\sqrt{\frac{1}{p}}\right) \right\} \right] \\
&\quad + \frac{1}{2} \mathbb{E}_{\boldsymbol z \sim N_{p}(\boldsymbol{\mu}_{1}, \mathbf{\Sigma}_{1})} \left[ (1 - e^{-Q(\boldsymbol z)}) \mathbbm{1}\{0 < Q(\boldsymbol z) \leq M(\boldsymbol z)\}   \times 1 \left\{ M(\boldsymbol z) \geq M (s_{1} + s_{2})\log n \left(\sqrt{\frac{\log p}{n}}+\sqrt{\frac{1}{p}}\right) \right\} \right]\\
&\leq \frac{1}{2} \mathbb{E}_{\boldsymbol z \sim N_{p}(\boldsymbol{\mu}_{1}, \mathbf{\Sigma}_{1})} \left[ (1 - e^{-Q(\boldsymbol z)}) \mathbbm{1}\{0 < Q(\boldsymbol z) \leq M(\boldsymbol z)\}  \times \mathbbm{1} \left\{ M(\boldsymbol z) < M \log n(s_{1} + s_{2}) \left(\sqrt{\frac{\log p}{n}}+\sqrt{\frac{1}{p}}\right) \right\} \right]\\
&\quad + P_{\boldsymbol z \sim N_{p}(\boldsymbol{\mu}_{1}, \mathbf{\Sigma}_{1})} \left( M(\boldsymbol z) \geq M \log n(s_{1} + s_{2}) \left(\sqrt{\frac{\log p}{n}}+\sqrt{\frac{1}{p}}\right) \right) .
\end{aligned}
$$

Combine the results above, we can reach the conclusion that $$
\begin{aligned}
    \mathbb{E}\left[R(G_{\widetilde{Q}})-R(G_Q)\right]\lesssim& \left[\log n(s_{1} + s_{2}) \left(\sqrt{\frac{\log p}{n}}+\sqrt{\frac{1}{p}}\right)\right]^2+\frac{1}{p}\\\asymp& (s_1+s_2)^2\log^2n\left(\sqrt{\frac{\log p}{n}}+\sqrt{\frac{1}{p}}\right)^2.
\end{aligned}
$$

\phantomsection
\addcontentsline{toc}{section}{\refname}
\bibliography{reference}

\end{document}